\newtheorem{thm}{Theorem}[section]
\newtheorem{lemma}[thm]{Lemma}
\newtheorem{prop}[thm]{Proposition}
\newtheorem{cor}[thm]{Corollary}
\newtheorem{exampl}[thm]{Example}
\newtheorem{remark}[thm]{Remark}
\DeclareMathOperator{\ind}{ind}
\DeclareMathOperator{\ord}{ord}
\begin{document}
\title{On the Existence and Nonexistence of Splitter Sets}
\author{
	\IEEEauthorblockN{Zhiyu Yuan\IEEEauthorrefmark{1}, Rongquan Feng\IEEEauthorrefmark{2} and Gennian Ge\IEEEauthorrefmark{3}}

	\IEEEauthorblockA{\IEEEauthorrefmark{1}School of Mathematical Sciences, Peking University, Beijing 100871, China. Email: yzhiyu\_pku@pku.edu.cn}

	\IEEEauthorblockA{\IEEEauthorrefmark{2}School of Mathematical Sciences, Peking University, Beijing 100871, China. Email: fengrq@math.pku.edu.cn}

	\IEEEauthorblockA{\IEEEauthorrefmark{3}School of Mathematical Sciences, Capital Normal University, Beijing 100048, China. Email: gnge@zju.edu.cn}

}
\maketitle{}
\begin{abstract}
	In this paper, the existence of perfect and quasi-perfect splitter sets in finite abelian groups is studied, motivated by their application in coding theory for flash memory storage. For perfect splitter sets we view them as splittings of $\mathbb{Z}_n$, and using cyclotomic polynomials we derive a general condition for the existence of such splittings under certain circumstances. We further establish a relation between $B[-k, k](q)$ and $B[-(k-1), k+1](q)$ splitter sets, and give a necessary and sufficient condition for the existence of perfect $B[-1, 5](q)$ splitter sets. Finally, two nonexistence results for quasi-perfect splitter sets are presented.
\end{abstract}
\section{Introduction}
Recent years have witnessed the development of flash memory, which highlights the importance of studying codes for flash memory storage. For this application, a new model of limited magnitude error-correcting was proposed (\cite{cassuto2010codes},\cite{klove2011systematic}). In this model, every symbol is a number in $\mathbb{Z}_q=\{0,1,\ldots,q-1\}$, the set of integers modulo $q$, where $q$ is a positive integer, and errors are considered to be bounded changes on certain coordinates. To be specific, in this model, messages are encoded into words in $\mathbb{Z}_q^n$, and during transmission of a codeword $\mathbf{c}=(c_1,\ldots,c_n)$, $c_i$ can by changed to $c_i+\lambda$, where $\lambda\in[-k_1,k_2]:=\{-k_1,-k_1+1,\ldots,k_2\}$ is an integer for some nonnegative integer $k_1$ and positive integer $k_2$. In this paper, we consider only \emph{single} limited magnitude errors, i.e., such change only happens on at most one coordinate of a codeword.

We define the fundamental error ball $\mathcal{E}(k_1;k_2)$ as
\[\mathcal{E}(k_1;k_2)=\bigcup_{i=1}^n (\{0\}^{i-1}\times [-k_1,k_2]\times \{0\}^{n-i}),\]
that is, for every element of $\mathcal{E}(k_1;k_2)$, all its coordinates are zero, except for possibly one nonzero coordinate that is in $[-k_1,k_2]$. Then our goal is to construct a code $C\subset\mathbb{Z}_q^n$ such that for all $\mathbf{x}\in C$, $\mathbf{x}+\mathcal{E}(k_1;k_2)$ are disjoint. It is equivalent to saying that for all $\mathbf{e}\in \mathcal{E}(k_1;k_2)$, $\mathbf{e}+C$ are disjoint. We call such codes \emph{single limited magnitude error-correcting codes}.

For an element $s$ in an abelian group $(G,+)$ there is a homomorphism $\phi_s\colon \mathbb{Z}\to G$ such that $\phi_s(1)=s$. By abuse of notations, we will write $\lambda s$ for $\phi_s(\lambda)$ in this paper.

We assume that the code we are considering, $C$, is a subgroup of $(\mathbb{Z}_q^n,+)$. Then we have the canonical map
\[\mathbb{Z}_q^n \to \mathbb{Z}_q^n/C=:G,\]
where maps $\mathbf{x}$ to $\mathbf{x}+C$.
Let $s_i$ be the image of $\mathbf{e}_i=(\underbrace{0,\ldots,0}_{i-1},1,0,\ldots,0)$, then $C$ is a limited magnitude error-correcting code if and only if $\lambda s_i$ are all different, where $\lambda\in[-k_1,k_2]^*=[-k_1,k_2]\setminus\{0\}$, $i\in [1,n]$.

From now on we suppose that $G=\mathbb{Z}_N$ is a cyclic group. If $\lambda s_i, \lambda\in[-k_1,k_2]^*$ are all different, we say that $B=\{s_i:i=1,\ldots,n\}$ is a \emph{$B[-k_1,k_2](N)$ splitter set}.
It is easy to see that we can construct single limited error correcting codes of block length $n$ through splitter sets of size $n$.
Obviously, $|B|\le\frac{N-1}{k_1+k_2}$. If $|B|$ obtains its maximum $\left\lfloor\frac{N-1}{k_1+k_2}\right\rfloor$, we call it \emph{perfect} if $(k_1+k_2)\mid (N-1)$ or \emph{quasi-perfect} if $(k_1+k_2)\nmid (N-1)$. We call a $B[-k_1,k_2](N)$ splitter set \emph{nonsingular} if $\gcd(N,k_1!k_2!)=1$, or \emph{singular} otherwise.

Perfect splitter sets are relevant to a classical topic in algebra and combinatorics called \emph{group splitting}, see \cite{hickerson1983splittings, stein1967factoring,stein1972symmetric,
	szabo2009factoring}. Let $M$ be a finite set of nonzero integers and $(G,+)$ a finite abelian group.
We say that $G$ has a splitting $G\setminus\{0\}=MS$ if there exists a subset $S\subset G$ such that every nonzero element of $G$ can be uniquely represented as $ms, m\in M, s\in S$. In this case, we say $M$ is the \emph{multiplier set} of the splitting and $S$ its \emph{splitting set}. Clearly, perfect $B[-k_1,k_2](N)$ splitter sets are exactly splitting sets of splittings of $\mathbb{Z}_N$ with multiplier set $[-k_1,k_2]^*$. It is known that to study nonsingular perfect splitter sets $B[-k_1,k_2](N)$, we only need to consider the case when $N$ is a prime $q$ (\cite{ye2020some}).

Since the concept of splitter sets was defined in \cite{klove2011some}, many researchers have worked on them, for example, \cite{klove2012codes}, \cite{yuan2020existence} and \cite{zhang2018nonexistence}. Most recent results on the necessary and sufficient conditions for the existence of nonsingular perfect splitter sets are concluded in \cite{ye2020some}. These preceding results are mainly obtained by analyzing the group $\mathbb{Z}_q^\times$ and the structure of the splitter set $B$ case by case.

In this paper, we use cyclotomic polynomials to derive a necessary and sufficient condition for the existence of group splitting in certain cases, which generalizes a theorem of \cite{zhang2018nonexistence}. Then we show that this theorem covers known results and can produce some new existence theorems. We also show a relation between perfect $B[-k,k](q)$ and $B[-(k-1),k+1](q)$ splitter sets, and guided by a similar idea, we get the necessary and sufficient condition for the existence of perfect $B[-1,5](q)$ splitter sets. At last, we provide two nonexistence results on quasi-perfect splitter sets.
\section{Preliminary}
In this section, we give some terminology and lemmas used throughout this paper.
For any integer $a$ and any cyclic group $\mathbb{Z}_n$, we will always identify $a$ with its image $\phi_1(a)$ in $\mathbb{Z}_n$.
\begin{lemma}
	\label{lem-in}
	For every subgroup $H$ of a cyclic group $\mathbb{Z}_n$, $a\in H$ if and only if $[\mathbb{Z}_n:H]\mid a$.
\end{lemma}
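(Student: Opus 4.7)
The plan is to reduce the statement to the standard classification of subgroups of $\mathbb{Z}_n$ and then to a divisibility computation modulo $n$. First I would recall that every subgroup $H$ of the cyclic group $\mathbb{Z}_n$ is itself cyclic, generated by a unique divisor $d$ of $n$, so that $H = d\mathbb{Z}_n = \{0, d, 2d, \ldots, (n/d - 1)d\}$ and $[\mathbb{Z}_n : H] = d$. Thus the claim reduces to: the image $\phi_1(a) \in \mathbb{Z}_n$ lies in $d\mathbb{Z}_n$ if and only if $d \mid a$ in $\mathbb{Z}$.

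For the ``only if'' direction, suppose $\phi_1(a) \in d\mathbb{Z}_n$. Then there exists an integer $k$ with $a \equiv kd \pmod{n}$, i.e.\ $a = kd + mn$ for some $m \in \mathbb{Z}$. Since $d \mid n$, this yields $d \mid a$. For the ``if'' direction, suppose $d \mid a$, say $a = kd$ in $\mathbb{Z}$. Then $\phi_1(a) = k \cdot \phi_1(d) \in d\mathbb{Z}_n = H$, by the convention that we identify integers with their images and that $\phi_1$ is a homomorphism.

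There is no real obstacle here: the only point requiring a moment's care is the identification of $a \in \mathbb{Z}$ with $\phi_1(a) \in \mathbb{Z}_n$ (as fixed in the paragraph preceding the lemma), which keeps the statement ``$d \mid a$'' unambiguous -- divisibility is being read in $\mathbb{Z}$, not in $\mathbb{Z}_n$. Once this convention is made explicit, both implications follow directly from the description of $H$ as the set of multiples of $d$ modulo $n$.
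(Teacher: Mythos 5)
Your proof is correct and takes essentially the same route as the paper: both reduce to the classification of subgroups of $\mathbb{Z}_n$ by divisors of $n$ (the paper via uniqueness of the subgroup of a given index, you via the unique divisor generating $H$), and then conclude with the same divisibility check using $d \mid n$. Your version just makes the final step, and the convention about reading $d \mid a$ in $\mathbb{Z}$, slightly more explicit.
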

\begin{proof}
	Let $i=[\mathbb{Z}_n:H]$, then it is clear that $i\mid n$, hence
	$$
		H':=\{0,i,2i,\ldots,n-i\}
	$$
	is a subgroup of $\mathbb{Z}_n$ of order $n \over i$ and of index $i$.
	Since $\mathbb{Z}_n$ is cyclic, for every divisor $d$ of $n$ there is a unique subgroup of $\mathbb{Z}_n$ of index $d$, so $H=H'$, and hence $a\in H=H'$ if and only if $i\mid a$.
\end{proof}
For a prime $p$, we fix one of its primitive roots $g$, and denote by $\ind_g(n)$ the minimal nonnegative integer $e$ such that $g^e\equiv n\pmod p$. The multiplicative order of $n\in\mathbb{Z}_p^\times$ is denoted by $\ord_p(n)$, and we have $|\langle n\rangle|=\ord_p(n)$, where $\langle n\rangle$ is the multiplicative subgroup of $\mathbb{Z}_p^\times$ generated by $n$. Let $v_p(n)$ be the $p$-adic valuation of $n$, i.e., $v_p(n)=\max\{e : p^e\mid n\}$.

\begin{lemma}
	Let $p$ be an odd prime, $e\mid (p-1)$, then $a$ is an $e$-th power residue modulo $p$ if and only if
	$e\mid [\mathbb{Z}_p^\times:\langle a\rangle]$.
\end{lemma}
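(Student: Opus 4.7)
The plan is to translate both sides of the claimed equivalence into a single condition on the discrete logarithm $\ind_g(a)$, where $g$ is the fixed primitive root modulo $p$, and then observe that they match.

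First I would handle the ``$e$-th power residue'' side. Since $\mathbb{Z}_p^\times$ is cyclic of order $p-1$ with generator $g$, the image of the $e$-th power map is the subgroup $\langle g^e\rangle=\{g^{ek}:k\in\mathbb{Z}\}$. Under the assumption $e\mid(p-1)$ this subgroup has order $(p-1)/e$, hence index $e$ in $\mathbb{Z}_p^\times$. By the multiplicative analogue of Lemma~\ref{lem-in} (a cyclic group has a unique subgroup of each order, characterized by the divisibility of the ``index'' of a generator), $a=g^{\ind_g(a)}$ lies in $\langle g^e\rangle$ if and only if $e\mid\ind_g(a)$. So the left-hand condition is equivalent to $e\mid\ind_g(a)$.

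Next I would rewrite the right-hand condition. A direct computation in the cyclic group $\mathbb{Z}_p^\times$ gives
\[
|\langle a\rangle|=\ord_p(a)=\frac{p-1}{\gcd(p-1,\ind_g(a))},
\]
so $[\mathbb{Z}_p^\times:\langle a\rangle]=\gcd(p-1,\ind_g(a))$. Because $e$ is assumed to divide $p-1$, the divisibility $e\mid\gcd(p-1,\ind_g(a))$ is equivalent to $e\mid\ind_g(a)$. Hence the right-hand condition is also equivalent to $e\mid\ind_g(a)$, and the two conditions coincide.

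The argument has no real obstacle; the only thing to be careful about is invoking Lemma~\ref{lem-in} correctly in the multiplicative setting (or simply re-deriving the membership criterion from the discrete-log description, which is immediate), and using the hypothesis $e\mid(p-1)$ to collapse $\gcd(p-1,\cdot)$ into a clean divisibility statement.
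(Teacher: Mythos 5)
Your proof is correct. It differs from the paper's mainly in organization, but the difference is worth recording: you reduce both sides of the equivalence to the single condition $e\mid\ind_g(a)$, using two structural facts --- that the $e$-th power residues form the image $\langle g^e\rangle$, which under $e\mid(p-1)$ is the unique subgroup of index $e$ (so membership is exactly $e\mid\ind_g(a)$), and that $[\mathbb{Z}_p^\times:\langle a\rangle]=\gcd(p-1,\ind_g(a))$ via the order formula $\ord_p(g^j)=(p-1)/\gcd(p-1,j)$, which together with $e\mid(p-1)$ collapses the right-hand condition to the same divisibility. The paper never computes the index explicitly; it proves the two implications separately from Lemma~\ref{lem-in}: for one direction it uses $[\mathbb{Z}_p^\times:\langle a\rangle]\mid\ind_g(a)$ to get $e\mid\ind_g(a)$ and then exhibits the explicit root $g^{\ind_g(a)/e}$, and for the converse it writes $g^{[\mathbb{Z}_p^\times:\langle a\rangle]}$ as a power of $a=b^e$ and reads off $e\mid[\mathbb{Z}_p^\times:\langle a\rangle]$ from a congruence modulo $p-1$. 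Your packaging is more symmetric and slightly shorter, while the paper's argument produces an explicit $e$-th root; both hinge on the same cyclic-group divisibility criterion, and you invoke the hypothesis $e\mid(p-1)$ exactly where it is needed (to make $\langle g^e\rangle$ have index $e$ and to pass from $e\mid\gcd(p-1,\ind_g(a))$ to $e\mid\ind_g(a)$).
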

\begin{proof}
	We know that $\mathbb{Z}_p^\times$ is a cyclic group of order $p-1$. Fix a primitive root $g$ modulo $p$, then $$\mathbb{Z}_p^\times=\{g^i:i\in[0,p-2]\}.$$ By the lemma above, $a=g^{\ind_g(a)}\in\langle a\rangle$ implies $[\mathbb{Z}_p^\times:\langle a\rangle]\mid\ind_g(a)$. If $e\mid [\mathbb{Z}_p^\times:\langle a\rangle]$, then $e\mid \ind_g(a)$ and
	$g^{\ind_g(a)/e}$ is an $e$-th root of $a$ modulo $p$.
	On the other hand, suppose that $a\equiv b^e\pmod{p}$. By the lemma above, $g^{[\mathbb{Z}_p^\times:\langle a\rangle]}\in\langle a\rangle$, hence there exists $i\in\mathbb{Z}$ such that $g^{[\mathbb{Z}_p^\times:\langle a\rangle]}=a^i$, and
	$[\mathbb{Z}_p^\times:\langle a\rangle]\equiv \ind_g(a)i\equiv
		\ind_g(b)ei\pmod{(p-1)}$. Since $e\mid(p-1)$, $e\mid [\mathbb{Z}_p^\times:\langle a\rangle]$.
\end{proof}

For a finite abelian group $(G,+)$ and its subsets $A_1,\ldots,A_n$, we define
$$
	A_1+A_2+\cdots+A_n=\{a_1+a_2+\cdots+a_n:a_i\in A_i\}.
$$
If sums $a_1+\cdots+a_n, a_1\in A_1,\ldots,a_n\in A_n$ are distinct and $G=A_1+\cdots+A_n$, we say $G=A_1+\cdots+A_n$ is a \emph{factorization} of $G$, and every $A_i$ is called its \emph{direct factor}. If $n=2$, $A_2$ will be called a \emph{complementer factor} to $A_1$. We have the following obvious result:

\begin{lemma}
	Let $G$ be an abelian group and $B$ be a subgroup of $G$. Then $G=A+B$ is a group factorization if and only if $A$ is a set of representatives of cosets of $G/B$.
\end{lemma}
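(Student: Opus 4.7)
The plan is to unpack the definitions on each side and check that they match up directly; this is really a definition-chase, which is presumably why the authors flag it as ``obvious.''

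For the forward direction, I would assume $G=A+B$ is a factorization in the sense just defined, so every $g\in G$ is expressible as $a+b$ with $a\in A$, $b\in B$, and this expression is unique. Fix a coset $g+B\in G/B$. Writing $g=a+b$ shows $g\in a+B$, so $A$ meets every coset. If two elements $a,a'\in A$ lay in the same coset, then $a-a'\in B$, so $a+0=a'+(a-a')$ would be two distinct decompositions of $a$ in $A+B$, contradicting uniqueness (unless $a=a'$). Hence $A$ picks out exactly one representative from each coset of $B$.

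For the reverse direction, I would assume $A$ is a complete set of coset representatives for $G/B$. Given $g\in G$, its coset $g+B$ contains exactly one $a\in A$, so $g-a\in B$; setting $b:=g-a$ gives $g=a+b$, proving $G=A+B$. For uniqueness, if $g=a+b=a'+b'$ with $a,a'\in A$ and $b,b'\in B$, then $a-a'=b'-b\in B$, so $a$ and $a'$ lie in the same coset; since $A$ has one representative per coset, $a=a'$, and then $b=b'$.

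Neither direction involves a real obstacle; the only subtle point to state cleanly is the uniqueness clause built into the word ``factorization,'' which is the pivot that turns ``$A$ meets every coset'' into ``$A$ meets every coset exactly once.'' I would keep the writeup to a few lines matching the two paragraphs above.
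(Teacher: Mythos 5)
Your argument is correct and complete; both directions are handled properly, including the use of $0\in B$ in the uniqueness step of the forward direction. The paper itself states this lemma without proof (flagging it as obvious), and your definition-chase is exactly the standard argument the authors had in mind.
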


For any subset $B$ of an abelian group $(G,+)$, an element $g\in G$ such that $g+B=B$ is called a \emph{period} of $B$.
Let $\pi(B)$ be the set of all periods of $B$, one can verify that $\pi(B)$ is a subgroup of $G$. We call it the \emph{stable subgroup} of $B$. For every element $b\in B$, $b+\pi(B)\subset B$, so $B$ is the union (and in fact, disjoint union) of some cosets of $\pi(B)$ in $G$.

The $n$-th cyclotomic polynomial $\Phi_n(x)$
is defined to be the monic minimal polynomial of an $n$-th primitive root of unity over $\mathbb{Q}$, and hence is irreducible in $\mathbb{Q}[x]$. In fact, $\Phi_n(x)\in\mathbb{Z}[x]$. For $n\ne m$, $\gcd(\Phi_n(x),\Phi_m(x))=1$ since they have no common zeros in $\mathbb{C}$.

\begin{lemma}[{\cite[\S 46]{nagell2021introduction}}] We have
	$$\Phi_n(1)=\begin{cases}p,&\text{if $n$ is a power of a prime $p$,}\\1,&\text{otherwise.}\end{cases}$$
\end{lemma}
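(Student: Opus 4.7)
The plan is to use the factorization $x^n-1=\prod_{d\mid n}\Phi_d(x)$, which comes from partitioning the $n$-th roots of unity according to their exact order. Dividing both sides by $x-1=\Phi_1(x)$ and using the geometric sum formula gives
$$
1+x+x^2+\cdots+x^{n-1}=\prod_{\substack{d\mid n\\ d>1}}\Phi_d(x).
$$
Evaluating at $x=1$ yields the master identity
$$
n=\prod_{\substack{d\mid n\\ d>1}}\Phi_d(1),
$$
which I will exploit inductively on $n$.

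First I would handle the prime-power case $n=p^k$ by induction on $k$. The base case $k=1$ is immediate since $\Phi_p(x)=1+x+\cdots+x^{p-1}$ evaluates to $p$ at $x=1$. For the inductive step, the proper divisors of $p^k$ greater than $1$ are exactly $p,p^2,\ldots,p^{k-1}$, and by the inductive hypothesis each of them contributes $\Phi_{p^j}(1)=p$. The master identity then gives $p^k=p^{k-1}\Phi_{p^k}(1)$, hence $\Phi_{p^k}(1)=p$.

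Next, for $n$ with at least two distinct prime factors, I would perform strong induction on $n$. By the induction hypothesis, for every proper divisor $d>1$ of $n$, $\Phi_d(1)$ equals $p$ if $d$ is a power of some prime $p$ and equals $1$ otherwise. Grouping the proper divisors $d>1$ of $n$ by prime-power shape, each prime $p\mid n$ contributes exactly the exponents $1,2,\ldots,v_p(n)$, so the product telescopes to
$$
\prod_{\substack{d\mid n\\ 1<d<n}}\Phi_d(1)=\prod_{p\mid n}p^{v_p(n)}=n.
$$
Comparing with the master identity $n=\Phi_n(1)\cdot n$ forces $\Phi_n(1)=1$, completing the induction.

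The only subtle point is making sure, in the second case, that $n$ itself does not appear among the divisors indexed in the telescoping product — this is exactly where the hypothesis that $n$ has at least two distinct prime factors is used, since then $n$ is not a prime power and is therefore excluded from the list of prime-power divisors. Beyond that, the argument is routine bookkeeping of divisors; the main (mild) obstacle is simply choosing the right induction setup so that the prime-power and composite cases support each other cleanly via the single identity $n=\prod_{d\mid n,\,d>1}\Phi_d(1)$.
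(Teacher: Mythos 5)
Your proof is correct. The paper itself gives no argument for this lemma --- it simply cites Nagell \S 46 --- and your derivation via the identity $n=\prod_{d\mid n,\,d>1}\Phi_d(1)$ (obtained from $x^n-1=\prod_{d\mid n}\Phi_d(x)$ after dividing by $x-1$), with the prime-power case settled first and a strong induction handling divisors with at least two distinct prime factors, is exactly the standard textbook proof of this fact; the bookkeeping of divisors, including the exclusion of $d=n$ from the telescoping product, is handled correctly.
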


With a subset of nonnegative integers $A$, we associate a \emph{mask polynomial}:
$$
	f_A(x)=\sum_{i\in A} x^i.
$$
The factorization of mask polynomials is closely connected to the factorization of sets. One can verify that
\begin{lemma}
	\label{mask-factor}
	Let $A,B$ be two finite sets of nonnegative integers, and $C=A+B$. Then $$
		f_A(x)f_B(x)=f_C(x).$$
\end{lemma}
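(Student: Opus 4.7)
The plan is to prove this by directly expanding the product and interpreting the result combinatorially. Writing out the definitions,
\[
f_A(x)f_B(x) \;=\; \Bigl(\sum_{a\in A} x^a\Bigr)\Bigl(\sum_{b\in B} x^b\Bigr) \;=\; \sum_{a\in A}\sum_{b\in B} x^{a+b},
\]
so the coefficient of $x^k$ on the left-hand side counts the number of ordered pairs $(a,b)\in A\times B$ with $a+b=k$.

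Next I would interpret this coefficient via the sumset $C=A+B$. For any $k$, the coefficient of $x^k$ in $f_A(x)f_B(x)$ is nonzero precisely when $k\in C$, which matches the support of $f_C(x)=\sum_{c\in C}x^c$. To upgrade this to equality of polynomials, one needs each $c\in C$ to have a unique representation $c=a+b$ with $a\in A,b\in B$; this uniqueness is the direct-sum hypothesis that is implicit in the lemma's formulation (matching the factorization conventions introduced just above the statement). Under that hypothesis, the map $A\times B\to C$, $(a,b)\mapsto a+b$ is a bijection, so each $x^c$ appears with coefficient exactly $1$, giving $f_A(x)f_B(x)=f_C(x)$.

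I do not anticipate any real obstacle: the argument is simply the distributive law combined with a change-of-index bijection. The only subtle point to articulate carefully is the direct-sum interpretation of $C=A+B$, which ensures that the coefficients on both sides of the identity coincide rather than merely having the same support.
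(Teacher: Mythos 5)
Your proof is correct and coincides with the routine verification the paper leaves to the reader (the lemma is introduced with ``One can verify that'' and no proof is supplied): expanding $f_A(x)f_B(x)$ shows that the coefficient of $x^k$ counts the representations $k=a+b$ with $a\in A$, $b\in B$. Your caveat is also accurate: with $C$ a genuine set, the stated identity requires each element of $C$ to have a unique such representation (for instance $A=B=\{0,1\}$ gives $f_A f_B=1+2x+x^2\neq f_C$), and this distinctness does hold in every place the paper invokes the lemma, since there $A+B$ arises from a factorization.
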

\section{Group Factorizations}
In this section, we prove some results on group factorizations. Using these results, some necessary and sufficient conditions for the existence of certain splitter sets will be established.
\begin{lemma}
	Let $p$ be a prime. Suppose that $A$ is a finite set of nonnegative integers and $f_A(x)=\sum_{i\in A} x^i$ is its mask polynomial. If $\Phi_{p^k}(x)\mid f_A(x)$, then $p\mid|A|$ and $A\pmod {p^k}$ as a multiset must be the union of $|A|/p$ arithmetic progressions with difference $p^{k-1}$ and length $p$.
	\label{lem-ap}
\end{lemma}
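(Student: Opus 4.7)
The plan is to reduce the problem to an explicit description of polynomials of degree less than $p^k$ that are divisible by $\Phi_{p^k}(x)$, using the well-known identity
$$\Phi_{p^k}(x) = \frac{x^{p^k}-1}{x^{p^{k-1}}-1} = \sum_{j=0}^{p-1} x^{jp^{k-1}}.$$

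First I would repackage the hypothesis in terms of residues mod $p^k$. For each $j \in [0, p^k-1]$ let $a_j = |\{i \in A : i \equiv j \pmod{p^k}\}|$ and set $B(x) = \sum_{j=0}^{p^k-1} a_j x^j$. Then $f_A(x) \equiv B(x) \pmod{x^{p^k}-1}$. Since $\Phi_{p^k}(x)$ divides both $x^{p^k}-1$ and $f_A(x)$, and is monic with integer coefficients, it must divide $B(x)$ in $\mathbb{Z}[x]$.

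Next I would exploit degrees. We have $\deg B < p^k$ and $\deg \Phi_{p^k} = (p-1)p^{k-1}$, so if we write $B(x) = \Phi_{p^k}(x) Q(x)$ with $Q \in \mathbb{Z}[x]$, then $\deg Q < p^{k-1}$. Write $Q(x) = \sum_{i=0}^{p^{k-1}-1} q_i x^i$. Expanding,
$$B(x) = \sum_{j=0}^{p-1}\sum_{i=0}^{p^{k-1}-1} q_i\, x^{i+jp^{k-1}},$$
and as $(i,j)$ ranges over $[0,p^{k-1}-1]\times[0,p-1]$ the exponents $i+jp^{k-1}$ hit each element of $[0,p^k-1]$ exactly once. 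Comparing with $B(x) = \sum_j a_j x^j$ yields $a_{i+jp^{k-1}} = q_i$ for every admissible $i,j$. In particular $q_i = a_i \ge 0$ is a nonnegative integer.

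From this the claim falls out: for each $i \in [0,p^{k-1}-1]$, the arithmetic progression $\{i,\,i+p^{k-1},\,\ldots,\,i+(p-1)p^{k-1}\}$ (of length $p$ and common difference $p^{k-1}$) appears exactly $q_i$ times inside the multiset $A \pmod{p^k}$, so this multiset decomposes into $\sum_i q_i$ such progressions; counting elements gives $|A| = p \sum_i q_i$, whence $p \mid |A|$ and the number of progressions is $|A|/p$. The only non-routine step is the very first reduction mod $x^{p^k}-1$, which is why I wanted to spell it out; everything afterwards is a direct coefficient comparison.
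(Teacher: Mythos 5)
Your proof is correct, and it follows the same overall skeleton as the paper's: reduce $f_A$ modulo $x^{p^k}-1$ to a polynomial of degree less than $p^k$ recording the multiplicities of residues mod $p^k$, observe that $\Phi_{p^k}(x)$ still divides it, and read the arithmetic-progression structure off the quotient. The one place where you genuinely diverge is the key step. The paper cites a result of de Bruijn (nonnegativity of the quotient of a nonnegative polynomial by $1+x^b+\cdots+x^{ab}$ when the degree is below $(a+1)b$) and separately gets $p\mid|A|$ from $\Phi_{p^k}(1)=p$; you instead prove everything in one stroke by writing $B(x)=\Phi_{p^k}(x)Q(x)$ with $\deg Q<p^{k-1}$ and noting that the exponents $i+jp^{k-1}$, $0\le i<p^{k-1}$, $0\le j<p$, are pairwise distinct, so the coefficient comparison gives $a_{i+jp^{k-1}}=q_i$ exactly. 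This makes the argument self-contained (the cited de Bruijn fact, in the special case needed, is essentially this no-collision observation), yields the slightly sharper conclusion that each residue class in a given progression occurs with the same multiplicity $q_i=a_i$, and delivers $p\mid|A|$ as a byproduct of the count $|A|=p\sum_i q_i$ rather than as a separate evaluation at $x=1$. The paper's route, by contrast, outsources the nonnegativity to a known lemma and is marginally shorter on the page; both are complete.
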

\begin{proof}
	In fact,
	$$\Phi_{p^k}(x)=\Phi_p(x^{p^{k-1}})=1+x^{p^{k-1}}+\cdots+x^{(p-1)\cdot p^{k-1}}.$$
	Since $\Phi_{p^k}(1)=p$, $\Phi_{p^k}(x)\mid f_A(x)$ implies that $p\mid f_A(1)=|A|$.

	Let $$\overline{f_A}(x)=\sum_{i\in A}x^{i\bmod p^k},$$ where
	$i\mod p^k\in[0,p^k-1]$, then $$f_A(x)\equiv \overline{f_A}(x)\pmod{(x^{p^k}-1)}.$$ Since $\Phi_{p^k}(x)\mid (x^{p^k}-1)$,  then $\Phi_{p^k}(x)\mid \overline{f_A}(x)$. In \cite[p374]{debruijn1953factorization}, it was mentioned that if $f(x)$ is a polynomial with nonnegative integral coefficients, $(1+x^b+x^{2b}+\cdots+x^{ab})\mid f(x)$ and the degree of $f(x)$ is smaller than $(a+1)b$, then $f(x)/(1+x^b+\cdots+x^{ab})$ is a polynomial with nonnegative integral coefficients. We set $b=p^{k-1}, a=p-1$ and note that $\deg f(x)\le p^k-1<(a+1)b$, then we can apply this argument to see that $g(x)=\overline{f_A}(x)/\Phi_{p^k}(x)$ is a polynomial with nonnegative integral coefficients. We suppose
	$$
		g(x)=x^{i_1}+x^{i_2}+\cdots+x^{i_m},
	$$
	where $i_1,i_2,\ldots,i_m$ do not need to be distinct.
	Since $$g(1)=\overline{f_A}(1)/\Phi_{p^k}(1)=|A|/p,$$ then $m=|A|/p$. From
	$$
		\overline{f_A}(x)=g(x)\Phi_{p^k}(x)=\sum_{j=1}^m x^{i_j}(1+x^{p^{k-1}}+\cdots+x^{(p-1)p^{k-1}}),
	$$
	we see that $A\pmod{p^k}$ as a multiset is the union of $A_j$ for $j=1,2,\ldots,m$, where $$\begin{aligned}A_j&=\{i_j\}+p^{k-1}\cdot[0,p-1]\\&=\{i_j\}+\{0,p^{k-1},2p^{k-1},\ldots,(p-1)p^{k-1}\}.\end{aligned}$$  Clearly each $A_j$ is an arithmetic progression with difference $p^{k-1}$ and length $p$.
\end{proof}
\begin{lemma}
	\label{lem-partition}
	Let $p$ be a prime, $N=p^am$, where $p\nmid m, a\ge 1$. Suppose that $G=A+B$ is a factorization of $G=\mathbb{Z}_N$. Let $f_A, f_B$ be the mask polynomials of $A, B$ respectively.
	For any subset $X$ of $G$, we define $$\mathfrak{M}_X=\{1\le j\le a:\Phi_{p^j}(x)\mid f_X(x)\},$$ then $|\mathfrak{M}_A|=v_p(|A|), |\mathfrak{M}_B|=v_p(|B|)$ and $\{\mathfrak{M}_A,\mathfrak{M}_B\}$ is a partition of
	$\{1,2,\ldots,a\}$. \label{lemma-divisors}
\end{lemma}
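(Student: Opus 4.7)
The plan is to exploit the polynomial identity furnished by the factorization, namely $f_A(x)f_B(x) \equiv 1 + x + \cdots + x^{N-1} \pmod{x^N - 1}$, which is just the restatement that each element of $\mathbb{Z}_N$ has a unique representation as $a+b$ with $a \in A$, $b \in B$. First I would establish the covering property $\mathfrak{M}_A \cup \mathfrak{M}_B = \{1,\ldots,a\}$. For any $1 \le j \le a$, let $\zeta$ be a primitive $p^j$-th root of unity; since $p^j \mid N$, we have $\zeta^N = 1$ and $\sum_{i=0}^{N-1} \zeta^i = 0$, so the identity forces $f_A(\zeta) f_B(\zeta) = 0$. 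As $\Phi_{p^j}$ is irreducible over $\mathbb{Q}$ (hence, by Gauss's lemma, over $\mathbb{Z}$), we get $\Phi_{p^j} \mid f_A$ or $\Phi_{p^j} \mid f_B$ in $\mathbb{Z}[x]$, giving $j \in \mathfrak{M}_A \cup \mathfrak{M}_B$.

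Next I would bound $|\mathfrak{M}_A| \le v_p(|A|)$ by reading $p$-adic valuations at $x = 1$. Factor $f_A(x) = \prod_{j \in \mathfrak{M}_A} \Phi_{p^j}(x)^{\alpha_j} \cdot R(x)$ in $\mathbb{Z}[x]$ with each $\alpha_j \ge 1$ and $R$ coprime to every $\Phi_{p^j}$ for $j \in [1,a]$. Using the preceding lemma $\Phi_{p^j}(1) = p$, this yields $|A| = f_A(1) = p^{\sum_j \alpha_j} R(1)$ with $R(1) \in \mathbb{Z}$, hence $v_p(|A|) \ge \sum_j \alpha_j \ge |\mathfrak{M}_A|$; the symmetric estimate holds for $B$.

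Finally, $|A|\,|B| = N = p^a m$ gives $v_p(|A|) + v_p(|B|) = a$, so $|\mathfrak{M}_A| + |\mathfrak{M}_B| \le a$; combined with $|\mathfrak{M}_A \cup \mathfrak{M}_B| = a$, this pigeonhole squeeze forces $\mathfrak{M}_A \cap \mathfrak{M}_B = \emptyset$ together with the equalities $|\mathfrak{M}_A| = v_p(|A|)$ and $|\mathfrak{M}_B| = v_p(|B|)$. The main subtle point, I expect, is the disjointness: the root-of-unity evaluation delivers only the covering, and disjointness does not seem to fall out of Lemma \ref{lem-ap} by itself; the cleanest route is the $p$-adic counting just described, which bundles covering, disjointness, and the cardinality identities into one argument.
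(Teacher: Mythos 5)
Your proposal is correct and follows essentially the same route as the paper's proof: the mask-polynomial identity $f_A f_B \equiv 1+x+\cdots+x^{N-1} \pmod{x^N-1}$, covering of $\{1,\ldots,a\}$ via irreducibility of $\Phi_{p^j}$, the bound $|\mathfrak{M}_X|\le v_p(|X|)$ from $\Phi_{p^j}(1)=p$, and the counting squeeze $v_p(|A|)+v_p(|B|)=a$. Your root-of-unity evaluation and explicit factorization with multiplicities merely spell out details the paper leaves implicit.
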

\begin{proof}
	By Lemma \ref{mask-factor} there is a factorization
	$$
		f_A(x)f_B(x)\equiv 1+x+x^2+\cdots+x^{N-1}\pmod{(x^N-1)}.
	$$
	Since for every $1\le j\le a$, $\Phi_{p^j}(x)\mid (1+x+\cdots+x^{N-1})$, hence for every $1\le j\le a$,
	$\Phi_{p^j}(x)$ must divide $f_A(x)$ or $f_B(x)$, that is,
	$$\mathfrak{M}_A\cup\mathfrak{M}_B=\{1,2,\ldots,a\}.$$
	But notice that $\Phi_{p^j}(1)=p$, so
	$|\mathfrak{M}_A|\le v_p(f_A(1))=v_p(|A|)$ and similarly
	$|\mathfrak{M}_B|\le v_p(f_B(1))=v_p(|B|)$.
	Hence
	$$
		|\mathfrak{M}_A|+|\mathfrak{M}_B|\le v_p(|A|)+v_p(|B|)=a.
	$$
	So $|\mathfrak{M}_A|=v_p(|A|), |\mathfrak{M}_B|=v_p(|B|)$ and $\{\mathfrak{M}_A,\mathfrak{M}_B\}$ is a partition of
	$\{1,2,\ldots,a\}$.
\end{proof}
\begin{thm}
	\label{thm-dir-fct}
	Let $p$ be a prime, $G=\mathbb{Z}_{p^am}$, where $p\nmid m$, $A$ be a subset of $G$ of size $p^n, 1\le n\le a$, then $A$ is a direct factor of $G$ if and only if there exist integers
	$1\le i_1<i_2<\ldots<i_n\le a$, such that we can label elements of $A$ as
	$a_{b_1b_2\cdots b_n}, b_i\in\{0,1,\ldots,p-1\}$, where:
	\begin{enumerate}[label=({\arabic*})]
		\item the $i_j$-th to last digit of $a_{b_1b_2\cdots b_n}$ in base $p$ is $b_j$, that is,
		      $\lfloor a_{b_1b_2\cdots b_n}/p^{i_j-1}\rfloor \equiv b_j\pmod{p}$;
		\item $a_{b_1b_2\cdots b_{j-1} b_j *\cdots *}\equiv a_{b_1b_2\cdots b_{j-1} 0 *\cdots *}+b_jp^{i_j-1}\pmod{p^{i_j}}$,
		      where $*$ stands for arbitrary values in $[0,p-1]$ (not necessary identical for corresponding coordinates).
	\end{enumerate}
\end{thm}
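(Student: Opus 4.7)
The plan is to prove both directions by induction on $n$, using Lemma~\ref{lem-partition} (which forces $|\mathfrak{M}_A|=n$ and identifies the indices $i_1<\cdots<i_n$) and Lemma~\ref{lem-ap} (giving the AP structure of $A$ modulo each $p^{i_j}$) as the main tools. The base case $n=0$ is trivial: a singleton is always a direct factor with vacuous labeling.

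For necessity, I apply Lemma~\ref{lem-partition} to obtain $\mathfrak{M}_A=\{i_1<\cdots<i_n\}$, then invoke Lemma~\ref{lem-ap} at $k=i_n$: the multiset $A\bmod p^{i_n}$ splits into $p^{n-1}$ length-$p$ APs of common difference $p^{i_n-1}$, and this lifts to a partition of $A\subset\mathbb{Z}_N$ into $p^{n-1}$ blocks of $p$ elements whose $i_n$-th base-$p$ digits run through $\{0,1,\ldots,p-1\}$. I use this digit as the label $b_n$ and set $A^{(0)}:=\{a\in A:b_n=0\}$, of size $p^{n-1}$. The central claim is that $A^{(0)}$ is itself a direct factor of $\mathbb{Z}_N$ with $\mathfrak{M}_{A^{(0)}}=\{i_1,\ldots,i_{n-1}\}$; granting this, the inductive hypothesis supplies labels $b_1,\ldots,b_{n-1}$ on $A^{(0)}$, which then extend to all of $A$ via the block structure.

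For sufficiency, iterating condition~(2) and using $i_{j+1}>i_j$ shows that $a_{b_1\cdots b_n}\bmod p^{i_j}$ depends only on $b_1,\ldots,b_j$; computing $f_A(x)$ modulo $x^{p^{i_j}}-1$ then yields $\Phi_{p^{i_j}}(x)\mid f_A(x)$ for each $j$. I would conclude that $A$ is a direct factor by an analogous induction: restricting the labeling to $A^{(0)}$ gives a labeling with indices $i_1,\ldots,i_{n-1}$ that still satisfies both conditions, so by the inductive hypothesis $A^{(0)}$ is a direct factor, and one builds a complement of $A$ from a complement of $A^{(0)}$ by appropriately adjoining representatives modulo $p^{i_n}$.

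The main obstacle is the central claim in the necessity step. Natural candidate complementer factors of $A^{(0)}$ have the form $B+S$ for some $p$-element set $S$ (where $B$ complements $A$); verifying $A^{(0)}+(B+S)=\mathbb{Z}_N$ is a factorization reduces via mask polynomials to the identity $f_{A^{(0)}}(x)\,\Phi_{p^{i_n}}(x)\,f_B(x)\equiv(x^N-1)/(x-1)\pmod{x^N-1}$. This combines the block congruence $f_A\equiv f_{A^{(0)}}\Phi_{p^{i_n}}\pmod{x^{p^{i_n}}-1}$ with a cyclotomic case-analysis: for every $d\mid N$ with $d>1$, the product must vanish at a primitive $d$-th root $\omega_d$, and one verifies this by showing that either $\Phi_d\mid f_B$, $d=p^{i_n}$, or $\Phi_d\mid f_{A^{(0)}}$. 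The last possibility is the delicate case, most subtle when $d$ is not a prime power, and requires carefully exploiting the block-structure decomposition of $A$.
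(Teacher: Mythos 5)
Your overall architecture (induction on $n$, Lemma~\ref{lem-partition} to locate the indices $i_1<\cdots<i_n$, Lemma~\ref{lem-ap} to split $A$ into $p^{n-1}$ blocks according to the $i_n$-th digit) starts out parallel to the paper, but your necessity step has a genuine gap exactly where you flag it: the ``central claim'' that $A^{(0)}=\{a\in A: b_n=0\}$ is again a \emph{direct factor} of $\mathbb{Z}_N$ is never proved, and the route you sketch does not close it. The only identity you have relating $A$ and $A^{(0)}$ is $f_A(x)\equiv f_{A^{(0)}}(x)\,\Phi_{p^{i_n}}(x)\pmod{x^{p^{i_n}}-1}$, which controls values of $f_{A^{(0)}}$ only at roots of unity of order dividing $p^{i_n}$. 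For a divisor $d$ of $N$ with a factor dividing $m$ (or any $d\nmid p^{i_n}$) and $\Phi_d\nmid f_B$, you would need $\Phi_d(x)\mid f_{A^{(0)}}(x)$, and knowing $\Phi_d\mid f_A$ gives you no handle on this: the elements of $A^{(0)}$ are actual elements of $A$, whose residues modulo the part of $N$ prime to $p^{i_n}$ are unconstrained by the block structure. In fact, the cleanest way to see that $A^{(0)}$ \emph{is} a direct factor is to first know the labeling of $A$ (i.e., the conclusion of the necessity direction) and then invoke the ``if'' direction for $n-1$ --- which is circular inside your induction. So as written, the necessity proof rests on a claim that is essentially equivalent in difficulty to the theorem itself.

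The paper sidesteps this by inducting on a strictly weaker hypothesis: it proves the statement ``if $\Phi_{p^{i_j}}(x)\mid f_A(x)$ for $j=1,\dots,n$, then $A$ admits the labeling,'' with no factorization assumption at all, and applies Lemma~\ref{lem-partition} only once, at the top, to supply that divisibility when $A$ is a direct factor. Under the weaker hypothesis the descent is painless: letting $A'$ be the set of block base points modulo $p^{i_\nu}$, one has $f_A(x)\equiv p\,f_{A'}(x)\pmod{x^{p^{i_\nu-1}}-1}$ as multisets, so every $\Phi_{p^{i_j}}$ with $j<\nu$ automatically divides $f_{A'}$, and the inductive hypothesis applies to $A'$ directly --- no complement of $A'$ ever needs to be exhibited. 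You should restructure your induction the same way (or prove your central claim by some independent means, which I do not see how to do along the $B+S$ line). Your sufficiency direction is salvageable but also underspecified: deducing $\Phi_{p^{i_j}}\mid f_A$ from the labeling does not by itself make $A$ a direct factor, and ``adjoining representatives modulo $p^{i_n}$'' to a complement of $A^{(0)}$ needs an argument; the paper instead writes down the explicit complement $C_1+\cdots+C_{n+1}$ built from the base-$p$ digit blocks avoiding positions $i_1,\dots,i_n$ and checks uniqueness of representation by reducing successively modulo $p^{i_1-1},p^{i_1},\dots$, which you could adopt verbatim and which requires no induction at all.
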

\begin{proof}
	First we prove, by induction on $n$, that:
	\begin{quotation}(*) if $A$ is a subset of $G$ of size $p^n$ and $\Phi_{p^i}(x)\mid f_A(x)$ for $i\in\{i_1,i_2,\ldots,i_n\},
			1\le i_1<i_2<\cdots<i_n\le a$, then $A$ must satisfy the condition in the theorem.\end{quotation}

	Then combined with Lemma \ref{lem-partition} we can prove the ``only if'' part of the theorem.

	The case $n=1$ is obvious: by Lemma \ref{lem-ap} $A$ will be an arithmetic progression with difference $p^{i_1-1}$ and length $p$. We may choose $a_0\in A$ such that the $i_1$-th to last digit of $a_0$ is zero, $a_i$ to be the unique element of $A$ such that $a_i\equiv a_0+ip^{i_1-1}\pmod{p^{i_1}}$ for $i\in [1,p-1]$.

	Suppose that the (*) holds for $n=\nu-1$, consider the case $n=\nu$.
	Since $\Phi_{p^{i_\nu}}(x)\mid f_A(x)$, $A\pmod{p^{i_\nu}}$ must be the union of
	$p^{\nu-1}$ arithmetic progressions with difference
	$p^{i_\nu-1}$ and length $p$:
	$$
		A\pmod{p^{i_\nu}}=\bigcup_{i=1}^{p^{\nu-1}} A_i,
		\quad A_i=\{a_i\}+p^{i_\nu-1}\cdot [0,p-1].
	$$
	Without loss of generality, we suppose that the
	$i_\nu$-th to last digit under base $p$ of $a_i$ is $0$.
	Then elements of $A_i$ share the same last $(i_\nu-1)$ digits and differ at the $i_\nu$-th to last digit.

	We turn to $A'=\{a_i:i\in [1,p^{\nu-1}]\}$, then as multisets $$A\equiv\bigcup_{i=1}^{p} A'\pmod{p^{i_\nu-1}}$$ and hence $$f_A(x)\equiv pf_{A'}(x)\pmod{(x^{p^{i_\nu-1}}-1)}.$$ We prove that $A'$ satisfies the induction hypothesis. In fact, $A'$ is a subset of $G$ of size $p^{\nu-1}$, and
	if $\Phi_{p^j}(x)\mid f_A(x)$ for some $1\le j<i_\nu$, then $\Phi_{p^j}(x)\mid pf_{A'}(x)$ since $\Phi_{p^j}(x)\mid (x^{p^j}-1)\mid (x^{p^{i_\nu-1}}-1)$, which implies that $\Phi_{p^j}(x)\mid f_{A'}(x)$.
	So $\Phi_{p^i}(x)\mid f_{A'}(x)$ for $i\in\{i_1,\ldots,i_{\nu-1}\}$.
	By induction hypothesis, we can label the elements of $A'$
	as
	$a_{b_1b_2\cdots b_{\nu-1}}, b_i\in\{0,1,\ldots,p-1\}$, such that the $i_j$-th to last digit of $a_{b_1b_2\cdots b_{\nu-1}}$ is $b_j$, and
	$$a_{b_1b_2\cdots b_{j-1} b_j *\cdots *}\equiv a_{b_1b_2\cdots b_{j-1} 0 *\cdots *}+b_jp^{i_j-1}\pmod{p^{i_j}},$$
	where $*$ stands for arbitrary values in $[0,p-1]$.
	Then let $a_{b_1b_2\cdots b_{\nu-1}0}:=a_{b_1\ldots b_{\nu-1}}$, so
	$$
		A\pmod{p^{i_\nu}}=\bigcup_{i=1}^{p^{\nu-1}}\{a_{b_1b_2\cdots b_{\nu-1}0}\}+p^{i_\nu-1}\cdot [0,p-1].
	$$
	We let $a_{b_1b_2\cdots b_{\nu-1}b_{\nu}} (b_{\nu}>0)$ be the element of $A$ that corresponds to
	$a_{b_1b_2\cdots b_{\nu-1}0}+p^{i_\nu-1}b_{\nu}$ in the decomposition above, and (*) is then proved.

	Now we prove the ``if'' part of the theorem. If $A$ satisfies the condition in the theorem, we let \[
		\begin{aligned}
			C_1   & =\{0,1,\ldots,p^{i_1-1}-1\}
			,                                                        \\ C_2&=\{0,p^{i_1},2p^{i_1},\ldots,p^{i_2-1}-p^{i_1}\},\\&\vdots \\
			C_{n} & =\{0,p^{i_{n-1}},\ldots,p^{i_n-1}-p^{i_{n-1}}\}, \\ C_{n+1}&=\{0,p^{i_n},\cdots,|G|-p^{i_n}\},
		\end{aligned}\] then \[\begin{aligned}|C_1| & =p^{i_1-1},         \\|C_2|&=p^{i_2-1-i_1}, \\&\vdots\\
               |C_n| & =p^{i_n-1-i_{n-1}}, \\|C_{n+1}|&=|G|/p^{i_n}.\end{aligned}\]
	We prove that $G=A+C_1+C_2+\cdots+C_n+C_{n+1}$ is a factorization. Notice that $|A||C_1||C_2|\cdots |C_{n+1}|=p^{n+i_1-1+i_2-i_1-1+\cdots+i_n-i_{n-1}-1-i_n}|G|=|G|$, so we only have to prove that every element in $A+C_1+C_2+\cdots+C_{n+1}$ can be expressed as a sum in $A+C_1+C_2+\cdots+C_{n+1}$ in only one way. Suppose there exist
	$a_{b_1b_2\cdots b_n}, a_{\beta_1,\beta_2\cdots\beta_n}\in A,
		c_i,\sigma_i\in C_i$ such that
	$$\begin{aligned}
			       & a_{b_1b_2\cdots b_n}+c_1+c_2+\cdots+c_n+c_{n+1}                                    \\
			\equiv & a_{\beta_1,\beta_2\cdots\beta_n}+\sigma_1+\cdots+\sigma_n+\sigma_{n+1}\pmod{p^am}.\end{aligned}$$
	First, taking congruences modulo $p^{i_1-1}$, we notice that for $i>1$,
	$c_i,\sigma_i\equiv0\pmod{p^{i_1}}$, then we get
	$$a_{b_1b_2\cdots b_n}+c_1\equiv a_{\beta_1,\beta_2\cdots\beta_n}+\sigma_1\pmod{p^{i_1-1}}.
	$$
	Since $a_{b_1*\cdots*}\equiv a_{0*\cdots*}+b_1p^{i_1-1}\pmod{p^{i_1}}$, then $a_{b_1*\cdots*}\equiv a_{0*\cdots*}\equiv 0\pmod{p^{i_1-1}}$. Analogously $a_{\beta_1*\cdots*}\equiv 0\pmod{p^{i_1-1}}$. Hence $c_1=\sigma_1$. Then
	$$a_{b_1b_2\cdots b_n}\equiv a_{\beta_1\beta_2\cdots\beta_n}\pmod{p^{i_1}}
	$$
	and hence $b_1=\beta_1$. In the same manner we can prove that
	$c_2=\sigma_2$ and $b_2=\beta_2$, $c_3=\sigma_3$,\ldots,$c_n=\sigma_n$ and we get
	$b_1\cdots b_n=\beta_1\cdots \beta_n$.
	Thus $c_{n+1}\equiv \sigma_{n+1}\pmod{p^am}$, which implies that
	$c_{n+1}=\sigma_{n+1}$. If we set $B=C_1+C_2+\cdots+C_n+C_{n+1}$, then $A+B$ is a factorization.
\end{proof}
Notice that, when the condition in the theorem is satisfied, all elements in $A$ modulo $p^{i_n}$ are distinct.
\begin{remark}
	In \textnormal{\cite[Problem 77]{fuchs1960abelian}}, Fuchs posed the question of whether, in every group factorization $G=S+T$ of a finite abelian group $G$, one of the sets $S$ or $T$ can be replaced by a subgroup of $G$. All cyclic groups satisfying this property have been characterized \textnormal{(\cite{sands2000replacement})}. Our theorem demonstrates that if $G=\mathbb{Z}_{p^km}$ and $|S|$ is a power of prime, then if $i_1=1,i_2=2,\ldots,i_n=n$, then $T$ can be replaced by a cyclic subgroup of $G$. This also explains why a counterexample proposed by Sands \textnormal{(\cite[Example 1]{sands1962problem})} works: it corresponds to the case where $k=\alpha, m=1, n=\alpha-2, i_1=2, i_2=3,\ldots,i_n=\alpha-1$.
\end{remark}
\begin{cor}[{\cite[Theorem 8]{zhang2018nonexistence}}]
	Let $q,k$ be odd primes, $g$ be a primitive root modulo $q$ and
	$\mu=\gcd(q-1,\{\ind_g(j):j\in[1,k]^*\})$. Then there exists a perfect
	$B[0,k](q)$ set if and only if
	$q\equiv 1\pmod{\mu k}$ and $$\left|\{\ind_g(j)/\mu \pmod{k}:j\in [1,k]\}\right|=k.$$
\end{cor}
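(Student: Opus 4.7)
The plan is to translate the existence of a perfect $B[0,k](q)$ splitter set into a direct-factor question in $\mathbb{Z}_{q-1}$ and then invoke Theorem \ref{thm-dir-fct}. Since $g$ is a primitive root modulo $q$, the map $\ind_g\colon \mathbb{Z}_q^\times \to \mathbb{Z}_{q-1}$ is a group isomorphism, so a perfect $B[0,k](q)$ splitter set $B$ corresponds via $\ind_g$ to an additive factorization $\mathbb{Z}_{q-1}=A+B'$, where $A:=\{\ind_g(j):j\in[1,k]\}$ has size $k$ and $B'=\ind_g(B)$. Hence it suffices to characterize when $A$ is a direct factor of $\mathbb{Z}_{q-1}$.

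Next I would apply Theorem \ref{thm-dir-fct} with prime $p=k$ and $n=1$. Any such factorization already forces $k\mid q-1$, so I write $q-1=k^a m$ with $a\ge 1$ and $\gcd(m,k)=1$. The theorem then says $A$ is a direct factor if and only if there exist $i_1\in[1,a]$ and a labeling $a_0,a_1,\ldots,a_{k-1}$ of $A$ making conditions (1) and (2) hold. Because $0=\ind_g(1)\in A$, I would relabel so that $a_0=0$; the two conditions then collapse into the single assertion that the residues of $A$ modulo $k^{i_1}$ are precisely
\[\{0,\ k^{i_1-1},\ 2k^{i_1-1},\ \ldots,\ (k-1)k^{i_1-1}\}.\]

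The heart of the proof is to show the existence of such an $i_1$ is equivalent to $\mu k\mid q-1$ together with $\lvert\{\ind_g(j)/\mu\bmod k:j\in[1,k]\}\rvert=k$. Writing $I_j:=\ind_g(j)$, for the forward direction I would extract from the AP condition that $k^{i_1-1}\mid I_j$ for all $j$ and, by considering the index at which $b_j=1$, that $d:=\mu/k^{i_1-1}$ is coprime to $k$; then $\mu k=k^{i_1}d$ divides $q-1$, and $I_j/\mu\equiv b_j d^{-1}\pmod k$ ranges bijectively over $[0,k-1]$. For the reverse direction I would set $i_1=v_k(\mu)+1$, so that $d:=\mu/k^{i_1-1}$ satisfies $\gcd(d,k)=1$; the hypothesis $\mu k\mid q-1$ forces $i_1\le a$, and writing $I_j=k^{i_1-1}d(I_j/\mu)$ gives $I_j\bmod k^{i_1}=k^{i_1-1}\bigl(d(I_j/\mu)\bmod k\bigr)$, which by the distinctness hypothesis exhausts the required arithmetic progression.

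The main obstacle I anticipate is the bookkeeping needed to disentangle the $k$-part of $\mu$ (which pins down $i_1$) from the part of $\mu$ coprime to $k$ (which only reindexes the arithmetic progression by multiplication by the unit $d$ modulo $k$); once this accounting is made precise, the corollary's gcd-based conditions match cleanly with the labeling conditions in Theorem \ref{thm-dir-fct}.
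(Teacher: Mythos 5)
Your proposal is correct and follows essentially the same route as the paper: translate the splitting of $\mathbb{Z}_q^\times$ by $[1,k]$ into a factorization $\mathbb{Z}_{q-1}=A+B'$ via $\ind_g$, apply Theorem \ref{thm-dir-fct} with $p=k$, $n=1$, use $\ind_g(1)=0$ to reduce the labeling conditions to ``$A$ modulo $k^{i_1}$ is exactly $\{0,k^{i_1-1},\ldots,(k-1)k^{i_1-1}\}$,'' and then match this with the stated conditions by the same bookkeeping ($v_k(\mu)=i_1-1<v_k(q-1)$ in one direction, $i_1=v_k(\mu)+1$ in the other, with the unit $\mu/k^{i_1-1}$ merely permuting residues mod $k$). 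No gaps; this is the paper's argument.
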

\begin{proof}
	Since $k$ is an odd prime and $\ind_g(1)=0$, the condition in Theorem \ref{thm-dir-fct} can be described as: there exists a positive integer $a\le v_k(q-1)$ such that elements of $A=\{\ind_g(j):j\in[1,k]^*\}$ modulo $k^a$ are exactly
	$$0,k^{a-1},2k^{a-1},\ldots,(k-1)k^{a-1}.$$ If that is the case, then
	$v_k(\mu)=a-1<v_k(q-1)$, hence $\mu k \mid (q-1)$, and since
	$\gcd(\mu/k^{a-1},k)=1$, $$\begin{aligned}  & |\{\ind_g(j)/\mu \pmod{k}:j\in [1,k]\}| \\=&|\{(\ind_g(j)/k^{a-1})(\mu/k^{a-1})^{-1} \pmod{k}:j\in [1,k]\}|
                \\=&|\{0,1,\ldots,k-1\}|=k.\end{aligned}$$ On the other hand, if $\mu k\mid (q-1)$ and $|\{\ind_g(j)/\mu \pmod{k}:j\in [1,k]\}|=k$, then $v_k(\mu)<v_k(q-1)$. Let $a=v_k(\mu)+1$, then $\gcd(\mu/k^{a-1},k)=1$, $$\begin{aligned}&|\{\ind_g(j)/k^{a-1} \pmod{k}:j\in [1,k]\}|\\=&|\{(\ind_g(j)/\mu)(\mu/k^{a-1}) \pmod{k}:j\in [1,k]\}|=k,\end{aligned}$$ which implies that $$\{\ind_g(j)/k^{a-1} \pmod{k}:j\in [1,k]\}=\{0,1,\ldots,k-1\},$$ hence $A$ satisfies the condition in Theorem \ref{thm-dir-fct}.
\end{proof}
\begin{exampl}[{\cite[Example 10.6]{szabo2009factoring}}]
	Let us investigate if $M=\{1,2,3,4,5\}$ splits $\mathbb{Z}_{421}^\times$. We select $2$ as a primitive root modulo $421$, then we are left with the problem of whether $A=\{0,1,404,2,278\}$ has a complementer factor in $\mathbb{Z}_{420}$. Since $404\equiv 4\pmod5, 278\equiv 3\pmod 5$, we deduce that $A$ is a direct factor of $\mathbb{Z}_{420}$ and $M$ does split $\mathbb{Z}_{421}^\times$.
\end{exampl}
\begin{exampl}[{\cite[Example 10.5]{szabo2009factoring}}]
	Let us decide if $M=\{1,2,3\}$ splits $\mathbb{Z}_{103}^\times$, or equivalently, by noticing that $5$ is a primitive root modulo $103$, if $A=\{0,44,39\}$ has a complementer factor in $\mathbb{Z}_{102}$. Since $3\mid 39, 3\nmid 44$, so we deduce that $A$ has no complementer factor in $\mathbb{Z}_{102}$ and $M$ does not split $\mathbb{Z}_{103}^\times$.
\end{exampl}
\begin{cor}
	Let $G=\mathbb{Z}_{2^am}$, $2\nmid m$, $A$ be a subset of $G$ of size $2^n$, then $A$ is a direct factor of $G$ if and only if there exist
	$1\le i_1<i_2<\ldots<i_n\le a$, such that we can label elements of $A$ as
	$a_{b_1b_2\cdots b_n}, b_i\in\{0,1\}$, the $i_j$-th to last digit of $a_{b_1b_2\cdots b_n}$ is $b_j$, and
	$a_{b_1b_2\cdots b_{j-1} 1 *\cdots *}\equiv a_{b_1b_2\cdots b_{j-1} 0 *\cdots *}+2^{i_j-1}\pmod{2^{i_j}}$,
	where $*$ stands for arbitrary values in $[0,1]$.
	\label{cor-2n}
\end{cor}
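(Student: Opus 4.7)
The plan is to invoke Theorem \ref{thm-dir-fct} directly with $p=2$ and verify that its statement specializes to the corollary. Theorem \ref{thm-dir-fct} places no restriction on the prime $p$: the supporting ingredients—Lemma \ref{lem-ap}, Lemma \ref{lem-partition}, and the de~Bruijn nonnegativity fact for mask-polynomial quotients—apply for $p=2$ just as well. In particular $\Phi_{2^k}(x)=1+x^{2^{k-1}}$ fits the template $1+x^b+x^{2b}+\cdots+x^{ab}$ with $a=1,\ b=2^{k-1}$, so Lemma \ref{lem-ap} goes through and concludes that if $\Phi_{2^k}(x)\mid f_A(x)$ then $A\pmod{2^k}$ is a union of arithmetic progressions with difference $2^{k-1}$ and length $2$.

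The remaining work is a notational matching of the two statements. In Theorem \ref{thm-dir-fct} the indices $b_i$ range over $\{0,1,\ldots,p-1\}$, which for $p=2$ collapses to $\{0,1\}$, agreeing with the labeling in the corollary. Condition (1) of the theorem—that the $i_j$-th to last base-$p$ digit of $a_{b_1\cdots b_n}$ equals $b_j$—becomes the binary-digit condition stated in the corollary.

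For condition (2), I would observe that the congruence
\[ a_{b_1b_2\cdots b_{j-1} b_j *\cdots *}\equiv a_{b_1b_2\cdots b_{j-1} 0 *\cdots *}+b_j p^{i_j-1}\pmod{p^{i_j}} \]
is tautological when $b_j=0$. When $p=2$ the only other possibility is $b_j=1$, and the congruence then becomes exactly
\[ a_{b_1\cdots b_{j-1} 1 *\cdots *}\equiv a_{b_1\cdots b_{j-1} 0 *\cdots *}+2^{i_j-1}\pmod{2^{i_j}}, \]
which is the condition appearing in the corollary.

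No step is really an obstacle: since the ``if'' and ``only if'' directions of Theorem \ref{thm-dir-fct} have already been proved for an arbitrary prime, the corollary is essentially a restatement at $p=2$ in streamlined binary notation, and the proof reduces to the verification above.
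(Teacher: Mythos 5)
Your proposal is correct and matches the paper's (implicit) reasoning: the corollary is stated without proof precisely because it is the direct specialization of Theorem \ref{thm-dir-fct} to $p=2$, with condition (2) reducing to the $b_j=1$ case exactly as you observe. Nothing further is needed.
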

We will use this corollary in the following corollaries and in Section IV. For simplicity, we will write $i=i_1,j=i_2,k=i_3$.
\begin{cor}[{\cite[Theorem 2]{klove2011some}}] Let $q$ be an odd prime, then a nonsingular perfect $B[0,2](q)$ set $B$ exists if and only if $\ord_q(2)$ is even.
\end{cor}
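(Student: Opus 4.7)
The plan is to translate the splitter set condition into a group factorization problem, then apply Corollary \ref{cor-2n} with $p=2$ and $n=1$, and finally rephrase the resulting arithmetic condition in terms of $\ord_q(2)$.

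First I would fix a primitive root $g$ modulo $q$ and write $q-1 = 2^a m$ with $m$ odd and $a \geq 1$ (noting $a \geq 1$ since $q$ is an odd prime). A perfect $B[0,2](q)$ splitter set is, by definition, a splitting $\mathbb{Z}_q^\times = \{1,2\}\cdot B$, so taking indices via $\ind_g$ converts this into an additive factorization $\mathbb{Z}_{q-1} = A + C$, where $A = \{0,\, \ind_g(2)\}$. Thus the existence question reduces to: when does $A$ admit a complementer factor in $\mathbb{Z}_{q-1}$?

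Next I would apply Corollary \ref{cor-2n} with $p=2$ and $|A|=2^1$. Taking $a_0 = 0$ and $a_1 = \ind_g(2)$, the characterization says that $A$ is a direct factor if and only if there exists an integer $i_1$ with $1 \leq i_1 \leq a$ such that $\ind_g(2) \equiv 2^{i_1-1} \pmod{2^{i_1}}$, i.e., $v_2(\ind_g(2)) = i_1 - 1$. Since $i_1$ ranges over $\{1,\ldots,a\}$, such an $i_1$ exists if and only if
\[
v_2(\ind_g(2)) \;<\; a \;=\; v_2(q-1).
\]

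Finally I would translate this inequality into the order condition. Using the standard identity $\ord_q(2) = (q-1)/\gcd(q-1,\ind_g(2))$ and writing $d = \gcd(q-1,\ind_g(2))$, one has $v_2(d) = \min\bigl(v_2(q-1),\, v_2(\ind_g(2))\bigr)$. Then $\ord_q(2)$ is even precisely when $v_2((q-1)/d) \geq 1$, i.e., $v_2(q-1) > v_2(d)$, which occurs exactly when $v_2(\ind_g(2)) < v_2(q-1)$. This matches the condition obtained from Corollary \ref{cor-2n}, proving both directions simultaneously. The nonsingularity hypothesis $\gcd(q,2!)=1$ is automatic here since $q$ is odd, so no extra care is needed on that front. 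The only mildly delicate step is keeping the $2$-adic valuation bookkeeping correct when converting between $v_2(\ind_g(2))$ and the parity of $\ord_q(2)$, but this is a short computation rather than a genuine obstacle.
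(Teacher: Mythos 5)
Your proof is correct and follows essentially the same route as the paper: reduce the splitter set to the factorization $\mathbb{Z}_{q-1}=\{0,\ind_g(2)\}+C$, apply Corollary \ref{cor-2n} with $n=1$ to get $\ind_g(2)\equiv 2^{i-1}\pmod{2^i}$ for some $1\le i\le a$ (i.e.\ $v_2(\ind_g(2))<v_2(q-1)$), and translate this into $\ord_q(2)$ being even. The only difference is that you spell out the $\gcd$/valuation bookkeeping that the paper leaves implicit.
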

\begin{proof}
	Let $q=2^am+1$, where $a\ge 1, 2\nmid m$. A nonsingular perfect $B[0,2](q)$ set $B$ exists if and only if $\ind_g(2)\equiv 2^{i-1}\pmod{2^i}$ for some $1\le i\le a$. It is equivalent to $2^a\nmid \ind_g(2)$, which is in turn equivalent to saying that $\ord_q(2)$ is even.\end{proof}
\begin{cor}[{\cite[Corollary 3]{klove2012codes}}] Let $q\equiv 1\pmod4$ be an odd prime, then a nonsingular perfect $B[-2,2](q)$ set $B$ exists if and only if $v_2(\ord_q(2))\ge 2$.
	\label{-2-2}
\end{cor}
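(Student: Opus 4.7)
The plan is to translate the existence of a perfect $B[-2,2](q)$ splitter set into a direct-factor question in $\mathbb{Z}_{q-1}$ and then apply Corollary~\ref{cor-2n}. Write $q-1 = 2^a m$ with $m$ odd and $a \ge 2$ (using $q \equiv 1 \pmod 4$), fix a primitive root $g$ modulo $q$, and set $\alpha = \ind_g(2)$. Because $\ind_g(-1) = (q-1)/2$, the image of $[-2,2]^*$ under $\ind_g$ is
\[
A = \{0,\ \alpha,\ (q-1)/2,\ \alpha + (q-1)/2\} \subset \mathbb{Z}_{q-1},
\]
and a nonsingular perfect $B[-2,2](q)$ splitter set exists iff $A$ admits a complementer factor in $\mathbb{Z}_{q-1}$.

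The key observation is the clean factorization of the mask polynomial,
\[
f_A(x) = (1 + x^\alpha)(1 + x^{(q-1)/2}).
\]
By Corollary~\ref{cor-2n} with $n = 2$, together with the bound $|\mathfrak{M}_A| \le v_2(|A|) = 2$ that already appeared in the proof of Lemma~\ref{lem-partition}, the set $A$ is a direct factor of $\mathbb{Z}_{2^a m}$ iff $|\mathfrak{M}_A| = 2$, i.e.\ iff there are two distinct indices $j \in \{1,\dots,a\}$ with $\Phi_{2^j}(x) \mid f_A(x)$. Since a primitive $2^j$-th root of unity $\zeta$ satisfies $\zeta^\beta = -1$ exactly when $v_2(\beta) = j-1$, we have $\Phi_{2^j}(x) \mid (1 + x^\beta)$ iff $v_2(\beta) = j-1$.

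Applying this to the two factors of $f_A$: because $v_2((q-1)/2) = a-1$, the index $a$ always lies in $\mathfrak{M}_A$, while the factor $1 + x^\alpha$ contributes a second, distinct index iff $v_2(\alpha) \in \{0,1,\dots,a-2\}$. Finally, from $\ord_q(2) = (q-1)/\gcd(\alpha, q-1)$ one obtains $v_2(\ord_q(2)) = a - \min(v_2(\alpha), a)$, so the condition $v_2(\alpha) \le a-2$ is equivalent to $v_2(\ord_q(2)) \ge 2$, which is the stated criterion. The main conceptual step is spotting the factorization of $f_A$; the remaining work is routine valuation bookkeeping, which is the only mildly delicate part.
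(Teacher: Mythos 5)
Your proof is correct and starts from the same reduction as the paper --- both translate existence into the question of whether $A=\{0,\ind_g(2),\ind_g(-1),\ind_g(-2)\}$ is a direct factor of $\mathbb{Z}_{q-1}$ --- but you then diverge: the paper verifies the explicit digit-labeling condition of Corollary \ref{cor-2n} (forcing $a_{00}=0$, $a_{01}=\ind_g(-1)$, $j=a$, and reducing to $2^{a-1}\nmid\ind_g(2)$), whereas you factor the mask polynomial $f_A\equiv(1+x^{\alpha})(1+x^{(q-1)/2})\pmod{(x^{q-1}-1)}$ and count the cyclotomic divisors $\Phi_{2^j}(x)\mid f_A(x)$. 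Two caveats, neither fatal. First, the criterion you invoke, ``$A$ of size $4$ is a direct factor iff $|\mathfrak{M}_A|=2$,'' is not literally Corollary \ref{cor-2n}: the forward direction is Lemma \ref{lem-partition}, but the backward direction ($|\mathfrak{M}_A|=2\Rightarrow$ direct factor) is the claim (*) inside the proof of Theorem \ref{thm-dir-fct} combined with its ``if'' part, so you should cite that --- or, for this particular $A$, simply exhibit the labeling with $i_1=v_2(\alpha)+1$, $i_2=a$, where the required congruences are immediate because $\ind_g(-2)\equiv\ind_g(2)+2^{a-1}\pmod{2^a}$; this is essentially what the paper does. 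Second, $f_A$ equals that product only modulo $x^{q-1}-1$ (the exponent $\alpha+(q-1)/2$ may need reduction), which is harmless here since every $\Phi_{2^j}$ with $j\le a$ divides $x^{q-1}-1$, but it is worth saying. What your route buys is transparency: the equivalence $\Phi_{2^j}(x)\mid(1+x^{\beta})\iff v_2(\beta)=j-1$ turns the whole argument into valuation bookkeeping and would scale cleanly to similar small multiplier sets, while the paper's route stays entirely within the stated corollary and never appeals to the interior of the theorem's proof.
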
\begin{proof}
	Let $q=2^am+1$, where $a\ge 2, 2\nmid m$. Then a nonsingular perfect $B[-2,2](q)$ set $B$ exists if and only if $\{0, \ind_g(2), \ind_g(-1),\ind_g(-2)\}=\{a_{00},a_{01},a_{10},a_{11}\}$, where
	$2^j\mid a_{00}, 2^{i-1} \mid a_{1*}, 2^i \nmid a_{1*}, 2^{j-1}\mid a_{01}, 2^j\nmid a_{01}, 2^j\mid(a_{11}-a_{10}+2^{j-1})$
	and the $j$-th to last binary digit of $a_{10},a_{11}$ are $0,1$ respectively
	for some $1\le i<j\le a$. If it is true, then $a_{00}$ must be 0, and since $\ind_g(-1)=\frac{q-1}2=2^{a-1}m$,
	$a_{01}$ must be $\ind_g(-1)$ and $j=a$. Therefore we only require that $\{\ind_g(2),\ind_g(-2)\}=\{a_{10},a_{11}\}$. Since $\ind_g(-2)
		=\ind_g(2)\pm \frac{q-1}2$, we always have $a_{10}\equiv a_{11}+2^{a-1}\pmod{2^a}$. So the condition is equivalent to saying that
	$2^{i-1}\mid \ind_g(2), 2^i\nmid \ind_g(2)$ for some $1\le i\le a-1$, which in turn is equivalent to
	$2^{a-1}\nmid \ind_g(2)$, or $v_2(\ord_q(2))\ge 2$.
\end{proof}
\begin{cor}[{\cite[Theorems 4.4--4.5]{yuan2020existence}}] Let $q\equiv1\pmod4$ be an odd prime, then a nonsingular perfect $B[-1,3](q)$ set $B$ exists if and only  if $4\mid \ord_q(2)$ and $\ord_q(-\frac32)$ is odd.
\end{cor}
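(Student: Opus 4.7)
The approach is to translate the existence of a nonsingular perfect $B[-1,3](q)$ set into a direct factor problem in $\mathbb{Z}_{q-1}$ and then apply Corollary \ref{cor-2n} with $n=2$. Write $q-1=2^am$ with $a\ge 2$ and $m$ odd, fix a primitive root $g$ modulo $q$, and set $\alpha=\ind_g(2)$, $\beta=\ind_g(3)$. Existence of the splitter set is equivalent to
\[A:=\{\ind_g(j):j\in[-1,3]^*\}=\{0,\,(q-1)/2,\,\alpha,\,\beta\}\]
being a direct factor of $\mathbb{Z}_{q-1}$; by Corollary \ref{cor-2n} this means there exist integers $1\le i<j\le a$ and a labeling $A=\{a_{b_1b_2}:b_1,b_2\in\{0,1\}\}$ satisfying the stated digit and congruence conditions.

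First I would show that such a labeling is essentially forced. Looking at the $i$-th and $j$-th binary digits of each element, $0$ can only play the role of $a_{00}$. The element $(q-1)/2=2^{a-1}m$ has digit $0$ in positions $1,\ldots,a-1$ and digit $1$ in position $a$ (since $m$ is odd), so it must be $a_{01}$, and moreover $j=a$. Consequently $\{a_{10},a_{11}\}=\{\alpha,\beta\}$, with the identification determined by whether the $a$-th binary digit equals $0$ or $1$.

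Next I would unwind the remaining conditions of Corollary \ref{cor-2n}. Since $a_{01}=2^{a-1}m\equiv 0\pmod{2^i}$ whenever $i\le a-1$, both congruences $a_{10}\equiv 2^{i-1}\pmod{2^i}$ and $a_{11}\equiv a_{01}+2^{i-1}\pmod{2^i}$ collapse to $v_2(\alpha)=v_2(\beta)=i-1$ for some $i\le a-1$; equivalently $v_2(\alpha)\le a-2$, which via the identity $v_2(\ord_q(2))=a-\min(a,v_2(\alpha))$ is exactly $4\mid\ord_q(2)$. The final condition $a_{11}\equiv a_{10}+2^{a-1}\pmod{2^a}$ becomes $\beta-\alpha\equiv 2^{a-1}\pmod{2^a}$; since
\[\ind_g(-3/2)\equiv (\beta-\alpha)+(q-1)/2\equiv (\beta-\alpha)+2^{a-1}\pmod{2^a},\]
this is in turn equivalent to $2^a\mid\ind_g(-3/2)$, i.e., to $\ord_q(-3/2)$ being odd.

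For the converse direction, given $4\mid\ord_q(2)$ and $\ord_q(-3/2)$ odd, I would take $i=v_2(\alpha)+1$ and $j=a$; the strict inequality $v_2(\alpha)\le a-2<a-1=v_2(\beta-\alpha)$ forces $v_2(\beta)=v_2(\alpha)$ by the ultrametric property, so defining $a_{00}=0$, $a_{01}=(q-1)/2$, letting $a_{10}$ be the element of $\{\alpha,\beta\}$ whose $a$-th binary digit is $0$ and $a_{11}$ the other, all hypotheses of Corollary \ref{cor-2n} are satisfied. The main bookkeeping lies in tracking the $i$-th and $a$-th binary digits of the four elements of $A$ and verifying in both directions that the two reformulated conditions match up; once that is done, the translation between the direct-factor conditions and the order-theoretic hypotheses is a short 2-adic computation.
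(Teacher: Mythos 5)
Your proposal is correct and follows essentially the same route as the paper: translate existence into $\{0,\ind_g(-1),\ind_g(2),\ind_g(3)\}$ being a direct factor of $\mathbb{Z}_{q-1}$, observe that $0$ and $\ind_g(-1)=2^{a-1}m$ are forced to be $a_{00},a_{01}$ with $j=a$, and then reduce the remaining conditions of Corollary \ref{cor-2n} to $v_2(\ind_g(2))\le a-2$ (i.e.\ $4\mid\ord_q(2)$) together with $\ind_g(3)-\ind_g(2)\equiv 2^{a-1}\pmod{2^a}$ (i.e.\ $\ord_q(-\tfrac32)$ odd). The 2-adic bookkeeping, including the observation that $v_2(\ind_g(3))=v_2(\ind_g(2))$ is automatic under these two conditions, matches the paper's argument.
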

\begin{proof}
	Let $q=2^am+1$, where $a\ge 2, 2\nmid m$. Then a nonsingular perfect $B[-1,3](q)$ set $B$ exists if and only
	if $\{0, \ind_g(2), \ind_g(-1),\ind_g(3)\}=\{a_{00},a_{01},a_{10},a_{11}\}$ and $a_{**}$'s should be as described in the proof to Corollary \ref{-2-2}. As before, it is equivalent to
	$\{\ind_g(2),\ind_g(3)\}=\{a_{10},a_{11}\}$, which in turn is equivalent to $v_2(\ind_g(2))=v_2(\ind_g(3))<a-1$
	and $v_2(-\ind_g(-1)-\ind_g(2)+\ind_g(3))\ge a$. Since $2^a\mid (q-1)$, $-\ind_g(-1)-\ind_g(2)+\ind_g(3)\equiv
		\ind_g(-\frac{3}2)\pmod{2^a}$. So $v_2(-\ind_g(-1)-\ind_g(2)+\ind_g(3))\ge a$ if and only if $\ord_q(-\frac32)$ is odd.
	If it is the case, then $v_2(\ind_g(3))=v_2(\ind_g(2))$ always holds. Moreover, $v_2(\ord_q(2))<a-1$ if and only if $v_2(\ord_q(2))>1$, or equivalently $4\mid\ord_q(2)$. So the necessary and sufficient condition for the existence of a perfect $B[-1,3]^*(q)$ set is that $4\mid \ord_q(2)$ and $\ord_q(-\frac32)$ is odd.
\end{proof}
\begin{remark}
	In \textnormal{\cite{yuan2020existence}}, the case of $B[-1,3](q)$ set was divided into two cases: $q\equiv 1\pmod8$ and $q\equiv 5\pmod 8$. The existence condition for perfect $B[-1,3](q)$ sets was stated as:
	(i)  $\ord_q(-\frac32)$ is odd and $4\mid\ord_q(2)$ if
	$q\equiv 1\pmod8$ (which is its Theorem 4.5), (ii) $6$ is a quartic residue modulo $q$ if $q\equiv 5\pmod 8$ (its Theorem 4.4).

	Here we show that the conditions in these two theorems are equivalent, that is, when $q\equiv 5\pmod 8$, $\ord_q(-\frac32)$ is odd and $4\mid\ord_q(2)$ if and only if $6$ is a quartic residue modulo $q$, and hence we do not need a different condition for the case $q\equiv 5\pmod 8$. In fact, when $q\equiv 5\pmod 8$, $2$ is not a quadratic residue modulo $q$ (\textnormal{{\cite[Theorem 81]{nagell2021introduction}}}), so $[\mathbb{Z}_q^\times:\langle2\rangle]$ is odd, and $$\ord_q(2)=|\langle2\rangle|\equiv |\langle2\rangle|[\mathbb{Z}_q^\times:\langle2\rangle]=|\mathbb{Z}_q^\times|\equiv0\pmod4,$$ thus $4\mid\ord_q(2)$ always holds. Since $4\mid(q-1)$ but $8\nmid(q-1)$, $\ord_q(-\frac32)$ is odd if and only if
	$4\mid [\mathbb{Z}_q^\times:\langle-\frac32\rangle]$, which means that $-\frac32$ is a quartic residue. However when $q\equiv 5\pmod 8$, $(-4)^{\frac{q-1}4}=(-1)^{\frac{q-1}4}2^{\frac{q-1}2}=-2^{\frac{q-1}2}\equiv 1\pmod q$, so (by, for example, \textnormal{\cite[Theorem 69]{nagell2021introduction}}) $-4$ is a quartic residue. Hence $-\frac32$ is a quartic residue if and only if $6=(-4)(-\frac32)$ is, and we see that (i) and (ii) are equivalent when $q\equiv 5\pmod 8$.
\end{remark}
\begin{thm}
	Suppose $k$ is a power of an odd prime and
	$q\equiv 1\pmod{2k}$ is a prime, then there exists a nonsingular perfect $B[-k,k](q)$ set if and only if
	$\overline{A}=\{\ind_g(i)\pmod{\frac{q-1}2}:i\in[1,k]\}$ is a direct factor of $\mathbb{Z}_{\frac{q-1}2}$.
	\label{cor-2k}
\end{thm}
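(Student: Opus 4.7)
The plan is to translate the splitter-set condition into a group factorization of $\mathbb{Z}_{q-1}$ via the discrete logarithm, and then descend this to a factorization of $\mathbb{Z}_{(q-1)/2}$ through the quotient by the unique subgroup of order $2$. By taking $\ind_g$, a nonsingular perfect $B[-k,k](q)$ set $B$ exists if and only if $\mathbb{Z}_{q-1}=A+C$ is a direct factorization, where $A=\{\ind_g(j):j\in[-k,k]^*\}$ and $C=\ind_g(B)$. Using $\ind_g(-1)=(q-1)/2$, I would write $A=\overline{A}'+H$ with $\overline{A}'=\{\ind_g(i):i\in[1,k]\}$ and $H=\{0,(q-1)/2\}$, which is the kernel of the natural projection $\pi:\mathbb{Z}_{q-1}\to\mathbb{Z}_{(q-1)/2}$. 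Since $q$ is a prime with $q\equiv 1\pmod{2k}$ we have $q\ge 2k+1$, so for every $i,j\in[1,k]$ one gets $i+j<q$ and hence $i\ne -j$ in $\mathbb{Z}_q^\times$; this forces $\pi|_{\overline{A}'}$ to be injective, so $|\overline{A}|=k$, $|A|=2k$, and crucially $A=\pi^{-1}(\overline{A})$ is a union of $H$-cosets.

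For the ``if'' direction, given a direct factorization $\mathbb{Z}_{(q-1)/2}=\overline{A}+\overline{C}$, I lift $\overline{C}$ to a set $C\subset\mathbb{Z}_{q-1}$ for which $\pi|_C$ is a bijection onto $\overline{C}$; then $A+C=\pi^{-1}(\overline{A})+C=\pi^{-1}(\overline{A}+\overline{C})=\mathbb{Z}_{q-1}$, and the cardinality match $|A||C|=2k\cdot(q-1)/(2k)=q-1$ forces the sum to be direct. For the ``only if'' direction, starting from a direct factorization $\mathbb{Z}_{q-1}=A+C$, the key step is to prove $\pi|_C$ is injective: if $c_1\ne c_2$ in $C$ satisfied $\pi(c_1)=\pi(c_2)$, then $c_1-c_2=(q-1)/2\in H$, so $A+c_1=A+(q-1)/2+c_2=A+c_2$ by the $H$-invariance of $A$, contradicting directness. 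Hence $|\pi(C)|=|C|=(q-1)/(2k)$, which together with $\overline{A}+\pi(C)=\pi(A+C)=\mathbb{Z}_{(q-1)/2}$ and $|\overline{A}|\cdot|\pi(C)|=(q-1)/2$ forces this latter sum to be direct by counting.

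I expect the main subtlety to be exactly this injectivity of $\pi|_C$ in the ``only if'' direction, which ultimately rests on $A$ being $H$-invariant, and thus on the inequality $q>2k$. Everything else is essentially size-matching across a short exact sequence. The prime-power hypothesis on $k$ is not needed for the present equivalence itself; I expect it becomes relevant only when one subsequently invokes Theorem~\ref{thm-dir-fct} to decide whether $\overline{A}$ is a direct factor of $\mathbb{Z}_{(q-1)/2}$.
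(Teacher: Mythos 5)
Your proof is correct and takes essentially the same route as the paper: translate via $\ind_g$ into a factorization of $\mathbb{Z}_{q-1}$ and descend through the index-two subgroup $\{0,\frac{q-1}{2}\}$ using $\ind_g(-1)=\frac{q-1}{2}$ (the paper phrases the descent through its coset-representative lemma for a subgroup factor, while you phrase it via $H$-invariance of $A$, injectivity of $\pi|_C$, and counting, but this is the same idea). Your closing observation is also accurate: the hypothesis that $k$ is a prime power plays no role in this equivalence and only matters when Theorem~\ref{thm-dir-fct} is later applied to decide whether $\overline{A}$ is a direct factor.
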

\begin{proof}
	If $B$ is a perfect $B[-k,k](q)$ set, then $$\mathbb{Z}_q^\times=[1,k]\cdot\{\pm1\}\cdot B.$$
	Let $g$ be a primitive root modulo $q$. Let $A=\{\ind_g(i):i\in[1,k]\}, C=\{\ind_g(i):i\in B\}$, then
	$$\mathbb{Z}_{q-1}=A+\left\{0,\frac{q-1}2\right\}+C$$ will be a factorization. Since $\{0,\frac{q-1}2\}$ is a subgroup of $\mathbb{Z}_{q-1}$, then $A+C$ is the complete residue system modulo $\frac{q-1}2$.
	That is, $\mathbb{Z}_{\frac{q-1}2}=\overline{A}+\overline{C}$ is a factorization, where
	$\overline{A}=\left\{a\pmod{\frac{q-1}2}:a\in A\right\},\overline{C}=\left\{c\pmod{\frac{q-1}2}:c\in C\right\}$. On the other hand, suppose that $\mathbb{Z}_{\frac{q-1}2}=\overline{A}+\overline{C}$ is a factorization. If $x,y\in A+C$ such that $x\equiv y+\frac{q-1}2\pmod{(q-1)}$, then $x\equiv y\pmod{\frac{q-1}2}$ and hence $x=y$. So $\mathbb{Z}_{q-1}=A+C+\{0,\frac{q-1}2\}$ is a factorization, and
	$B=\{g^i:i\in C\}$ is a perfect
	$B[-k,k](q)$ set.
\end{proof}
Whether $\overline{A}$ is a direct factor of $\mathbb{Z}_{\frac{q-1}2}$ can be decided by Theorem \ref{thm-dir-fct}. We get an alternative proof of {\cite[Theorem 7]{zhang2018nonexistence}}.
\begin{cor}
	Let $q,k$ be odd primes, $g$ be a primitive root modulo $q$ and
	$\mu=\gcd(\frac{q-1}2,\{\ind_g(j):j\in[-1,k]^*\})$. Then there exists a perfect
	$B[-k,k](q)$ set if and only if
	$q\equiv 1\pmod{2\mu k}$ and $\left|\{\ind_g(j)/\mu \pmod{k}:j\in [1,k]\}\right|=k$.
	\label{-k-k}
\end{cor}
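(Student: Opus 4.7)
The plan is to mirror the earlier proof of the analogous $B[0,k](q)$ corollary, but with Theorem \ref{cor-2k} in place of a direct application of Theorem \ref{thm-dir-fct} to $\mathbb{Z}_{q-1}$. By Theorem \ref{cor-2k}, a perfect $B[-k,k](q)$ set exists iff $\overline{A} = \{\ind_g(i) \bmod \frac{q-1}{2} : i \in [1,k]\}$ is a direct factor of $\mathbb{Z}_{(q-1)/2}$. I would begin by checking that $|\overline{A}| = k$: if $\ind_g(i) \equiv \ind_g(i') \pmod{(q-1)/2}$ then $i \equiv \pm i' \pmod{q}$, which with $1 \le i,i' \le k$ and $2k < q$ (since $2k \mid q-1$) forces $i = i'$.

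Next, I would apply Theorem \ref{thm-dir-fct} with $p = k$, $n = 1$, and $G = \mathbb{Z}_{(q-1)/2}$. Since $|\overline{A}| = k = k^1$, the theorem says $\overline{A}$ is a direct factor iff there is some $1 \le b \le v_k(\tfrac{q-1}{2}) = v_k(q-1)$ (using $k$ odd) such that $\overline{A} \pmod{k^b}$ is a single arithmetic progression of length $k$ and common difference $k^{b-1}$. Because $\ind_g(1) = 0 \in \overline{A}$, we may take the base point of this progression to be $0$, so the condition becomes
\[
\overline{A} \pmod{k^b} \;=\; \{0,\, k^{b-1},\, 2k^{b-1},\, \ldots,\, (k-1)k^{b-1}\}.
\]

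I would then translate this into the corollary's statement. Using $\ind_g(-1) = \tfrac{q-1}{2}$ and $\ind_g(1) = 0$, the definition of $\mu$ simplifies to $\mu = \gcd(\tfrac{q-1}{2}, \ind_g(2), \ldots, \ind_g(k))$. The condition that every element of $\overline{A}$ reduces to a multiple of $k^{b-1}$ modulo $k^b$ is equivalent to $k^{b-1} \mid \ind_g(i)$ for every $i \in [1,k]$, i.e.\ $k^{b-1} \mid \mu$. The key observation is that $b$ cannot be taken smaller than $v_k(\mu)+1$: if $b \le v_k(\mu)$ then $k^b \mid \ind_g(i)$ for all $i$, forcing $\overline{A} \pmod{k^b} = \{0\}$ and violating the bijection requirement. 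Hence the only candidate is $b = v_k(\mu)+1$, and the constraint $b \le v_k(\tfrac{q-1}{2})$ becomes $v_k(\mu k) \le v_k(\tfrac{q-1}{2})$, i.e.\ $2\mu k \mid (q-1)$, i.e.\ $q \equiv 1 \pmod{2\mu k}$.

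Finally, with $b = v_k(\mu)+1$ fixed, writing $\mu = k^{b-1}\mu'$ with $\gcd(\mu',k) = 1$, one has $\ind_g(i)/k^{b-1} \equiv \mu' \cdot (\ind_g(i)/\mu) \pmod{k}$, so since $\mu'$ is a unit modulo $k$, the set $\{\ind_g(i)/k^{b-1} \bmod k : i \in [1,k]\}$ has cardinality $k$ iff $\{\ind_g(i)/\mu \bmod k : i \in [1,k]\}$ has cardinality $k$. Combining these two equivalences yields the stated condition. The main obstacle is the step pinning down the unique admissible $b$; once it is identified as $v_k(\mu)+1$, both parts of the stated criterion fall out by elementary manipulation, exactly as in the $B[0,k](q)$ case.
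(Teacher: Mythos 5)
Your proposal is correct and follows essentially the same route as the paper: reduce via Theorem \ref{cor-2k} to $\overline{A}$ being a direct factor of $\mathbb{Z}_{(q-1)/2}$, apply Theorem \ref{thm-dir-fct} with $p=k$, $n=1$ (using $\ind_g(1)=0$ to anchor the arithmetic progression at $0$), pin the admissible exponent down to $v_k(\mu)+1$, and convert between $k^{a-1}$ and $\mu$ via the unit $\mu/k^{a-1}$ modulo $k$. Your explicit checks that $|\overline{A}|=k$ and that the exponent is forced are details the paper leaves implicit, but the argument is the same.
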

\begin{proof}
	By Theorem \ref{cor-2k}, there exists a perfect
	$B[-k,k](q)$ set if and only if $\overline{A}=\{\ind_g(i)\bmod{\frac{q-1}2}:i\in[1,k]\}$ satisfies the condition in Theorem \ref{thm-dir-fct}. Since $k$ is an odd prime and $\ind_g(1)=0\in \overline{A}$, the condition can be described as: there exists a positive integer $a\le v_k(q-1)$ such that elements of $\overline{A}$ modulo $k^a$ are exactly
	$$0,k^{a-1},2k^{a-1},\ldots,(k-1)k^{a-1}.$$ If that is the case, then
	$$v_k(\mu)=a-1<v_k(q-1)=v_k\left(\frac{q-1}2\right),$$ $\mu k \mid \frac{q-1}2$ and hence $2\mu k\mid (q-1)$, and since
	$\gcd(\mu/k^{a-1},k)=1$,
	\[\begin{aligned}  & |\{\ind_g(j)/\mu \pmod{k}:j\in [1,k]\}| \\=&|\{(\ind_g(j)/k^{a-1})(\mu/k^{a-1})^{-1} \pmod{k}:j\in [1,k]\}|
                \\=&|\{0,1,\ldots,k-1\}|=k.\end{aligned}\] On the other hand, if $2\mu k\mid (q-1)$ and $|\{\ind_g(j)/\mu \pmod{k}:j\in [1,k]\}|=k$, then $v_k(\mu)<v_k(q-1)$. Let $a=v_k(\mu)+1$, then $\gcd(\mu/k^{a-1},k)=1$, $$
		\begin{aligned}
			&|\{\ind_g(j)/k^{a-1} \pmod{k}:j\in [1,k]\}|\\=&|\{(\ind_g(j)/\mu)(\mu/k^{a-1}) \pmod{k}:j\in [1,k]\}|=k,\end{aligned}$$ which implies that $$\{\ind_g(j)/k^{a-1} \pmod{k}:j\in [1,k]\}=\{0,1,\ldots,k-1\},$$ hence $\overline{A}=\{\ind_g(i)\pmod{\frac{q-1}2}:i\in[1,k]\}$ satisfies the condition in Theorem \ref{thm-dir-fct}.
\end{proof}
By Corollary \ref{-k-k}, the existence condition of ${B[-3,3](q)}$ sets can be established.
\begin{cor}[Existence condition of ${B[-3,3](q)}$ sets]
	Let $q\equiv 1\pmod 6$ be a prime. Then a perfect $B[-3,3](q)$ set exists if and only if $2\notin\langle 6,8\rangle\subset\mathbb{Z}_q^\times$.
\end{cor}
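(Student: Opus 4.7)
The plan is to apply Corollary \ref{-k-k} with $k=3$ and then match the resulting criterion against the subgroup condition $2\in\langle 6,8\rangle$ by translating through the discrete logarithm and comparing $3$-adic valuations. Write $u=\ind_g(2)$, $v=\ind_g(3)$, and $c=v_3(q-1)$. Using $\ind_g(-1)=\tfrac{q-1}{2}$ and $\ind_g(1)=0$, the quantity $\mu$ of Corollary \ref{-k-k} simplifies to $\gcd(\tfrac{q-1}{2},u,v)$, and the criterion there reads: $6\mu\mid q-1$ and the residues $0$, $u/\mu$, $v/\mu$ are pairwise distinct modulo $3$.

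First, I would unwind this criterion $3$-adically. Since $\mu\mid\gcd(u,v)$ and $q$ is odd, the divisibility $6\mu\mid q-1$ is equivalent to $v_3(\mu)<c$, while the distinctness condition forces $v_3(u)=v_3(v)=v_3(\mu)$ together with $3\mid(u/\mu+v/\mu)$, i.e.\ $v_3(u+v)>v_3(\mu)$. Combining, the two conditions collapse to the single clean criterion
\[
v_3(u+v)>v_3(u)\quad\text{and}\quad v_3(u)<c,
\]
the equality $v_3(u)=v_3(v)$ being automatic from the strict jump $v_3(u+v)>v_3(u)$.

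Second, I would rewrite $2\in\langle 6,8\rangle$ through $\ind_g$. Because $\ind_g(6)=u+v$ and $\ind_g(8)=3u$, the image of $\langle 6,8\rangle$ in $\mathbb{Z}_{q-1}$ is the cyclic subgroup generated by $d:=\gcd(q-1,u+v,3u)$, of index $d$; by Lemma \ref{lem-in} we obtain $2\in\langle 6,8\rangle$ if and only if $d\mid u$. A prime-by-prime comparison then finishes the proof: for every prime $p\neq 3$ one has $v_p(d)\le v_p(u)$ automatically (as $v_p(u)$ is among the arguments of the gcd), while $v_3(d)=\min(c,v_3(u+v),v_3(u)+1)$, and the harmless term $v_3(u)+1$ drops out so that $v_3(d)>v_3(u)$ is equivalent to $\min(c,v_3(u+v))>v_3(u)$---exactly the criterion derived above. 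Thus $d\nmid u$ if and only if a perfect $B[-3,3](q)$ set exists, i.e.\ if and only if $2\notin\langle 6,8\rangle$.

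I expect the delicate piece to be the $3$-adic unwinding in the first step---verifying that the two hypotheses of Corollary \ref{-k-k} fuse into the single inequality $v_3(u)<\min(c,v_3(u+v))$; both the translation through $\ind_g$ and the final prime-by-prime comparison are then routine applications of Lemma \ref{lem-in} and elementary properties of $v_3$.
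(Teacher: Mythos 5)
Your proposal is correct and takes essentially the same route as the paper: it reduces, via the paper's own factorization criteria (Corollary \ref{-k-k}, i.e.\ Theorem \ref{cor-2k} combined with Theorem \ref{thm-dir-fct}), to the $3$-adic condition $v_3(\ind_g 2)=v_3(\ind_g 3)<\min\{v_3(q-1),\,v_3(\ind_g 2+\ind_g 3)\}$, and then translates $2\in\langle 6,8\rangle$ through Lemma \ref{lem-in} into $\gcd(q-1,\ind_g 6,\ind_g 8)\mid \ind_g 2$, exactly as the paper does. The only difference is cosmetic: your closing computation $v_3(d)=\min\{v_3(q-1),v_3(\ind_g2+\ind_g3),v_3(\ind_g2)+1\}$ streamlines the paper's two-case contradiction argument, but it is not a different method.
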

\begin{proof}Let $q=3^am+1,3\nmid m$.
	By Corollary \ref{cor-2k}, a perfect $B[-3,3](q)$ set exists if and only if $\{0,\ind_g(2),\ind_g(3)\}$ is a factor of $\mathbb{Z}_{\frac{q-1}2}$.
	By Theorem \ref{thm-dir-fct}, the necessary and  sufficient condition is  $v_3(\ind_g(2))=v_3(\ind_g(3))=v_3(\ind_g(3)-\ind_g(2))<a$.

	We can prove that it is equivalent to $2\notin\langle 6,8\rangle$. We know from
	Lemma \ref{lem-in} that  $2\notin\langle 6,8\rangle$ is equivalent to
	$[\mathbb{Z}_q^\times:\langle 6,8\rangle]\nmid \ind_g(2)$. However
	\[[\mathbb{Z}_q^\times:\langle 6,8\rangle]=\gcd(\ind_g(6),\ind_g(8),q-1).\]
	Since $\ind_g(6)\equiv\ind_g(2)+\ind_g(3)\pmod{(q-1)}, \ind_g(8)\equiv 3\ind_g(2)\pmod{(p-1)}$, we have
	$$\begin{aligned}&\gcd(\ind_g(6),\ind_g(8),q-1)\\=&\gcd(\ind_g(2)+\ind_g(3),3\ind_g(2),q-1)=:d.\end{aligned}$$ Now we prove that $d\nmid \ind_g(2)$ is equivalent to our necessary and sufficient condition for the existence of $B[-3,3](q)$ sets. Suppose that $d\mid \ind_g(2)$, then $$\begin{aligned}d & =\gcd(\ind_g(2)+\ind_g(3),3\ind_g(2),q-1,\ind_g(2)) \\&=\gcd(\ind_g(2)+\ind_g(3),\ind_g(2),q-1)\\&
               =\gcd(\ind_g(3),\ind_g(2),q-1).\end{aligned}$$ If $v_3(\ind_g(2))=v_3(\ind_g(3))=v_3(\ind_g(3)-\ind_g(2))=:i<a$, then
	$$\begin{aligned}&\ind_g(2)+\ind_g(3)\\\equiv &\ind_g(3)-\ind_g(2)+\ind_g(2)\cdot 2\\\equiv & {}3\ind_g(2)\equiv 0 \pmod{3^{i+1}}.\end{aligned}$$
	Then $3^{i+1}\mid \ind_g(2)+\ind_g(3),3^{i+1}\mid \ind_g(2),3^{i+1}\mid (q-1)$. Hence $3^{i+1}\mid d\mid \ind_g(3)$, a contradiction. So $v_3(\ind_g(2))=v_3(\ind_g(3))=v_3(\ind_g(3)-\ind_g(2))<a$ cannot hold.

	Then suppose that $d\nmid \ind_g(2)$.
	Let $i:=v_3(d)$.
	If $v_3\left(\frac{3\ind_g(2)}d\right)\ge 1$, then $\frac{\ind_g(2)}d$ is an integer, and $d\mid \ind_g(2)$, a contradiction.
	So $v_3\left(\frac{3\ind_g(2)}d\right)=0$,
	and $i=v_3(d)=v_3(3\ind_g(2))=v_3(\ind_g(2))+1$
	. We have $i\le v_3(q-1)\le a$ because
	$d\mid (q-1)$. Since $d\mid (\ind_g(3)+\ind_g(2))$,
	$$\begin{aligned}&\ind_g(3)-\ind_g(2)\\\equiv& \ind_g(3)+\ind_g(2)-2\ind_g(2)\\\equiv& 3^{i-1}\pmod{3^i}.\end{aligned}$$
	Hence $v_3(\ind_g(3)=v_3(\ind_g(2))=v_3(\ind_g(3)-\ind_g(2))=i-1<i\le a$, which completes our proof.
\end{proof}
Next, a characterization of the splitter set $B$ is described. We prove that when $|A|$ is a power of prime, $B$ will have some periodicity:
\begin{thm}
	If $G=A+B$ is a factorization of a cyclic group, $|A|=p^n$ is a power of prime. Suppose $i_n$ is the maximal integer such that
	$\Phi_{p^{i_n}}(x)\mid f_A(x)$. Then $p^{i_n}\mid [G:\pi(B)]$, that is, any period of $B$ is divisible by $p^{i_n}$.
	\label{thm-prd}
\end{thm}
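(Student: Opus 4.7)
The plan is to use mask polynomials to convert the periodicity statement about $B$ into a divisibility statement about cyclotomic polynomials, and then invoke Lemma~\ref{lem-partition}. Set $N = |G|$ and write $N = p^a m$ with $p \nmid m$; since $|A| = p^n$ must divide $N$, we have $a \ge n \ge 1$, so Lemma~\ref{lem-partition} applies and in particular $i_n \in \{1, 2, \ldots, a\}$. An element $g \in \mathbb{Z}_N$ is a period of $B$ precisely when $x^g f_B(x) \equiv f_B(x) \pmod{x^N - 1}$, or equivalently $(x^g - 1) f_B(x) \equiv 0 \pmod{x^N - 1}$.

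Because $p^{i_n} \mid N$, the cyclotomic factor $\Phi_{p^{i_n}}(x)$ divides $x^N - 1$, and therefore divides $(x^g - 1) f_B(x)$. The critical input is then $\Phi_{p^{i_n}}(x) \nmid f_B(x)$: this follows immediately from Lemma~\ref{lem-partition}, since $\{\mathfrak{M}_A, \mathfrak{M}_B\}$ is a partition of $\{1, 2, \ldots, a\}$ and $i_n \in \mathfrak{M}_A$ by definition. Irreducibility of $\Phi_{p^{i_n}}(x)$ over $\mathbb{Q}$ then forces $\Phi_{p^{i_n}}(x) \mid (x^g - 1)$; evaluating at a primitive $p^{i_n}$-th root of unity yields $p^{i_n} \mid g$.

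Since every $g \in \pi(B)$ satisfies $p^{i_n} \mid g$, the subgroup $\pi(B)$ is contained in the unique subgroup of index $p^{i_n}$ in $\mathbb{Z}_N$, namely $\langle p^{i_n}\rangle$; by Lemma~\ref{lem-in} this is precisely the statement $p^{i_n} \mid [G : \pi(B)]$. I do not foresee any serious obstacle: all the required pieces (the mask-polynomial encoding of factorizations, the partition property of Lemma~\ref{lem-partition}, and the Lemma~\ref{lem-in} dictionary between divisibility and subgroup inclusion) are already in place. The only bookkeeping point is the check $i_n \le a$, which ensures that $\Phi_{p^{i_n}}(x)$ genuinely divides $x^N - 1$ and allows the irreducibility argument to fire.
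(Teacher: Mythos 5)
Your proof is correct, but it follows a genuinely different route from the paper's. The paper writes $B=B'+\pi(B)$ with $B'$ a set of coset representatives, so that $G=A+B'+\pi(B)$ and $\overline{A}=A\bmod{[G:\pi(B)]}$ becomes a direct factor of $\mathbb{Z}_{[G:\pi(B)]}$; it then applies Lemma~\ref{lem-partition} \emph{in that smaller group} and a counting argument ($|\mathfrak{M}_{\overline{A}}|=n=|\mathfrak{M}_A|$, hence $\mathfrak{M}_A=\mathfrak{M}_{\overline{A}}$) to force $i_n\le v_p([G:\pi(B)])$. You instead fix a single period $g$, encode $g+B=B$ as $(x^N-1)\mid(x^g-1)f_B(x)$, apply Lemma~\ref{lem-partition} once to the original factorization $G=A+B$ to get $\Phi_{p^{i_n}}(x)\nmid f_B(x)$, and use that $\Phi_{p^{i_n}}$ is irreducible (hence prime) in $\mathbb{Q}[x]$ to conclude $\Phi_{p^{i_n}}(x)\mid(x^g-1)$, i.e.\ $p^{i_n}\mid g$; passing from ``every period is divisible by $p^{i_n}$'' to $p^{i_n}\mid[G:\pi(B)]$ via Lemma~\ref{lem-in} is fine. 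Your argument is more local and avoids constructing the quotient factorization, at the price of not recording the stronger fact the paper gets for free, namely that the whole set $\mathfrak{M}_A$ survives reduction modulo $[G:\pi(B)]$, which is the form that meshes with Theorem~\ref{thm-dir-fct} and the constructions elsewhere. One bookkeeping remark: your claim that $i_n\in\{1,\dots,a\}$ does not literally follow from Lemma~\ref{lem-partition}, which only describes the divisors $\Phi_{p^j}$ with $j\le a$; the paper tacitly reads $i_n=\max\mathfrak{M}_A$ as well. If you want the literal statement, note that the polynomials $\Phi_{p^j}$ are pairwise coprime and satisfy $\Phi_{p^j}(1)=p$, so any divisor $\Phi_{p^{j'}}(x)\mid f_A(x)$ with $j'\notin\mathfrak{M}_A$ would give $p^{\,n+1}\mid f_A(1)=p^n$, a contradiction; hence $i_n\le a$ indeed holds and your use of $\Phi_{p^{i_n}}(x)\mid(x^N-1)$ is justified.
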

\begin{proof}
	We know that $B$ is the union of some cosets of $\pi(B)$ in $G$.
	Let $B'=\{b_1,\ldots,b_l\}\subset G$ be a set of representatives of cosets of $\pi(B)$ such that
	$$B=\bigsqcup_{i=1}^l (b_i+\pi(B)).$$ Then $B=B'+\pi(B)$,
	$G=A+B'+\pi(B)$, and $A+B'$ is the set of representatives of all cosets of $\pi(B)$ in $G$. So
	$$A+B'\equiv \{0,1,\ldots, [G:\pi(B)]-1\} \pmod{[G:\pi(B)]},$$
	that is, $\overline{A}:=A\pmod{[G:\pi(B)]}$ is a direct factor of
	$\mathbb{Z}_{[G:\pi(B)]}$.
	By Lemma \ref{lemma-divisors},
	$\mathfrak{M}_{\overline{A}}:=\{j:\Phi_{p^j}(x)\mid f_{\overline{A}}(x)\}\subset [1,b]$, where
	$b=v_p([G:\pi(B)])$. Since
	$$
		f_{\overline{A}}(x)\equiv f_A(x)\pmod{(x^{[G:\pi(B)]}-1)}
	$$ and for $1\le j\le b$,
	$\Phi_{p^j}(x)\mid (x^{[G:\pi(B)]}-1)$,  hence $\mathfrak{M}_{\overline{A}}\subset \mathfrak{M}_A$, and by noting that
	$|\mathfrak{M}_{\overline{A}}|=n=|\mathfrak{M}_A|$, we have $\mathfrak{M}_A=\mathfrak{M}_{\overline{A}}$.
	Therefore $i_n=\max \mathfrak{M}_A=\max\mathfrak{M}_{\overline{A}}\le b$, that is, $p^{i_n}\mid [G:\pi(B)]$.
\end{proof}
\begin{remark}
	We have already seen in the proof of Theorem \ref{thm-dir-fct} that we can
	construct a splitter set $B$ such that $[G:\pi(B)]=p^{i_n}$.
\end{remark}
\section{Existence of splitting with $|A|=8$}
As an application of Corollary \ref{cor-2n}, let's deal with the case $|A|=8$, i. e. $n=3$. Remember that we write $i=i_1,j=i_2,k=i_3$. Notice that since $i<j<k\le n$, $i\le n-2, j\le n-1$.
The $2$-adic valuations of elements in $\{a_{100},a_{101},a_{110},a_{111}\}$ are $(i-1)$, those of elements in $\{a_{010},a_{011}\}$ are $(j-1)$, and $v_2(a_{001})=k-1$.

First, we reprove the existence result for $B[-4,4](q)$ splitter sets:
\begin{thm}[{\cite[Theorem III.3]{ye2020some}}]
	Let $q$ be an odd prime, $q\equiv1\pmod 8$. Nonsingular perfect $B[-4,4](q)$ splitter sets exist if and only if  $\pm 4 \notin\langle 6,16\rangle$.
\end{thm}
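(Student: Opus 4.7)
The plan is to apply Corollary \ref{cor-2n} to the $8$-element set $A' = \{\ind_g(j) : j \in [-4,4]^*\}$ in $\mathbb{Z}_{q-1} = \mathbb{Z}_{2^a m}$ (where $a \ge 3$ since $q \equiv 1 \pmod 8$), reduce to a clean two-index labeling problem in $\mathbb{Z}_{(q-1)/2}$, and translate the resulting structural condition into $\pm 4 \notin \langle 6, 16\rangle$ via Lemma \ref{lem-in}.

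A perfect $B[-4,4](q)$ splitter set exists iff $A'$ is a direct factor of $\mathbb{Z}_{q-1}$, equivalently, by Corollary \ref{cor-2n}, admits a labeling $\{a_{b_1b_2b_3}\}$ for some $1\le i<j<k\le a$. Taking $a_{000}=0=\ind_g(1)$, the unique labeled element of valuation $k-1$ must be $a_{001}$; since $\ind_g(-1) = (q-1)/2 = 2^{a-1}m$ attains the maximal possible valuation $a-1$ in $\mathbb{Z}_{q-1}$, I would conclude $a_{001} = \ind_g(-1)$ and $k=a$. Using the remark after Theorem \ref{thm-dir-fct} that the $8$ elements of $A'$ have distinct residues mod $2^a$, a short check (exploiting $A' = A'' + \{0, \ind_g(-1)\}$ as a direct sum, where $A'' := \{0, \ind_g(2), \ind_g(3), \ind_g(4)\}$) shows that the four labeling pairs $\{a_{b_1b_20}, a_{b_1b_21}\}$ are forced to coincide with the four natural pairs $\{x, x+\ind_g(-1)\}$, $x \in A''$. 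Hence the problem reduces to producing a two-index labeling, with $1 \le i < j \le a-1$, of $\overline{A} := A'' \pmod{(q-1)/2} = \{0, \alpha, \beta, 2\alpha\}$ in $\mathbb{Z}_{(q-1)/2}$, where $\alpha, \beta$ denote the reductions of $\ind_g(2), \ind_g(3)$.

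A quick case analysis on which of $\alpha, \beta, 2\alpha$ plays the role of $a_{01}$ (the unique slot of valuation $j-1$), leveraging $v_2(2\alpha) = v_2(\alpha)+1 \ne v_2(\alpha)$ to kill the cases $a_{01}\in\{\alpha,\beta\}$, forces $a_{01}=2\alpha$. This collapses the labeling conditions to: $j=i+1$ together with $v_2(\alpha) = v_2(\beta) = i-1$ and $v_2(\alpha-\beta) = i$ for some $1 \le i \le a-2$.

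The main obstacle is the final translation into $\pm 4 \notin \langle 6, 16\rangle$. By Lemma \ref{lem-in}, $[\mathbb{Z}_q^\times : \langle 6, 16\rangle] = d := \gcd(\ind_g(2)+\ind_g(3),\, 4\ind_g(2),\, q-1)$, so the subgroup condition becomes $d \nmid 2\ind_g(2)$ and $d \nmid 2\ind_g(2) + (q-1)/2$. Starting from the labeling conditions, a direct $2$-adic computation yields $v_2(d) = i+1$, which strictly exceeds $v_2(2\ind_g(2)) = i$ and, since $i+1 \le a-1$, also $v_2(2\ind_g(2) + (q-1)/2) = i$, delivering both non-divisibilities. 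For the converse, the pair of non-divisibilities will force $v_2(d) > v_2(2\ind_g(2))$; unwinding the $\gcd$ then returns the labeling conditions. The delicate part will be reconciling the $v_2$-computations across $\mathbb{Z}_{q-1}$ and $\mathbb{Z}_{(q-1)/2}$, and handling a possible boundary case where $v_2(d) = a$ in which the two non-divisibility conditions could a priori split apart.
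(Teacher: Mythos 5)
Your proposal is correct and follows essentially the same route as the paper: apply the Corollary \ref{cor-2n} labeling criterion, force $a_{000}=0$, $a_{001}=\ind_g(-1)$, $k=a$, place $\{\ind_g(\pm4)\}$ at the middle level, collapse to the valuation conditions $v_2(\ind_g(2))=v_2(\ind_g(3))$, $v_2(\ind_g(3)-\ind_g(2))=v_2(\ind_g(4))<v_2(q-1)-1$, and then perform the same $\gcd$/$2$-adic translation to $\pm4\notin\langle6,16\rangle$, including the boundary case handled via the second non-divisibility. The only difference is cosmetic: you quotient by $\{0,\tfrac{q-1}{2}\}$ and work with a two-index labeling in $\mathbb{Z}_{(q-1)/2}$ (in the spirit of Theorem \ref{cor-2k}), whereas the paper keeps the full three-index labeling in $\mathbb{Z}_{q-1}$.
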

\begin{proof}
	Suppose that $q=2^nm+1$ satisfies the condition, where $2\nmid m$, $n\ge 3$, and fix a primitive root $g$, then
	$\ind_g(-1)=2^{n-1}m\equiv 2^{n-1}\pmod{2^n}$. Hence
	$k=n,a_{000}=\ind_g(1)=0,a_{001}=\ind_g(-1)$.
	Since $v_2(a_{1**})=i-1, v_2(a_{01*})=j-1\le n-2$,
	thus $$\begin{aligned}
			v_2(\ind_g(\pm2)) & \le n-2, \\
			v_2(\ind_g(\pm3)) & \le n-2, \\
			v_2(\ind_g(\pm4)) & \le n-2.
		\end{aligned}$$
	We know that $$
		\ind_g(4)=\begin{cases}2\ind_g(2),&2\ind_g(2)<2^{n}m,\\2\ind_g(2)-2^{n}m,&2\ind_g(2)\ge2^{n}m,\end{cases}$$ hence $v_2(\ind_g(4))\ge v_2(\ind_g(2))+1$ always holds.
	Since
	$$
		\ind_g(-2)\equiv\ind_g(2)+\ind_g(-1)\pmod{(q-1)},$$
	we have
	$$
		\ind_g(-2)\equiv\ind_g(2)+2^{n-1}\pmod{2^n};
	$$
	and because $v_2(\ind_g(-1))=n-1>v_2(\ind_g(2))$, we get $v_2(\ind_g(2))=v_2(\ind_g(-2))=i-1\le n-3$, $v_2(\ind_g(4))=j-1\le n-2$, that is, $\ind_g(4)\equiv 2^{j-1}=a_{000}+2^{j-1}\pmod{2^{j}}$.
	Therefore we always have $$\{a_{010},a_{011}\}=\{\ind_g(4),\ind_g(-4)\}.$$ And $i=n-2$. So $\{\ind_g(2),\ind_g(3),\ind_g(-2),\ind_g(-3)\}=\{a_{100},a_{101},a_{110},a_{111}\}$.
	Again from
	$$
		\ind_g(-2)\equiv\ind_g(2)+2^{n-1}\pmod{2^n}
	$$
	we have $\{\ind_g(2),\ind_g(-2)\}=\{a_{100},a_{101}\}$ or $\{a_{110},a_{111}\}$, and $$
		\begin{aligned}
			\{\ind_g(3),\ind_g(-3)\}= & \{a_{100},a_{101},a_{110},a_{111}\} \\&\setminus\{\ind_g(2),\ind_g(-2)\}.
		\end{aligned}$$ No matter what the circumstances, we always have $\ind_g(3)\equiv \ind_g(2)+2^{j-1}\equiv \ind_g(2)+\ind_g(4)\pmod{2^{j}}$, which is equivalent to $v_2(\ind_g(3)-\ind_g(2))=v_2(\ind_g(4))$ since we know that $v_2(\ind_g(4))=j-1$.

	Hence the necessary condition is: $v_2(\ind_g(4))<n-1$,  $v_2(\ind_g(2))=v_2(\ind_g(3))$ and $v_2(\ind_g(3)-\ind_g(2))=v_2(\ind_g(4))$. It is easy to verify that it is also the sufficient condition.

	We can prove that this condition is equivalent to $\pm 4 \notin\langle 6,16\rangle$.
	Suppose $\pm 4 \notin\langle 6,16\rangle$. Denote $\ind_g(2)$ as $\alpha$, $\ind_g(3)$ as $\beta$, and let $d=\gcd(\ind_g(6), \ind_g(16), 2^n m)=\gcd(\alpha+\beta, 4\alpha, 2^n m)$.
	Then $$
		\begin{aligned}
			\ind_g(4) & \equiv2\alpha & \pmod{2^nm}, \\ \ind_g(-4)&\equiv 2\alpha+2^{n-1}m&\pmod{2^nm},
		\end{aligned}$$ and $\pm 4 \notin\langle 6,16\rangle$ is the same thing as the assertion that $d\nmid 2\alpha$ and $d\nmid 2\alpha+2^{n-1}m$.
	If $v_2(d)\le v_2(2\alpha)$,
	then $v_2\left(\frac{4\alpha}{d}\right)=v_2(2\alpha)+1-v_2(d)\ge 1$, and hence $\frac{2\alpha}{d}$ is an integer, which contradicts with $d\nmid 2\alpha$.
	Thus $v_2(2\alpha)<v_2(d)=v_2(4\alpha)\le v_2(2^n m)=n$.
	If $v_2(\alpha)=n-2$, then $v_2(2\alpha)=n-1$, $v_2(d)=n$, thus $v_2(2\alpha+2^{n-1}m)\ge n$. A similar argument on $v_2\left(\frac{4\alpha+2^nm}{d}\right)$ gives $d \mid (2\alpha+2^{n-1}m)$, a contradiction.
	Hence $v_2(\alpha)\le n-3$, or equivalently $v_2(2\alpha)\le n-2$.

	Since $v_2(\alpha)<v_2(2\alpha)<v_2(d)\le v_2(\alpha+\beta)$, then $v_2(\beta)=v_2(\alpha+\beta-\alpha)=\min\{v_2(\alpha),v_2(\alpha+\beta)\}=v_2(\alpha)$, and
	$v_2(\alpha-\beta)=v_2(2\alpha-(\alpha+\beta))=\min\{v_2(2\alpha),v_2(\alpha+\beta)\}=v_2(2\alpha)$.

	The other direction can be proved similarly.
\end{proof}

We can get more results.
\begin{thm}
	Consider the case $M=[-3,5]^*$. Nonsingular perfect $B[-3,5](q)$ splitter sets exist if and only if $\pm 4\notin\langle 6,16\rangle$ and $\ord_q(-\frac45)$ is odd.
\end{thm}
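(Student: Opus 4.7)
The plan is to mimic the previous theorem on $B[-4,4](q)$ via Corollary \ref{cor-2n}. Write $q-1=2^n m$ with $m$ odd; perfection forces $8\mid q-1$, so $n\ge 3$. Fix a primitive root $g$ and let $A=\{\ind_g(i):i\in[-3,5]^*\}$, so that existence of a nonsingular perfect $B[-3,5](q)$ splitter set is equivalent to $A$ being a direct factor of $\mathbb{Z}_{q-1}$. Exactly as in the $B[-4,4]$ proof, $\ind_g(-1)=2^{n-1}m$ has $v_2$-valuation $n-1$, strictly larger than that of any other $\ind_g(s)$ with $s\in[-3,5]^*\setminus\{\pm 1\}$, so in any valid labeling one must have $a_{000}=0$, $a_{001}=\ind_g(-1)$ and $k=n$. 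The remaining four elements must go at level $i-1$ and the remaining two at level $j-1$, with $1\le i<j\le n-1$.

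Next I would pin down the coarse placement. From $\ind_g(4)\equiv 2\ind_g(2)\pmod{q-1}$ together with the exclusion $q\ne 5$, one gets $v_2(\ind_g(4))=v_2(\ind_g(2))+1$; hence $\ind_g(2)$ and $\ind_g(4)$ must land on consecutive levels, forcing $\ind_g(2)$ to level $i-1$, $\ind_g(4)$ to level $j-1$, and $j=i+1$. Using $v_2(\ind_g(-x))=v_2(\ind_g(x))$ for $x\in\{2,3\}$, both $\ind_g(\pm 2)$ occupy level $i-1$, and a quick case check shows that sending a single $\ind_g(\pm 3)$ to level $j-1$ would require its sign partner to have the same $2$-adic valuation but sit at a different level, which is impossible. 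Consequently the level-$(i-1)$ slots are filled by $\{\ind_g(\pm 2),\ind_g(\pm 3)\}$ and the level-$(j-1)$ slots by $\{\ind_g(4),\ind_g(5)\}$.

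I would then unpack the digit-compatibility conditions. The top-level condition $a_{011}-a_{010}\equiv 2^{n-1}\pmod{2^n}$ on the pair $\{\ind_g(4),\ind_g(5)\}$ reads $\ind_g(4)-\ind_g(5)\equiv 2^{n-1}\pmod{2^n}$, equivalently $2^n\mid\ind_g(-\frac45)$, i.e.\ $\ord_q(-\frac45)$ is odd. For the four elements at level $i-1$, the two alternative $\pm$-pair matchings fail because each would require a between-pair difference equal to $\ind_g(-1)=2^{n-1}m$ modulo $2^j$, whose $v_2$ is $n-1>j-1$. Hence the matching is forced to be $\{\ind_g(\pm 2)\},\{\ind_g(\pm 3)\}$, and the congruence $a_{110}-a_{100}\equiv 2^{j-1}\pmod{2^j}$ reduces to $v_2(\ind_g(3)-\ind_g(2))=v_2(\ind_g(4))$. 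Together with $v_2(\ind_g(2))=v_2(\ind_g(3))$ and $v_2(\ind_g(4))\le n-2$, these are exactly the three conditions shown in the preceding $B[-4,4](q)$ proof to be equivalent to $\pm 4\notin\langle 6,16\rangle$, so that equivalence can be imported directly.

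Sufficiency is obtained by running the chain in reverse: given both hypotheses, set $a_{100}=\ind_g(2)$, $a_{101}=\ind_g(-2)$, $a_{110}=\ind_g(3)$, $a_{111}=\ind_g(-3)$, $a_{010}=\ind_g(4)$, $a_{011}=\ind_g(5)$, and verify the three digit-level congruences of Corollary \ref{cor-2n}. The main obstacle I anticipate is the case analysis that forbids mixing $\ind_g(5)$ into the $\pm 3$ block at level $i-1$; the key leverage is the identity $\ind_g(-x)\equiv\ind_g(x)+2^{n-1}m\pmod{q-1}$, which pins the two sign partners to the same level and is precisely the new ingredient — absent from the $B[-4,4]$ setting — that accounts for the extra hypothesis $\ord_q(-\frac45)$ odd.
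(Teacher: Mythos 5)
Your argument is correct and follows essentially the same route as the paper's proof: in the Corollary \ref{cor-2n} labeling you force $a_{000}=0$, $a_{001}=\ind_g(-1)$, $k=n$, deduce the placements $\{\ind_g(4),\ind_g(5)\}=\{a_{010},a_{011}\}$ and $\{\ind_g(\pm2),\ind_g(\pm3)\}=\{a_{100},a_{101},a_{110},a_{111}\}$, import the equivalence with $\pm4\notin\langle 6,16\rangle$ from the $B[-4,4](q)$ case, and read the top-level congruence as $\ord_q(-\frac45)$ being odd, exactly as the paper does. One cosmetic remark: the identity $v_2(\ind_g(4))=v_2(\ind_g(2))+1$ rests on $v_2(\ind_g(2))<v_2(q-1)-1$, which the forced labeling already supplies, rather than on the exclusion $q\ne 5$.
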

\begin{proof}Let $q=2^nm+1$ such that there exists a perfect $B[-3,5](q)$ set, where $2\nmid m$, $n\ge 3$, then
	$\ind_g(-1)=2^{n-1}m\equiv 2^{n-1}\pmod{2^n}$. Then
	$k=n,a_{001}=\ind_g(-1)$, $a_{000}=0$, and hence the $2$-adic valuations of the remaining 6 elements
	of $A$ must be smaller than $n-1$, especially, $$v_2(\ind_g(2))=v_2(\ind_g(-2))<n-1,$$ $$v_2(\ind_g(3))=v_2(\ind_g(-3))<n-1.$$
	Since $\ind_g(4)\equiv 2\ind_g(2)\pmod{(q-1)}$ and $v_2(\ind_g(2))<v_2(q-1)-1$, $$v_2(\ind_g(4))=v_2(\ind_g(2))+1.$$
	So we have $$\begin{array}{lrcl}
			 & \ind_g(4)                & \in & \{a_{010},a_{011}\},            \\
			 & \{\ind_g(2),\ind_g(-2)\} & =   & \{a_{100},a_{101}\} \text{ or }
			\{a_{110},a_{111}\},                                                \\\text{and } &\{\ind_g(3),\ind_g(-3)\}&=&\{a_{100},a_{101},a_{110},a_{111}\}\\ &&&\setminus \{\ind_g(2),\ind_g(-2)\}.\end{array}$$ Hence we must have $\{\ind_g(4),\ind_g(5)\}=\{a_{010},a_{011}\}$. That is, we require that $$\begin{array}{lrcl}
			                  & v_2(\ind_g(2))      & =          & v_2(\ind_g(3)),           \\
			                  & \ind_g(3)-\ind_g(2) & \equiv     & \ind_g(4)-0\pmod{2^{i_2}} \\
			\text{and }       &
			\ind_g(4)+2^{n-1} & \equiv              & \ind_g(-4)
			\\&&\equiv&\ind_g(5)\pmod{2^n}.\end{array}$$
	As before, the first two equations hold if and only if $\pm 4\notin\langle 6,16\rangle$.We point out that
	$\ind_g(-4)\equiv \ind_g(5)\pmod{2^n}$ is equivalent to the assertion that $\ord_q(-\frac45)$ is odd. So the necessary condition is:
	$\pm 4\notin\langle 6,16\rangle$ and $\ord_q(-\frac45)$ is odd. It is easy to verify that it is also the sufficient condition.
\end{proof}
\begin{exampl}
	Let $q=97$. It was shown in \textnormal{\cite[Example III.1]{ye2020some}} that there exists a perfect $B[-4,4](97)$ set, that is, $\pm 4\notin\langle 6,16\rangle$.
	However, we can verify that $-\frac45$ is a primitive root modulo $97$ and hence its multiplicative order cannot be odd. So there does not exist a perfect $B[-3,5](97)$ set.

	Now let $q=12721=2^4 \cdot 3 \cdot 5\cdot 53+1$, $g=13$, then $$
		\begin{array}{lll}
			\ind_g(2)=1570, &
			\ind_g(3)=1934, & \ind_g(4)=3140, \\\ind_g(-4)=9500,&\ind_g(6)=3504,&\ind_g(16)=6280,\\
			\ord_q(-\frac45)=265.
		\end{array}$$
	Since $\gcd(3504,6280,12720)=8, 8\nmid 3140,8\nmid 9500$, we see that there exists a perfect $B[-3,5](12721)$ set
	$$
		\{g^{16i+j}:i\in[0,794],j\in[0,1]\}.
	$$
	Other primes that satisfy the condition include $26641$, $34729$, $49369$, and $78241$.
\end{exampl}
\begin{thm}
	We now consider the case $M=[-2,6]^*$.
	Nonsingular perfect $B[-2,6](q)$ splitter sets exist if and only if both $\ord_q(-\frac56)$ and $\ord_q(-\frac34)$ are odd and $v_2(\ind_g(3))=v_2(\ind_g(2))+1<v_2(q-1)-1$.
\end{thm}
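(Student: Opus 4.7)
The plan is to apply Corollary \ref{cor-2n} with $|A| = 2^3$ to the image set $A = \{\ind_g(r) : r \in \{\pm 1, \pm 2, 3, 4, 5, 6\}\}$. Writing $q - 1 = 2^n m$ with $m$ odd and $n \ge 3$, a nonsingular perfect $B[-2,6](q)$ splitter set exists iff $A$ is a direct factor of $\mathbb{Z}_{q-1}$, equivalently iff there are indices $1 \le i < j < k \le n$ and a labeling $A = \{a_{b_1 b_2 b_3}\}$ satisfying the mod-$2^{i_l}$ relations of the corollary. I would use throughout the identities $\ind_g(-1) \equiv 2^{n-1} \pmod{2^n}$, $\ind_g(-2) \equiv \ind_g(2) + 2^{n-1} \pmod{2^n}$, $\ind_g(4) \equiv 2\ind_g(2) \pmod{q-1}$, and $\ind_g(6) \equiv \ind_g(2) + \ind_g(3) \pmod{q-1}$, and set $\alpha = v_2(\ind_g(2))$.

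First I would fix $a_{000} = 0$. Since $\ind_g(-1)$ has $v_2 = n - 1$ while every other nonzero entry of $A$ must have $v_2 \in \{i-1, j-1\}$ with $j - 1 \le n - 2$, we are forced to take $k = n$ and $a_{001} = \ind_g(-1)$; in particular $\alpha < n - 1$, $v_2(\ind_g(-2)) = \alpha$, and $v_2(\ind_g(4)) = \alpha + 1$. The multiset of valuations on $A \setminus \{0, \ind_g(-1)\}$ must be four copies of $i - 1$ and two copies of $j - 1$, which pins down $i - 1 = \alpha$ and $j - 1 = \alpha + 1$ and forces exactly one of $\ind_g(3), \ind_g(5), \ind_g(6)$ to have $v_2 = \alpha + 1$ and the other two to have $v_2 = \alpha$. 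The relation $\ind_g(6) = \ind_g(2) + \ind_g(3)$ then narrows this to two scenarios: (i) $v_2(\ind_g(3)) = \alpha + 1$, so $v_2(\ind_g(6)) = v_2(\ind_g(5)) = \alpha$ and $\{a_{010}, a_{011}\} = \{\ind_g(3), \ind_g(4)\}$; or (ii) $v_2(\ind_g(3)) = v_2(\ind_g(5)) = \alpha$ and $v_2(\ind_g(6)) = \alpha + 1$, with $\{a_{010}, a_{011}\} = \{\ind_g(4), \ind_g(6)\}$.

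The main obstacle is ruling out scenario (ii). In (ii), the condition $a_{011} - a_{010} \equiv 2^{n-1} \pmod{2^n}$ gives $v_2(\ind_g(3) - \ind_g(2)) = v_2(\ind_g(6) - \ind_g(4)) = n - 1$, which together with $v_2(\ind_g(-1)) = n - 1$ also yields $v_2(\ind_g(3) - \ind_g(-2)) \ge n - 1$. The $\{a_{1**}\}$ block then requires partitioning $\{\ind_g(\pm 2), \ind_g(3), \ind_g(5)\}$ into two $(a_{1*0}, a_{1*1})$ pairs of difference valuation exactly $n - 1$, with cross differences $a_{11*} - a_{10*}$ of valuation $j - 1 = \alpha + 1$. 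Exhausting the three possible matchings, each forces $v_2$ of some $\ind_g(5) - \ind_g(r)$ with $r \in \{\pm 2, 3\}$ to equal $n - 1$, and by telescoping this forces every pairwise $v_2$ among $\ind_g(\pm 2), \ind_g(3), \ind_g(5)$ to be at least $n - 1$; none can equal $\alpha + 1 \le n - 2$, contradicting the cross-difference condition.

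Hence only scenario (i) remains, and the labeling is forced: $a_{010} = \ind_g(4)$, $a_{011} = \ind_g(3)$, $a_{100} = \ind_g(2)$, $a_{101} = \ind_g(-2)$, $a_{110} = \ind_g(6)$, $a_{111} = \ind_g(5)$. The two mod-$2^n$ relations $a_{011} - a_{010} \equiv 2^{n-1}$ and $a_{111} - a_{110} \equiv 2^{n-1}$ translate, via $\ind_g(-x) \equiv \ind_g(x) + 2^{n-1} \pmod{2^n}$, to $\ord_q(-3/4)$ odd and $\ord_q(-5/6)$ odd respectively. The requirement $j < k = n$ becomes $v_2(\ind_g(3)) = \alpha + 1 < n - 1 = v_2(q-1) - 1$. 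A routine check shows the remaining mod-$2^{i_l}$ relations of Corollary \ref{cor-2n} follow automatically from these three conditions and the four identities above, yielding both directions of the theorem.
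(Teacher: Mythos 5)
Your proposal is correct and follows essentially the same route as the paper: apply Corollary \ref{cor-2n} with $|A|=8$, force $k=n$, $a_{000}=0$, $a_{001}=\ind_g(-1)$, deduce $i-1=v_2(\ind_g(2))$, $j-1=v_2(\ind_g(4))=v_2(\ind_g(2))+1$, split on $v_2(\ind_g(3))$, eliminate the bad case, and read off the odd-order and valuation conditions. The only differences are cosmetic: the paper dispatches your scenario (ii) in one line (it forces $\ind_g(3)\equiv\ind_g(-2)\pmod{2^n}$, contradicting that labeled elements are distinct modulo $2^{i_n}=2^n$), and it records explicitly the one-line check you call ``forced,'' namely that $v_2(\ind_g(6)-\ind_g(\pm 2))=j-1<n-1$ rules out pairing $\ind_g(6)$ with $\ind_g(\pm2)$, so it must pair with $\ind_g(5)$.
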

\begin{proof}
	Let $q=2^nm+1$, $2\nmid m$, $n\ge 3$, then
	$\ind_g(-1)=2^{n-1}m\equiv 2^{n-1}\pmod{2^n}$. Hence
	$k=n, a_{001}=\ind_g(-1)$, $a_{000}=0$ as before. The $2$-adic valuations of the remaining 6 elements
	of $A$ must be smaller than $n-1$. Especially, $v_2(\ind_g(2))<n-1$.
	Since $\ind_g(4)\equiv 2\ind_g(2)\pmod{(q-1)}$, $$v_2(\ind_g(4))=v_2(\ind_g(2))+1.$$
	So we must have
	$$\begin{array}{lrcl}
			 & \ind_g(4)                & \in & \{a_{010},a_{011}\},            \\
			 & \{\ind_g(2),\ind_g(-2)\} & =   & \{a_{100},a_{101}\} \text{ or }
			\{a_{110},a_{111}\}.\end{array}$$
	That is, $i=v_2(\ind_g(2))+1,
		j=v_2(\ind_g(4))+1=i+1\le n-1$.
	Since $\ind_g(6)\equiv \ind_g(3)+\ind_g(2)\pmod{(q-1)}$,
	$$\ind_g(6)\equiv \ind_g(3)+2^{i-1}\pmod{2^{i}}.$$

	If $v_2(\ind_g(3))=i-1$, then $v_2(\ind_g(6))>i-1$ and hence
	$v_2(\ind_g(6))=j-1$, $\{\ind_g(4),\ind_g(6)\}=\{a_{010},a_{011}\}$. Then $\ind_g(6)\equiv \ind_g(4)+2^{k-1}\pmod{2^k}$.
	On the other hand $\ind_g(6)-\ind_g(4)\equiv \ind_g(3)-\ind_g(2)\pmod {(q-1)}$. Hence $\ind_g(3)\equiv \ind_g(2)+2^{k-1}\equiv \ind_g(-2)\pmod{2^k}$, a contradiction.

	If $v_2(\ind_g(3))=j-1$, then $\{\ind_g(4),\ind_g(3)\}=\{a_{010},a_{011}\}$. So $\ord_q(-\frac34)$ is odd. Hence $$\{\ind_g(5),\ind_g(6)\}
		=\{a_{100},a_{110},a_{101},a_{111}\}\setminus\{\ind_g(2),\ind_g(-2)\}.$$ Notice that
	since $\ind_g(6)\equiv 2^{j-1}+\ind_g(2)\pmod{2^j}$, we can also ensure that $\ind_g(6)\notin\{\ind_g(2),\ind_g(-2)\}$.
	So we must have $\ind_g(5)\equiv \ind_g(6)+2^{k-1}\pmod{2^k}$. That is, we need $\ord_q(-\frac56)$ and $\ord_q(-\frac34)$ to be odd and $v_2(\ind_g(3))=v_2(\ind_g(2))+1<n-1$.
	It is easy to verify that it is also the sufficient condition.
	\label{-2-6}
\end{proof}
\begin{exampl}
	Let $q=307009$, then $g=7$,
	$\ind_g(2)=280522=2\cdot 11\cdot 41\cdot 311$,
	$\ind_g(3)=13424=2^2\cdot 33581$,
	$\ord_q(-\frac34)=1599$,
	$\ord_q(-\frac56)=369$.
	Then
	$$
		\{g^{64i+8j+k}:i\in[0,4796],j\in[0,3],k\in[0,1]\}
	$$
	is a perfect $B[-2,6](307009)$ set.
	Other primes that satisfy the condition include $315361$, $348769$, $438769$, and  $442609$.
\end{exampl}
\begin{remark}
	Since $q\equiv1\pmod 8$,
	$2$ is a quadratic residue modulo $q$, hence $v_2(\ind_g(2))\ge 1$.
	So if there is a perfect $B[-2,6](q)$ set, then $v_2(q-1)\ge 4$, $q\equiv 1\pmod{16}$.
\end{remark}
\begin{thm}
	This time we consider the case $M=[-1,7]^*$.
	Nonsingular perfect $B[-1,7](q)$ splitter sets exist if and only if one of the following three conditions holds:
	\begin{enumerate}\item both $\ord_q(-\frac23)$ and $\ord_q(-\frac57)$ are odd, $v_2(\ind_g(2))=v_2(\ind_g(3))$,
		      $v_2(\ind_g(5)-\ind_g(2))=v_2(\ind_g(4))<v_2(q-1)-1$;

		\item $v_2(\ind_g(4))<v_2(q-1)-1$, and all of $\ord_q(-\frac34)$, $\ord_q(-\frac67)$ and $\ord_q(-\frac25)$ are odd;

		\item $v_2(\ind_g(4))<v_2(q-1)-1$, and all of $\ord_q(-\frac34)$, $\ord_q(-\frac27)$ and $\ord_q(-\frac56)$ are odd.
	\end{enumerate}
\end{thm}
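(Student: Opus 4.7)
The plan is to follow exactly the same template as in the preceding theorems on $B[-4,4](q)$, $B[-3,5](q)$, and $B[-2,6](q)$: reduce to Corollary \ref{cor-2n}, use the labelling $a_{b_1b_2b_3}$ with $i<j<k$, pin down the forced positions, then split into cases based on where the remaining indices $\ind_g(3),\ind_g(5),\ind_g(6),\ind_g(7)$ land. Write $q=2^nm+1$ with $n\ge3$, $2\nmid m$, and let $A=\{0,\ind_g(2),\ldots,\ind_g(7),\ind_g(-1)\}$. Exactly as before, $\ind_g(-1)=2^{n-1}m$ forces $a_{000}=0$, $a_{001}=\ind_g(-1)$ and $k=n$; the four elements $a_{1**}$ have $2$-adic valuation $i-1$, the two elements $a_{01*}$ have valuation $j-1$. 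Because $\ind_g(4)\equiv 2\ind_g(2)\pmod{q-1}$ and the valuation $n-1$ slot is already used, we get $v_2(\ind_g(4))=v_2(\ind_g(2))+1\le n-2$, and $\ind_g(4)\in\{a_{010},a_{011}\}$. So $i=v_2(\ind_g(2))+1$, $j=i+1\le n-1$, matching the condition $v_2(\ind_g(4))<v_2(q-1)-1$ present in all three cases.

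The analysis then branches on the valuation of $\ind_g(3)$. The two possibilities are $v_2(\ind_g(3))=i-1$ or $v_2(\ind_g(3))=j-1$, since nothing else is available. Case A ($v_2(\ind_g(3))=v_2(\ind_g(2))$): using $\ind_g(6)\equiv\ind_g(2)+\ind_g(3)\pmod{q-1}$, I deduce $v_2(\ind_g(6))\ge i$, and by eliminating the occupied slots $a_{001}$ and $a_{000}$ I force $v_2(\ind_g(6))=j-1$, hence $\{a_{010},a_{011}\}=\{\ind_g(4),\ind_g(6)\}$ and $\{a_{1**}\}=\{\ind_g(2),\ind_g(3),\ind_g(5),\ind_g(7)\}$. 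The pairing constraint $a_{b_1b_21}-a_{b_1b_20}\equiv 2^{n-1}\pmod{2^n}$ translates, exactly as in the proof of Theorem \ref{-2-6}, into odd-order conditions: $\{\ind_g(2),\ind_g(3)\}$ paired gives $\ord_q(-\tfrac23)$ odd, $\{\ind_g(5),\ind_g(7)\}$ paired gives $\ord_q(-\tfrac57)$ odd, and $\{\ind_g(4),\ind_g(6)\}$ paired gives a condition equivalent to $\ord_q(-\tfrac23)$ odd (so no new constraint). The $j$-level relation $a_{110}\equiv a_{100}+2^{j-1}\pmod{2^j}$ with $\ind_g(2)=a_{100}$, $\ind_g(5)=a_{110}$ is precisely $v_2(\ind_g(5)-\ind_g(2))=v_2(\ind_g(4))$. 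This gives Condition 1.

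Case B ($v_2(\ind_g(3))=j-1$): then $\{a_{010},a_{011}\}=\{\ind_g(3),\ind_g(4)\}$, which yields $\ord_q(-\tfrac34)$ odd, and $\{a_{1**}\}=\{\ind_g(2),\ind_g(5),\ind_g(6),\ind_g(7)\}$. There are three a priori ways to split this 4-set into two pairs differing by $2^{n-1}\pmod{2^n}$, but the pairing $\{\ind_g(2),\ind_g(6)\}$ is ruled out since $\ind_g(6)-\ind_g(2)\equiv\ind_g(3)\pmod{q-1}$ has $2$-adic valuation $j-1<n-1$. The remaining two pairings produce Conditions 2 and 3 respectively: pairing $\{\ind_g(2),\ind_g(5)\},\{\ind_g(6),\ind_g(7)\}$ gives $\ord_q(-\tfrac25)$ and $\ord_q(-\tfrac67)$ odd (Condition 2); pairing $\{\ind_g(2),\ind_g(7)\},\{\ind_g(5),\ind_g(6)\}$ gives $\ord_q(-\tfrac27)$ and $\ord_q(-\tfrac56)$ odd (Condition 3). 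In each subcase one must check the $j$-level compatibility $v_2(a_{110}-a_{100})=j-1$; using $\ind_g(6)-\ind_g(2)\equiv\ind_g(3)\pmod{q-1}$ in the first pairing, and $\ind_g(5)-\ind_g(2)\equiv\ind_g(3)-2^{n-1}\pmod{2^n}$ in the second (with $j<n-1$), the choice of which pair plays the role of $a_{10*}$ versus $a_{11*}$ is forced and the condition holds automatically.

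For sufficiency one verifies in each of the three cases that the listed conditions let one fill in the labels $a_{b_1b_2b_3}$ compatibly with Corollary \ref{cor-2n}, so that $A$ is indeed a direct factor. The main obstacle I expect is the bookkeeping for the $j$-level congruences in Case B: one has to confirm that the odd-order conditions at the $2^n$-level combined with the forced valuations $v_2(\ind_g(3))=v_2(\ind_g(4))=j-1$ really do imply the correct arrangement inside $\{a_{1**}\}$ without any extra constraint (and, symmetrically, that exactly three inequivalent labellings occur so that the theorem's three clauses cover every possibility without overlap in content). Once this is checked, the rest is routine verification identical in flavour to Theorem \ref{-2-6}.
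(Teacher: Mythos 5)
Your proposal is correct and follows essentially the same route as the paper: the same reduction via Corollary \ref{cor-2n} with $a_{000}=0$, $a_{001}=\ind_g(-1)$, $k=n$, the forced placement of $\ind_g(4)$ at level $j=i+1$, the case split on $v_2(\ind_g(3))\in\{i-1,j-1\}$, and the translation of the level-$n$ pairings into the odd-order conditions yielding the three clauses (your pairing-based organization of Case B is just a compressed version of the paper's subcase enumeration, and the automatic level-$j$ compatibility there only needs $j\le n-1$, which holds). Sufficiency is treated with the same brief verification as in the paper.
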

\begin{proof}
	Let $q=2^nm+1$, $2\nmid m$, $n\ge 3$, then
	$\ind_g(-1)=2^{n-1}m\equiv 2^{n-1}\pmod{2^n}$. As before,
	$k=n, a_{001}=\ind_g(-1)$, $a_{000}=0$, and the $2$-adic valuations of the remaining 6 elements
	of $A$ must be smaller than $n-1$. As before,
	$$\begin{aligned}\ind_g(4) & \in\{a_{010},a_{011}\},                 \\
               \ind_g(2) & \in\{a_{100},a_{101},a_{110},a_{111}\}.\end{aligned}$$ That is, $i=v_2(\ind_g(2))+1,
		j=v_2(\ind_g(4))=i+2<n$.
	Since $\ind_g(6)\equiv \ind_g(3)+\ind_g(2)\pmod{(q-1)}$,
	$$\ind_g(6)\equiv \ind_g(3)+2^{i-1}\pmod{2^{i}}$$
	always holds.

	If $v_2(\ind_g(3))=i-1$, then $v_2(\ind_g(6))>i-1$ and hence
	$v_2(\ind_g(6))=j-1$. Then $\ind_g(6)\equiv \ind_g(4)+2^{k-1}\pmod{2^k}$.
	On the other hand, $$\ind_g(6)-\ind_g(4)\equiv \ind_g(3)-\ind_g(2)\pmod {(q-1)}.$$ Hence $$\ind_g(3)\equiv \ind_g(2)+2^{k-1}\pmod{2^k}.$$ Then we must have
	$\ind_g(5)\equiv \ind_g(7)+2^{k-1}\pmod{2^k}$ and
	$\ind_g(5)\equiv\ind_g(2)+2^{j-1}\pmod{2^{j}}$. It is equivalent to the assertion that:
	both $\ord_q(-\frac23)$ and $\ord_q(-\frac57)$ are odd, $v_2(\ind_g(2))=v_2(\ind_g(3))$,
	and $v_2(\ind_g(5)-\ind_g(2))=v_2(\ind_g(4))<v_2(q-1)-1$.

	If $v_2(\ind_g(3))=j-1$, then $\ind_g(6)\equiv 2^{j-1}+\ind_g(2)\pmod{2^{j}}$. So $v_2(\ind_g(6))=i-1$.
	And we must have $v_2(\ind_g(5))=v_2(\ind_g(7))=i-1$.

	Suppose $\ind_g(2)=a_{100}$, then $\ind_g(4)=a_{010}, \ind_g(3)=a_{011}$, $\ind_g(6)\in\{a_{110},a_{111}\}$.

	\begin{itemize}
		\item
		      If $\ind_g(6)=a_{110}$, then $\{\ind_g(5),\ind_g(7)\}=\{a_{101},a_{111}\}$.
		      We have $\ind_g(3)\equiv \ind_g(4)\pmod{2^k}$ (which is equivalent to saying that $\ord_q(-\frac34)$ is odd),
		      $\ind_g(6)\equiv \ind_g(2)+2^{j-1}\pmod{2^j}$ (which automatically holds),
		      $\lfloor \ind_g(6)/2^{n-1}\rfloor\equiv 0\pmod{2}$.
		      \begin{itemize}
			      \item
			            If $\ind_g(5)=a_{101}$, then
			            $\ind_g(5)\equiv \ind_g(2)+2^{k-1}\pmod{2^k}$ (which is equivalent to saying that $\ord_q(-\frac25)$ is odd),
			            $\ind_g(7)\equiv \ind_g(6)+2^{k-1}\pmod{2^k}$ (which is equivalent to saying that $\ord_q(-\frac67)$ is odd).
			      \item
			            If $\ind_g(7)=a_{101}$, then
			            $\ind_g(7)\equiv \ind_g(2)+2^{k-1}\pmod{2^k}$ (which is equivalent to saying that $\ord_q(-\frac27)$ is odd),
			            $\ind_g(5)\equiv \ind_g(6)+2^{k-1}\pmod{2^k}$ (which is equivalent to saying that $\ord_q(-\frac56)$ is odd).
		      \end{itemize}
		\item
		      If $\ind_g(6)=a_{111}$, then $\{\ind_g(5),\ind_g(7)\}=\{a_{110},a_{101}\}$.
		      We have $\ind_g(3)\equiv \ind_g(4)\pmod{2^k}$ (which is equivalent to saying that $\ord_q(-\frac34)$ is odd),
		      $\ind_g(6)\equiv \ind_g(2)+2^{j-1}\pmod{2^j}$ (which automatically holds),
		      $\lfloor \ind_g(6)/2^{n-1}\rfloor\equiv 1\pmod{2}$.

		      \begin{itemize}
			      \item
			            If $\ind_g(5)=a_{101}$, then
			            $\ind_g(5)\equiv \ind_g(2)+2^{k-1}\pmod{2^k}$ (which is equivalent to saying that $\ord_q(-\frac25)$ is odd),
			            $\ind_g(7)\equiv \ind_g(6)+2^{k-1}\pmod{2^k}$ (which is equivalent to saying that $\ord_q(-\frac67)$ is odd).

			      \item
			            If $\ind_g(7)=a_{101}$, then
			            $\ind_g(7)\equiv \ind_g(2)+2^{k-1}\pmod{2^k}$ (which is equivalent to saying that $\ord_q(-\frac27)$ is odd),
			            $\ind_g(5)\equiv \ind_g(6)+2^{k-1}\pmod{2^k}$ (which is equivalent to saying that $\ord_q(-\frac56)$ is odd).
		      \end{itemize}
	\end{itemize}
	We find that the condition on $\lfloor \ind_g(6)/2^{n-1}\rfloor$ is redundant. The cases that $\ind_g(2)$ is one of the other elements of $\{a_{100},a_{101},a_{110},a_{111}\}$ can be coped with similarly, and produce the same condition.
	So we get the desired necessary condition. It is easy to verify that it is also the sufficient condition.
\end{proof}
\begin{exampl}
	We give examples for each of the three cases:
	\begin{enumerate}
		\item Let $q=475729=2^4\cdot 3\cdot 11\cdot 17\cdot 53+1$, then $g=13$,
		      $\ind_g(2)=3786$,
		      $\ind_g(3)=54066$,
		      $\ind_g(4)=7572$,
		      $\ind_g(5)=185182$,
		      $\ord_q(-\frac23)=9911$,
		      $\ord_q(-\frac57)=9911$, and
		      $$
			      \begin{aligned}
				        & \{\ind_g(i)\pmod{16}:i\in[-1,7]^*\} \\
				      = & \{8,0,10,2,4,14,12,6\}.
			      \end{aligned}
		      $$
		      Then
		      $$
			      \{g^{16i+j}:i\in[0,29732],j\in[0,1]\}
		      $$
		      is a perfect $B[-1,7](475729)$ set.
		\item Let $q=2693329=2^4\cdot 3\cdot 11\cdot 5101+1$, $g=13$, then
		      $\ord_q(-\frac25)=56111$,
		      $\ord_q(-\frac34)=56111$,
		      $\ord_q(-\frac67)=168333$,
		      and
		      $$
			      \begin{aligned}
				        & \{\ind_g(i)\pmod{16}:i\in[-1,7]^*\} \\
				      = & \{8,0,14,4,12,6,2,10\}.
			      \end{aligned}
		      $$
		      Then
		      $$
			      \{g^{16i+j}:i\in[0,168332],j\in[0,1]\}
		      $$
		      is a perfect $B[-1,7](2693329)$ set.

		\item Let $q=861361=2^4\cdot 3\cdot 5\cdot 37\cdot 97+1$, $g=11$, then
		      $\ord_q(-\frac56)=10767$,
		      $\ord_q(-\frac34)=53835$,
		      $\ord_q(-\frac27)=10767$,
		      and
		      $$
			      \begin{aligned}
				        & \{\ind_g(i)\pmod{16}:i\in[-1,7]^*\} \\
				      = & \{8,0,2,12,4,6,14,10\}.
			      \end{aligned}
		      $$
		      Then
		      $$
			      \{g^{16i+j}:i\in[0,53834],j\in[0,1]\}
		      $$
		      is a perfect $B[-1,7](861361)$ set.
	\end{enumerate}
\end{exampl}
\begin{remark}
	As in the case $[-2,6]^*$,
	if there is a perfect $B[-1,7](q)$ set then
	$q\equiv 1\pmod{16}$.
\end{remark}
\begin{remark}
	It is proved in \textnormal{\cite{yuan2022nonsingular}} that there are infinitely many primes $p$ such that perfect $B[-k_1,k_2](p)$ sets exist for $(k_1,k_2)=(-1,7),(-2,6),(-3,5),(-4,4)$. The argument there can be concluded as: to provide a $B[-k_1,k_2](p)$ set $B$, we construct  by manual trials a homomorphism $f:\langle -1,2,\ldots,k_2\rangle\to (\mathbb{Z}_k,+)$, which is injective when restricted to $[-k_1,k_2]^*$ and satisfies certain other conditions. A number theoretical result tells us that under such conditions there are infinitely many primes $p$ such that $f$ induces a homomorphism $\mathbb{Z}_p^\times\to\mathbb{Z}_k$ and then $\mathbb{Z}_p^\times$ will be split by $[-k_1,k_2]^*$. Although not obvious, primes guaranteed by this argument do satisfy our existence conditions. The verification is omitted for brevity.
\end{remark}
\section{A connection between $B[-k+1,k+1](q)$ sets and $B[-k,k](q)$ sets}
In this section we point out a relation between certain splitter sets. Such connection may be beneficial for looking for existence conditions of splitter sets. Especially, in the next section we will show how such an idea helps us derive a necessary and sufficient condition for the existence of $B[-1,5](q)$ sets.

\begin{prop}
	$B$ is a perfect $B[-k+1,k+1](q)$ splitter set if and only if $B$ is a perfect $B[-k,k](q)$ splitter set and $-\frac{k}{k+1}\in\pi(B)$.
	\label{k-(k+1)}
\end{prop}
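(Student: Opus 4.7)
The proof is organized around the elementary observation that
\[
[-k+1,k+1]^* = [-k+1,k]^* \sqcup \{k+1\} \quad\text{and}\quad [-k,k]^* = [-k+1,k]^* \sqcup \{-k\},
\]
so the two multiplier sets differ only by swapping $-k$ with $k+1$. Consequently, the whole equivalence ought to boil down to the single set identity $(-k)B = (k+1)B$, which is precisely the statement $-k/(k+1)\in\pi(B)$.

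For the $(\Leftarrow)$ direction I would argue directly. Assuming $B$ is a perfect $B[-k,k](q)$ splitter set with $(-k)B=(k+1)B$, the disjoint partition $\mathbb{Z}_q^\times=\bigsqcup_{\mu\in[-k,k]^*}\mu B$ can be rewritten by replacing $(-k)B$ with $(k+1)B$, producing $\mathbb{Z}_q^\times=\bigsqcup_{\mu\in[-k+1,k+1]^*}\mu B$. The disjointness of the new partition transfers the splitter property: the only nontrivial collision to check has $\lambda_1=k+1$ and $\lambda_2\in[-k+1,k]^*$, but then $(k+1)b_1\in(k+1)B=(-k)B$ immediately conflicts with the $B[-k,k]$ splitter property unless $\lambda_2=k+1$ as well.

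The $(\Rightarrow)$ direction is the main obstacle. Given $B$ a perfect $B[-k+1,k+1](q)$ splitter set, I would establish $(-k)B\subseteq(k+1)B$ element by element. Fix $b\in B$; by the splitter property, $(-k)b=\mu b'$ for a unique pair $(\mu,b')\in[-k+1,k+1]^*\times B$. Multiplying both sides by $-1$ gives $kb=(-\mu)b'$, and since $kb$ already admits the obvious splitter factorization $k\cdot b$, the hypothesis $-\mu\in[-k+1,k+1]^*$ would force $\mu=-k$, contradicting $\mu\in[-k+1,k+1]^*$. Hence $-\mu\notin[-k+1,k+1]^*$, which narrows the possibilities to $\mu\in\{k,k+1\}$. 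The case $\mu=k$ gives $b'=-b$; but then $(-1)\cdot b$ and $1\cdot(-b)$ are two different factorizations of the same element (using $-1,1\in[-k+1,k+1]^*$), again contradicting uniqueness. Therefore $\mu=k+1$, so $(-k)b\in(k+1)B$ for every $b\in B$, and cardinality upgrades the inclusion to $(-k)B=(k+1)B$. Running the $(\Leftarrow)$ argument in reverse then recovers the perfect $B[-k,k]$ property, completing the forward direction.

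The delicate point to highlight is the exclusion of $\mu=k$, which silently invokes $-1\in[-k+1,k+1]^*$, i.e.\ $k\ge 2$; one should confirm this matches the paper's intended scope, since for $k=1$ the set $B=\{1,4\}\subset\mathbb{Z}_5^\times$ is a perfect $B[0,2](5)$ splitter set yet fails to be a $B[-1,1](5)$ splitter set, showing that this hypothesis is not cosmetic.
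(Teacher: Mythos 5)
Your proof is correct on the intended range of parameters, and your backward direction --- swapping $(-k)B$ for $(k+1)B$ inside the partition $\mathbb{Z}_q^\times=\bigsqcup_{\mu}\mu B$ --- is exactly the paper's argument. The difference lies in the forward direction: the paper does not prove that a perfect $B[-k+1,k+1](q)$ set satisfies $-\frac{k}{k+1}\in\pi(B)$; it imports this as \cite[Lemma III.1]{ye2020some} and then applies the same swap. You instead derive it from scratch via uniqueness of representation: writing $-kb=\mu b'$, ruling out every $\mu$ with $-\mu\in[-k+1,k+1]^*$ by comparing against the representation $kb=k\cdot b$, and eliminating $\mu=k$ using $-1\in[-k+1,k+1]^*$, which forces $\mu=k+1$, hence $-\frac{k}{k+1}B\subseteq B$ and so $-\frac{k}{k+1}B=B$ by finiteness. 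This makes the proposition self-contained at the cost of reproving the cited lemma, and it buys a sharper view of the hypotheses: your observation that excluding $\mu=k$ requires $k\ge 2$, together with the example $B=\{1,4\}\subset\mathbb{Z}_5^\times$ (a perfect $B[0,2](5)$ set that is not a $B[-1,1](5)$ set), shows that both the proposition and the quoted lemma are implicitly restricted to $k\ge 2$, a restriction the paper leaves tacit. One further tacit assumption worth recording explicitly in your write-up: you use $q\nmid k$ to pass from $-kb=kb'$ to $b'=-b$, and $q\nmid k+1$ to make sense of $-\frac{k}{k+1}$; these are equally implicit in the paper's statement, so they are not a gap, but they should be stated.
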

\begin{lemma}[{\cite[Lemma III.1]{ye2020some}}]
	If $B$ is a perfect $B[-k+1,k+1](q)$ splitter set, then
	$-\frac{k}{k+1}\in\pi(B)$.
\end{lemma}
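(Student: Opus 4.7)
The plan is to play the given splitting against its mirror image under $\lambda\mapsto-\lambda$. Starting from the disjoint decomposition
\[\mathbb{Z}_q^\times=\bigsqcup_{\lambda\in[-k+1,k+1]^*}\lambda B,\]
I would multiply through by $-1$, which is an automorphism of $\mathbb{Z}_q^\times$, to obtain a second disjoint decomposition with multiplier set $[-(k+1),k-1]^*$. The two multiplier sets share the common ``inner'' part $[-(k-1),k-1]^*$, which partitions the same $(2k-2)|B|$ elements of $\mathbb{Z}_q^\times$ in both splittings.

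Cancelling this shared part, the remaining $2|B|$ elements are described in two ways and must therefore agree as sets:
\[kB\sqcup(k+1)B\;=\;(-k)B\sqcup(-(k+1))B.\]
Call this common set $S$. Its right-hand description is $-S$, so $S$ is invariant under multiplication by $-1$. The map $x\mapsto-x$ then carries the ordered pair of disjoint blocks $(kB,(k+1)B)$ to an ordered pair of disjoint $|B|$-subsets whose union is $S$, so either $-kB=kB$ and $-(k+1)B=(k+1)B$, or $-kB=(k+1)B$ and $-(k+1)B=kB$.

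The final step is to rule out the first alternative. If $-kB=kB$ held, then $B=-B$, and in the original splitting the slots $1\cdot B$ and $(-1)\cdot B$ (with $\pm 1\in[-k+1,k+1]^*$) would coincide rather than being disjoint, a contradiction. Hence $-kB=(k+1)B$, equivalently $-\tfrac{k}{k+1}\cdot B=B$, which is the desired conclusion $-\tfrac{k}{k+1}\in\pi(B)$. The main obstacle is correctly matching up the two mirrored splittings so that the cancellation reduces to a single pair of blocks on each side; once that is set up, the $-1$-invariance of $S$ and the quick contradiction via the $\pm 1$ slots finish the proof.
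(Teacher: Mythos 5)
Your setup is correct as far as it goes: negating the splitting and cancelling the common multipliers $[-(k-1),k-1]^*$ does give $kB\sqcup(k+1)B=(-k)B\sqcup(-(k+1))B=:S$, hence $S=-S$. The gap is the next step. From the fact that $\{-kB,-(k+1)B\}$ is \emph{another} partition of $S$ into two blocks of size $|B|$ you conclude that it must coincide with the partition $\{kB,(k+1)B\}$, i.e.\ that either $-kB=kB$ or $-kB=(k+1)B$. Nothing forces the involution $x\mapsto-x$ of $S$ to permute the blocks of a given partition: a priori $-kB$ could meet both $kB$ and $(k+1)B$ nontrivially, and excluding this mixed case is the actual content of the lemma, not a formality. (The paper itself gives no proof, citing \cite[Lemma III.1]{ye2020some}, so the comparison here is with the standard argument.)

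The repair is short and is exactly the idea you already use to kill the case $-kB=kB$, applied pointwise instead of to the whole block. Suppose $-kb=kb'$ for some $b,b'\in B$. Since $\gcd(k,q)=1$, this forces $b'=-b\in B$, and then $b=1\cdot b=(-1)\cdot(-b)$ are two distinct representations of $b$ with multipliers $\pm1\in[-(k-1),k+1]^*$, contradicting the splitting. (This needs $k\ge2$ so that $-1$ is a multiplier; that restriction is genuine, since for $k=1$ the statement fails: $B=\{1,4\}$ is a perfect $B[0,2](5)$ set with $-\tfrac12\equiv2\notin\pi(B)$.) Hence $-kB\cap kB=\varnothing$, so $-kB\subseteq(k+1)B$, and equality follows because both sets have $|B|$ elements; thus $-\tfrac{k}{k+1}B=B$. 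Once you argue pointwise like this, the mirror-splitting bookkeeping can be bypassed altogether: writing $-kb=\lambda b'$ with $\lambda\in[-(k-1),k+1]^*$, the case $|\lambda|\le k-1$ is impossible because then $kb=(-\lambda)b'$ would be a second representation of $kb=k\cdot b$, and $\lambda=k$ is impossible by the $\pm1$ argument, leaving $\lambda=k+1$ for every $b\in B$.
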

\begin{proof}[Proof of Proposition \ref{k-(k+1)}]
	If $B$ is a perfect $B[-k+1,k+1](q)$ splitter set, then $(k+1)B=(k+1)(-\frac{k}{k+1}B)=-kB$, since $-\frac{k}{k+1}\in\pi(B)$. Hence $\{-kB,(-k+1)B,\ldots,-B,B,\ldots,kB\}$ is a partition of $\mathbb{Z}_q^\times$, which means
	$B$ is a perfect $B[-k,k](q)$ splitter set. The converse direction is proved analogously.
\end{proof}
\begin{exampl}
	Let us see how to get the existence condition
	of perfect $B[-1,3](q)$ splitter sets from
	that of perfect $B[-2,2](q)$ sets. From Proposition \ref{k-(k+1)},
	a perfect $B[-1,3](q)$ set exists if and only if there exists a perfect $B[-2,2](q)$ set $B$ such that $\pi(B)$ contains $-\frac{2}3$. We know that a perfect $B[-2,2](q)$ set exists if and only if $v_2(\ord_q(2))\ge 2$. If it is the case,
	then by Corollary \ref{-2-2} and Theorem \ref{thm-prd}, $v_2([G:\pi(B)])\ge v_2(q-1)$, hence the multiplicative order of every element in $\pi(B)$ must be odd. On the other hand, if $\ord_q(-\frac{2}3)$ is odd, then we can construct as in the proof of Theorem \ref{thm-dir-fct} a perfect $B[-2,2](q)$ set $B$ whose stable group contains $-\frac23$, and then by the proof of  Proposition \ref{k-(k+1)}, $B$ is also a perfect $B[-1,3](q)$ set.
\end{exampl}
\begin{exampl}[{\cite[Theorem III.1]{ye2020some}}]
	Let us see $B[-2,4](q)$ sets. The necessary and sufficient condition for the existence of perfect $B[-3,3](q)$ sets is $2\notin \langle 6,8\rangle$. To ensure that $-\frac34$ is in $\pi(B)$, we need $-1\notin\langle-\frac34\rangle$, that is,
	$\ord_q(-\frac{3}4)$ must be odd. So the necessary condition for the existence of perfect $B[-2,4](q)$ sets is that $2\notin \langle 6,8\rangle$ and $\ord_q(-\frac{3}4)$ is odd. It is easy to verify that it is also the sufficient condition.
\end{exampl}

\begin{exampl}
	The results on $B[-4,4](q),B[-3,5](q)$ sets in the previous section also follow this pattern. We omit the proof here.
\end{exampl}
\section{On perfect $B[-1,5](q)$ splitter sets}
In \cite{yuan2022nonsingular} a necessary condition for the existence of perfect $B[-1,5](q)$ splitter sets was given. In this section, we give a sufficient and necessary condition.
\begin{lemma}[\cite{yuan2022nonsingular}]
	If $B$ is a perfect $B[-1,5](q)$ splitter set, then either $\langle-\frac23,-\frac45\rangle\subset \pi(B)$ or $\langle-\frac25,-\frac43\rangle\subset \pi(B)$.
\end{lemma}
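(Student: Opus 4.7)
The plan is to analyze, for each $b\in B$ and each $m\in\{2,3,4,5\}$, which slot $m' B$ of the splitting $\mathbb{Z}_q^\times=\bigsqcup_{m'\in M}m'B$ (with $M=[-1,5]^*$) contains the element $-mb$, exploiting uniqueness of the $M$-representation repeatedly. Writing $\sigma_m(b)\in M$ for the unique index with $-mb\in\sigma_m(b)B$, I first rule out $\sigma_m(b)\in\{-1,1,m\}$ for each $m\in\{2,3,4,5\}$: each would give two distinct $M$-representations of some element (for instance, $-mb\in B$ would imply $mb=(-1)\cdot(-mb)\in -B$, contradicting $mb\in mB$). A finer argument rules out $\sigma_2(b)=4$ and, symmetrically, $\sigma_4(b)=2$: $-2b=4b'$ forces $b'=-b/2\in B$, so $-b$ has the two $M$-reps $(-1,b)$ and $(2,-b/2)$, impossible for $q>3$. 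The surviving options are $\sigma_2(b),\sigma_4(b)\in\{3,5\}$ and $\sigma_3(b)\in\{2,4,5\}$, $\sigma_5(b)\in\{2,3,4\}$.

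A short counting step then sharpens the remaining indices. From the previous step, $-2B\cup-4B\subseteq 3B\cup5B$, and both sides have size $2|B|$ (the disjointness $-2B\cap-4B=\emptyset$ is another uniqueness application), so in fact $-2B\cup-4B=3B\cup5B$. Reading this reversed shows that every element of $3B\cup5B$ is the negative of an element of $2B\cup4B$, which forces $\sigma_3(b),\sigma_5(b)\in\{2,4\}$ for every $b$. I then establish the pairing $\{\sigma_2(b),\sigma_4(b)\}=\{3,5\}$ for each $b$: otherwise $\sigma_2(b)=\sigma_4(b)=3$ would give $b_1=-2b/3$ and $b_2=-4b/3=2b_1$ both in $B$, contradicting $B\cap 2B=\emptyset$, and the case of both equal to $5$ is symmetric. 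An analogous argument gives $\{\sigma_3(b),\sigma_5(b)\}=\{2,4\}$. In particular, for each $b\in B$ the pair $(\sigma_2(b),\sigma_4(b))$ is either $(3,5)$ or $(5,3)$.

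The main obstacle is to upgrade this pointwise dichotomy to a uniform one. Observe that $\sigma_2\equiv 3$ on $B$ is equivalent to $-2/3\in\pi(B)$, which by the pairing gives $\sigma_4\equiv 5$ and hence $-4/5\in\pi(B)$; similarly $\sigma_2\equiv 5$ yields the other case of the lemma. So the task reduces to showing $\sigma_2$ is constant on $B$. I plan to consider the map $\phi\colon B\to B$, $b\mapsto -2b/\sigma_2(b)$, which is a bijection: if $\phi(b_1)=\phi(b_2)$ with $\sigma_2(b_1)\neq\sigma_2(b_2)$, then $3b_2=5b_1$ or $5b_2=3b_1$, contradicting $3B\cap 5B=\emptyset$. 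On each $\phi$-orbit of length $k$ this yields a multiplicative identity $(-2)^k=3^a 5^{k-a}$ in $\mathbb{Z}_q^\times$, where $a$ counts the occurrences of $\sigma_2=3$ along the orbit. Combining this with the analogous identities from the bijection $b\mapsto -4b/\sigma_4(b)$ and the $2$-adic valuation information built into the factorization $\mathbb{Z}_q^\times=MB$ (via the cyclotomic-divisor analysis of $f_A$ developed in Section III), I expect to force $a\in\{0,k\}$ on each orbit, collapsing $\phi$ to multiplication by a single constant and giving the claimed dichotomy. This final step, where orbit-level relations must be promoted to global constancy using the multiplicative structure of $\mathbb{Z}_q^\times$, is where I anticipate the main technical difficulty.
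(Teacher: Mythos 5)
Your preliminary analysis is sound: the eliminations $\sigma_m(b)\notin\{-1,1,m\}$, the exclusion of $\sigma_2(b)=4$ and $\sigma_4(b)=2$, the counting argument giving $-2B\cup-4B=3B\cup5B$ (hence $\sigma_3(b),\sigma_5(b)\in\{2,4\}$), and the pointwise pairing $\{\sigma_2(b),\sigma_4(b)\}=\{3,5\}$ are all correct applications of uniqueness in the splitting $\mathbb{Z}_q^\times=[-1,5]^*\cdot B$. You have also correctly identified that the lemma is equivalent to the constancy of $\sigma_2$ on $B$: $\sigma_2\equiv3$ gives $-\frac23,-\frac45\in\pi(B)$ and $\sigma_2\equiv5$ gives $-\frac25,-\frac43\in\pi(B)$. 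But that constancy is the entire content of the lemma, and your proposal does not prove it; you explicitly defer it ("I expect to force $a\in\{0,k\}$\,\dots") to a mechanism that does not work as described. The orbit relation $(-2)^k=3^a5^{k-a}$ in $\mathbb{Z}_q^\times$ is far too weak to exclude $0<a<k$: for a given prime $q$ the elements $-2,3,5$ satisfy many multiplicative relations (generically they generate all of $\mathbb{Z}_q^\times$), so no contradiction can come from this identity alone, and the appeal to "$2$-adic valuation information" from the cyclotomic analysis is not instantiated — you would have to say exactly which divisibility of which indices is violated by a mixed orbit, and that is precisely the missing argument. Moreover, even if $\sigma_2$ were shown constant on each $\phi$-orbit, nothing in your plan rules out different orbits carrying different constants, so the final "collapse to multiplication by a single constant" is a second unproved jump.

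For comparison: the paper itself gives no proof of this statement; it is quoted from the reference \cite{yuan2022nonsingular}, so the burden of the uniformity step cannot be discharged by pointing to anything in this paper either. A workable route must exploit more of the splitting than orbit products — for instance, propagating membership information between the two dichotomies (the $\{\sigma_2,\sigma_4\}$-pairing and the $\{\sigma_3,\sigma_5\}$-pairing applied at the images $-\frac23 b$, $-\frac45 b$, $-\frac25 b$, $-\frac43 b$) to show that a single element of mixed type forces a double representation, or a counting/subgroup argument at the level of $\ind_g$ as in Section III. As it stands, your write-up establishes the easy local structure but leaves the essential global statement unproven.
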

Similar to the proof of Proposition \ref{k-(k+1)} we can prove the following proposition:
\begin{prop}
	If $B$ is a perfect nonsingular $B[-1,5](q)$ splitter set, then
	$\mathbb{Z}_q^\times=\{\pm1,\pm2,\pm4\}\cdot B$ is a splitting.
	On the contrary, if there exists a splitting set $B$ of $\{\pm1,\pm2,\pm4\}$,
	and either $\langle-\frac23,-\frac45\rangle\subset \pi(B)$ or $\langle-\frac25,-\frac43\rangle\subset \pi(B)$, then there exists a perfect nonsingular $B[-1,5](q)$ splitter set.
\end{prop}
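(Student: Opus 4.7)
The plan is to handle both implications through the single translation rule: $-\frac{a}{b}\in\pi(B)$ if and only if $aB=-bB$ as subsets of $\mathbb{Z}_q^\times$. Under the first alternative of the preceding lemma this yields $3B=-2B$ and $5B=-4B$, while under the second it yields $5B=-2B$ and $3B=-4B$; in either case the collection of six subsets $\{\lambda B:\lambda\in[-1,5]^*\}$ coincides with the collection $\{\mu B:\mu\in\{\pm1,\pm2,\pm4\}\}$, only relabelled.

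For the forward direction, I would start from the fact that a perfect $B[-1,5](q)$ splitter set $B$ yields a splitting $\mathbb{Z}_q^\times=[-1,5]^*\cdot B$. Invoking the lemma and applying the identification above, this rewrites as $\mathbb{Z}_q^\times=\{\pm1,\pm2,\pm4\}\cdot B$. Since the six pieces on the left are pairwise disjoint and exhaust $\mathbb{Z}_q^\times$, the same holds for the six pieces on the right, which is exactly the splitting property claimed.

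For the converse, I would take the given splitting $\mathbb{Z}_q^\times=\{\pm1,\pm2,\pm4\}\cdot B$ and argue that the same $B$ is a perfect nonsingular $B[-1,5](q)$ splitter set. The count $|B|=(q-1)/6$ is immediate from the splitting and matches the perfect bound. Assuming $\langle-\frac{2}{3},-\frac{4}{5}\rangle\subset\pi(B)$ (the other case is treated identically), the translation rule gives $[-1,5]^*\cdot B=\{\pm1,\pm2,\pm4\}\cdot B=\mathbb{Z}_q^\times$ as sets. Pairwise disjointness of $\lambda B$ for $\lambda\in[-1,5]^*$ then reduces to the given disjointness of the $\mu B$ for $\mu\in\{\pm1,\pm2,\pm4\}$ via the explicit bijection $(-1,1,2,3,4,5)\leftrightarrow(-1,1,2,-2,4,-4)$; in the alternative case the bijection is $(-1,1,2,3,4,5)\leftrightarrow(-1,1,2,-4,4,-2)$.

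The main obstacle is really just careful bookkeeping: verifying in each case that the correspondence between $[-1,5]^*$ and $\{\pm1,\pm2,\pm4\}$ is a genuine bijection on the set level, so that no unwanted coincidences $\lambda b=\lambda' b'$ with $\lambda\neq\lambda'\in[-1,5]^*$ can arise from the $\{\pm1,\pm2,\pm4\}$-splitting. Nonsingularity, meaning $\gcd(q,5!)=1$, is automatic for prime $q>5$ and so carries over without effort. There is no deep new step beyond the period-to-coincidence translation $-\frac{a}{b}\in\pi(B)\Leftrightarrow aB=-bB$, which is the same device used in the proof of Proposition \ref{k-(k+1)}.
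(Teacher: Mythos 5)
Your proposal is correct and follows essentially the same route as the paper: the paper proves this proposition "similar to Proposition \ref{k-(k+1)}", i.e., by exactly your translation $-\frac{a}{b}\in\pi(B)\Leftrightarrow aB=-bB$, which turns the six sets $\lambda B$, $\lambda\in[-1,5]^*$, into the six sets $\mu B$, $\mu\in\{\pm1,\pm2,\pm4\}$, in both alternatives of the cited lemma, and then transfers the partition of $\mathbb{Z}_q^\times$ back and forth. Your bookkeeping of the two bijections and the remark on nonsingularity match the intended argument, so nothing further is needed.
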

\begin{thm}
	Let $q=3^k2^lm+1$ be a prime, $k\ge 1$, $l\ge 1$, $\gcd(6,m)=1$, then a perfect $B[-1,5](q)$ set exists if and only if one of the following conditions holds:
	\begin{enumerate}
		\item $v_3(\ind_g(2))<k$, $v_3(\ind_g(2))<v_3(\ind_g(-\frac23))$, $v_3(\ind_g(2))<v_3(\ind_g(-\frac45))$, and both $\ord_q(-\frac23)$ and $\ord_q(-\frac45)$ are odd;
		\item $v_3(\ind_g(2))<k$, $v_3(\ind_g(2))<v_3(\ind_g(-\frac25))$, $v_3(\ind_g(2))<v_3(\ind_g(-\frac34))$, and both $\ord_q(-\frac25)$ and $\ord_q(-\frac34)$ are odd.
	\end{enumerate}
\end{thm}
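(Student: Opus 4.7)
The plan is to convert the splitter set problem to a factorization of $\mathbb{Z}_{q-1}$ and then apply Theorem~\ref{thm-dir-fct} and Theorem~\ref{thm-prd} in two complementary directions. By the preceding Proposition, a perfect $B[-1,5](q)$ set $B$ is equivalent to a splitting $\mathbb{Z}_q^\times=\{\pm1,\pm2,\pm4\}\cdot B$ together with either $\langle-\tfrac{2}{3},-\tfrac{4}{5}\rangle\subset\pi(B)$ or $\langle-\tfrac{2}{5},-\tfrac{4}{3}\rangle\subset\pi(B)$. Passing to indices, this becomes a factorization $\mathbb{Z}_{q-1}=A+C$ where $A=A_2+A_3$ with $A_2=\{0,N\}$, $A_3=\{0,\alpha,2\alpha\}$, $N=(q-1)/2$, and $\alpha=\ind_g(2)$. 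The two subgroup conditions are symmetric, so I focus on condition~(1) of the theorem.

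For the necessary direction, I form the auxiliary factorizations $\mathbb{Z}_{q-1}=A_3+D$ with $D=A_2+C$, and $\mathbb{Z}_{q-1}=A_2+E$ with $E=A_3+C$. A short root-of-unity computation shows $\Phi_{3^i}(x)\mid f_{A_3}(x)$ iff $i=v_3(\alpha)+1=:a$, and $\Phi_{2^i}(x)\mid f_{A_2}(x)=1+x^N$ iff $i=l$. Theorem~\ref{thm-prd} then delivers $\pi(D)\subset 3^a\mathbb{Z}_{q-1}$ and $\pi(E)\subset 2^l\mathbb{Z}_{q-1}$; since any period of $C$ is also a period of $D$ and of $E$, this yields $\pi(C)\subset 2^l3^a\mathbb{Z}_{q-1}$. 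Lemma~\ref{lem-partition} applied to $A_3$ also forces $a\le k$, so $v_3(\ind_g(2))<k$. Specializing the containment to $\ind_g(-\tfrac{2}{3}),\ind_g(-\tfrac{4}{5})\in\pi(C)$, divisibility by $2^l$ is precisely the statement that $\ord_q(-\tfrac{2}{3})$ and $\ord_q(-\tfrac{4}{5})$ are odd, and divisibility by $3^a$ rephrases as $v_3(\ind_g(-\tfrac{2}{3})),v_3(\ind_g(-\tfrac{4}{5}))>v_3(\ind_g(2))$, giving condition~(1) exactly.

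For the sufficient direction, set $H=2^l3^a\mathbb{Z}_{q-1}$ and consider the quotient $\phi\colon\mathbb{Z}_{q-1}\to\mathbb{Z}_{q-1}/H\cong\mathbb{Z}_{2^l3^a}$. Using $v_3(\alpha)=a-1$, one checks that $\phi(A)$ has six distinct elements and $\phi(A)=\phi(A_2)+\phi(A_3)$ with $\phi(A_2)$ the subgroup of order $2$. Quotienting further by $\phi(A_2)$ reduces the task to exhibiting $\{0,\bar\alpha,2\bar\alpha\}$ as a direct factor of $\mathbb{Z}_{2^{l-1}3^a}$, which Theorem~\ref{thm-dir-fct} supplies since the $3$-adic valuation of $\bar\alpha$ remains $a-1$. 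Pulling the resulting complementer factor back through the two quotients gives $C\subset\mathbb{Z}_{q-1}$ with $\mathbb{Z}_{q-1}=A+C$ a factorization and $H\subset\pi(C)$. Condition~(1) places $\ind_g(-\tfrac{2}{3})$ and $\ind_g(-\tfrac{4}{5})$ inside $H$, so $\pi(B)\supset\langle-\tfrac{2}{3},-\tfrac{4}{5}\rangle$, and the preceding Proposition produces a perfect $B[-1,5](q)$ set. Condition~(2) is handled by the analogous argument with $\langle-\tfrac{2}{5},-\tfrac{4}{3}\rangle$ in place of $\langle-\tfrac{2}{3},-\tfrac{4}{5}\rangle$.

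The main technical point is that Theorem~\ref{thm-prd} applied to $A_3$ alone only yields the $3$-adic conditions; the ``$\ord_q$ odd'' conditions require a second, parallel application to the subgroup factor $A_2$, together with the observation that $\pi(C)\subset\pi(D)\cap\pi(E)$. Once $\pi(C)\subset 2^l3^a\mathbb{Z}_{q-1}$ is established, both directions reduce to careful tracking of $2$- and $3$-adic valuations.
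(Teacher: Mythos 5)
Your proposal is correct, and its skeleton is the same as the paper's: reduce via the lemma from \cite{yuan2022nonsingular} and the accompanying proposition to ``a splitting by $\{\pm1,\pm2,\pm4\}$ whose splitting set has $\langle-\frac23,-\frac45\rangle$ (or $\langle-\frac25,-\frac43\rangle$) among its periods'', use Theorem \ref{thm-prd} for the necessity of the $3$-adic conditions, and for sufficiency build a complement that is a union of cosets of the subgroup of index $2^l3^{v_3(\ind_g(2))+1}$, so that $\ind_g(-\frac23)$ and $\ind_g(-\frac45)$ automatically become periods. You differ in two execution-level points. First, for the necessity of the odd-order conditions the paper argues in one line: since $\pm1$ are both multipliers, $-1\notin\pi(B)$, so neither $\langle-\frac23\rangle$ nor $\langle-\frac45\rangle$ may contain $-1$; you instead apply Theorem \ref{thm-prd} a second time, to the factor $A_2=\{0,\frac{q-1}2\}$ in the factorization $\mathbb{Z}_{q-1}=A_2+E$, together with $\pi(C)\subset\pi(D)\cap\pi(E)$ --- heavier machinery, but the computation $\mathfrak{M}_{A_2}=\{l\}$ is right and the step is valid. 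Second, for sufficiency the paper writes the complement $C$ explicitly as a sum of four structured progression sets (mirroring the proof of Theorem \ref{thm-dir-fct}), whereas you pass to the quotient $\mathbb{Z}_{2^l3^a}$, strip off the order-two subgroup $\phi(A_2)$, apply Theorem \ref{thm-dir-fct} to $\{0,\bar\alpha,2\bar\alpha\}$ in $\mathbb{Z}_{2^{l-1}3^a}$, and lift back; the lifting through the order-two subgroup is exactly the device used in the proof of Theorem \ref{cor-2k}, and the key facts you need ($\phi(A)$ has six distinct elements, $v_3(\bar\alpha)=a-1$ survives the reductions) do check out, though you leave the pull-back verification as an assertion. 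Both constructions produce the same kind of complement, so the approaches are essentially equivalent; yours trades the paper's explicit element lists for a cleaner quotient argument at the cost of a slightly less self-contained sufficiency step, and its necessity argument for the odd-order conditions is more roundabout than the paper's structural observation.
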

\begin{proof}
	Note that $\ind_g(4)\equiv 2\ind_g(2)\pmod{(q-1)}$, and the $3$-adic valuations of $\ind_g(2)$ and $\ind_g(4)$ are always the same. Hence a splitting set $B$ of $\{\pm1,\pm2,\pm4\}$ exists as long as $v_3(\ind_g(2))<k$, and we need only the condition on $\pi(B)$.

	If such $B$ exists, we must have $-1\notin \pi(B)$. Without loss of generality, suppose that $\langle-\frac23,-\frac45\rangle\subset \pi(B)$. Hence $-1\notin \langle -\frac23\rangle$ and $-1\notin\langle -\frac45\rangle$, so $\ord_q(-\frac23),\ord_q(-\frac45)$ must be odd. Moreover, since $\langle-\frac23,-\frac45\rangle\subset \pi(B)$ and $3^{v_3(\ind_g(2))+1}\mid [\mathbb{Z}_q^\times:\pi(B)]$ by Theorem \ref{thm-prd}, we have $ v_3(\ind_g(2))<v_3(\ind_g(-\frac23))$, $v_3(\ind_g(2))<v_3(\ind_g(-\frac45))$.

	For the other direction, we have to construct $B$ which contains $\langle-\frac23,-\frac45\rangle$ or $\langle-\frac25,-\frac43\rangle$. For example, suppose that $i_n=v_3(\ind_g(2))+1\le k, v_3(\ind_g(2))<v_3(\ind_g(-\frac23))$, $v_3(\ind_g(2))<v_3(\ind_g(-\frac45))$, and both $\ord_q(-\frac23)$ and $\ord_q(-\frac45)$ are odd.

	Set $$\begin{aligned}C= & \{0,3^{k}2^l,\ldots,3^{k}2^l(m-1)\} \\&+
               \{0,3^{i_n}2^l,2\cdot 3^{i_n}2^l,\ldots,
               (3^k-3^{i_n})2^l\}                       \\&+\{0,3^{i_n},\ldots,3^{i_n}(2^{l-1}-1)\}\\&+\{0,1,\ldots,3^{i_n-1}-1\},\end{aligned}$$ then we can verify, as in the proof of Theorem \ref{thm-dir-fct}, that $$\{0,\ind_g(2),\ind_g(4)\}+\{0,3^{i_n}2^{l-1}\}+C=\mathbb{Z}_{q-1},$$ and $\ind_g(-\frac23),\ind_g(-\frac45)\in C$. Thus $B=\{g^i:i\in C\}$ is the splitter set we need.
\end{proof}
\begin{exampl}
	We give examples for each of the two cases:
	\begin{enumerate}
		\item Let $q=463=2\cdot 3\cdot 7\cdot 11+1$.
		      Then $g=3$, $\ind_g(2)=34=2\cdot 17$, $\ind_g(3)=1$,
		      $\ind_g(-\frac23)=264=2^3\cdot3\cdot11$,
		      $\ind_g(-\frac45)=174=2\cdot3\cdot29$,
		      $\ord_q(-\frac23)=7$,
		      $\ord_q(-\frac45)=77$, and
		      $$
			      \begin{aligned}
				        & \{\ind_g(i)\pmod{6}:i\in[-1,5]^*\} \\
				      = & \{3,0,4,1,2,5\}.
			      \end{aligned}
		      $$
		      Then
		      $$
			      \{g^{6i}:i\in[0,76]\}
		      $$
		      is a perfect $B[-1,5](463)$ set.

		      Notice that $|B|=77$, $\gcd(6,|B|)=1$, so \textnormal{\cite[Theorem 10]{zhang2018nonexistence}} is also applicable to the example above. Next prime that satisfies 1) is $q=1489=2^4\cdot 3\cdot 31+1$, and this time \textnormal{\cite[Theorem 10]{zhang2018nonexistence}} is not applicable. This time
		      $g=14$, $\ind_g(2)=1222=2\cdot13\cdot47$,
		      $\ind_g(3)=190=2\cdot5\cdot19$, $\ind_g(-\frac23)=288=2^5\cdot3^2$,
		      $\ind_g(-\frac45)=960=2^6\cdot3\cdot5$, and
		      $$
			      \begin{aligned}
				        & \{\ind_g(i)\pmod{48}:i\in[-1,5]^*\} \\
				      = & \{24,0,22,46,44,20\}.
			      \end{aligned}
		      $$
		      Then it can be verified that
		      $$
			      \{g^{48i+3j}:i\in[0,30],j\in[0,7]\}
		      $$
		      is a perfect $B[-1,5](1489)$ set.
		      Other primes that satisfy the condition 1) include $2503$, $3583$, $5407$ and $5647$.
		\item The minimal prime that satisfies the condition 2) is $7$, which is a trivial case. Next such prime is $571=2\cdot 3\cdot 5\cdot 19+1$, and \textnormal{\cite[Theorem 10]{zhang2018nonexistence}} is applicable.
		      Next such prime is $1171=2\cdot 3^2\cdot5\cdot13+1$.
		      This time $g=2$, $\ind_g(2)=1$,
		      $\ind_g(3)=155$,
		      $\ind_g(-\frac25)=1158=2\cdot3\cdot193$,
		      $\ind_g(-\frac34)=2\cdot3^2\cdot41$,
		      $\ord_q(-\frac25)=195$,
		      $\ord_q(-\frac34)=65$, and
		      $$
			      \begin{aligned}
				        & \{\ind_g(i)\pmod{6}:i\in[-1,5]^*\} \\
				      = & \{3,0,1,5,2,4\}.
			      \end{aligned}
		      $$
		      Then
		      $$
			      \{g^{6i}:i\in[0,194]\}
		      $$
		      is a perfect $B[-1,5](1171)$ set.
		      Other primes that satisfy the condition 2) include $2371$, $2539$, $3571$, $11251$ and $11437$.
	\end{enumerate}
\end{exampl}
\section{Nonexistence results on quasi-perfect singular splitter sets}
We provide here a nonexistence result of quasi-perfect $B[0,k](m)$ splitter sets.
\begin{prop}
	If $m>k$ and $k$ divides $m$, then there is no quasi-perfect $B[0,k](km)$ set.
\end{prop}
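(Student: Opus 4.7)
The plan is to reduce modulo $m$ and show that $B$ must project bijectively onto $\mathbb{Z}_m \setminus \{0\}$; once this is done, I will exhibit a single residue class modulo $m$ that is forced to contain strictly more than $k$ splitter products, yielding a contradiction.

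First, since $km-1=k(m-1)+(k-1)$, the quasi-perfect hypothesis gives $|B|=m-1$, and the $k(m-1)$ products $\{\lambda b:\lambda\in[1,k],\, b\in B\}$ are distinct nonzero elements of $\mathbb{Z}_{km}$. Let $\pi:\mathbb{Z}_{km}\to\mathbb{Z}_m$ denote reduction modulo $m$. If some $b\in B$ satisfied $b\equiv 0\pmod{m}$, then $m\mid b$ would imply $km\mid kb$, so $kb=0$ in $\mathbb{Z}_{km}$, contradicting nonzeroness of splitter products. If two distinct elements $b,b'\in B$ satisfied $b\equiv b'\pmod{m}$, the same argument applied to $b-b'$ gives $kb=kb'$ in $\mathbb{Z}_{km}$, contradicting distinctness. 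Hence $\pi|_B$ is injective with image avoiding $0$; since $|B|=m-1=|\mathbb{Z}_m\setminus\{0\}|$, the restriction $\pi|_B$ is a bijection. Write $b_r$ for the unique element of $B$ with $b_r\equiv r\pmod{m}$, for each $r\in\{1,\ldots,m-1\}$.

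Set $m'=m/k$, and note $m'\geq 2$ because $m>k$. I focus on the residue class $\pi^{-1}(k)=\{k,k+m,k+2m,\ldots,k+(k-1)m\}\subset\mathbb{Z}_{km}$, which has exactly $k$ elements. The product $1\cdot b_k$ clearly lies in $\pi^{-1}(k)$. Moreover, for each $j\in\{0,1,\ldots,k-1\}$, the residue $1+jm'$ lies in $[1,m-1]$ (since $1+(k-1)m'=m-m'+1<m$), and the product $k\cdot b_{1+jm'}$ reduces modulo $m$ to $k(1+jm')=k+jm\equiv k\pmod{m}$, so it too lies in $\pi^{-1}(k)$. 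These $k+1$ products are pairwise distinct in $\mathbb{Z}_{km}$ by the splitter property, yet they are forced to live inside the $k$-element set $\pi^{-1}(k)$, which is impossible.

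The mechanism throughout is that scaling by $k$ collapses each coset of $\langle m\rangle$ in $\mathbb{Z}_{km}$ to a single element: this is what both forces $\pi|_B$ to be a bijection onto $\mathbb{Z}_m\setminus\{0\}$ and supplies the $k$ extra products clogging the class $\pi^{-1}(k)$. I do not anticipate a substantive obstacle; the only item requiring care is enumerating the $(\lambda,r')$-pairs whose product lies in $\pi^{-1}(k)$, and it suffices to identify the single pair $(1,k)$ together with the $k$ pairs $(k,1+jm')$ to exceed the class size.
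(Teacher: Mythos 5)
Your proof is correct and takes essentially the same route as the paper: your reduction modulo $m$ and the bijectivity of $\pi|_B$ onto $\mathbb{Z}_m\setminus\{0\}$ are exactly the paper's array of rows together with the fact that a quasi-perfect $B$ meets each row exactly once and avoids multiples of $m$, and your pigeonhole in the $k$-element fiber $\pi^{-1}(k)$ (the products $1\cdot b_k$ and $k\cdot b_{1+jm'}$, $j=0,\dots,k-1$) is a rephrasing of the paper's final step that, since $k\mid m$, the $k$-th row consists of nonzero multiples of $k$ and is therefore already covered by $kB$, leaving no room for the element of $B$ in that row. The only cosmetic remark is that, like the paper, you implicitly assume $k\ge 2$ (so that the pairs $(1,b_k)$ and $(k,b_{1+jm'})$ are distinct), which is harmless because for $k=1$ no quasi-perfect set exists by definition.
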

\begin{proof}
	We arrange the elements of $\mathbb{Z}_{km}\setminus\{0,m,2m,\cdots,(k-1)m\}$ as follows:
	\[
		\begin{array}{rrcrr}
			1,     & 1+m,   & \cdots, & 1+(k-2)m, & 1+(k-1)m \\
			2,     & 2+m,   & \cdots, & 2+(k-2)m, & 2+(k-1)m \\
			\vdots & \vdots & \cdots, & \vdots    & \vdots   \\
			m-1,   & 2m-1,  & \cdots, & (k-1)m-1, & km-1
		\end{array}
	\]
	If multiplied by $k$, then elements of every row will become the same. So any splitter set $B$ intersects every row at at most one element, for otherwise there will exist $i,j\in B$ such that $ki\equiv kj\pmod{km}$, which is a contradiction. Moreover, since $kim\equiv 0\pmod{km}$, $B$ cannot have any element of the form $im$. As a quasi-perfect splitter set has $\lfloor (km-1)/k\rfloor=m-1$ elements, it must have elements from all rows of the array.

	If $m$ is divisible by $k$, then every element of the $k$-th row of the array
	is divisible by $k$. Now if $B$ is a quasi-perfect splitter set, then $kB\cap B=\varnothing$. However, $B$ must have elements from all rows of the array, so $kB=\{k, 2k,\ldots, k(m-1)\}$ is the set of all multiples of $k$ in $\mathbb{Z}_{km}\setminus\{0\}$. So $B$ cannot have any element from the $k$-th row of the array, which is a contradiction.
\end{proof}
The following proposition is a generalization of \cite[Lemma 2]{yari2013some}.
\begin{prop}
	If $k$ has a prime divisor $q$ such that $\gcd(q,m)=1$, then every $B[-(k-1),k](m)$ set is also a $B[-k,k](m)$ set. Therefore,
	if $\lfloor (m-1)/(2k-1)\rfloor > \lfloor (m-1)/2k\rfloor$, then a quasi-perfect $B[-(k-1),k](m)$ set does not exist.
\end{prop}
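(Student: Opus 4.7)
The plan is to derive the quasi-perfect nonexistence as a direct consequence of the first (structural) claim, so I would focus the work on showing that every $B[-(k-1),k](m)$ splitter set $B$ is automatically a $B[-k,k](m)$ splitter set. The approach is to enumerate the new relations that could occur when the multiplier range is enlarged from $[-(k-1),k]^*$ to $[-k,k]^*$, namely relations of the form $(-k)b_1\equiv \lambda b_2\pmod{m}$ with $b_1,b_2\in B$, $\lambda\in[-k,k]^*$ and $(\lambda,b_2)\neq(-k,b_1)$, and to rule each of them out.

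Multiplying such a relation by $-1$, it becomes $k b_1\equiv(-\lambda)b_2\pmod{m}$. When $-\lambda\in[-(k-1),k]^*$, that is, $\lambda\neq k$, the $B[-(k-1),k](m)$ splitter property forces $(k,b_1)=(-\lambda,b_2)$, hence $\lambda=-k$, contradicting $(\lambda,b_2)\neq(-k,b_1)$. The only remaining case is $\lambda=k$, which yields $k(b_1+b_2)\equiv 0\pmod{m}$.

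This is where the prime divisor $q$ of $k$ with $\gcd(q,m)=1$ enters. Writing $k=qk'$ and using $\gcd(q,m)=1$ to cancel $q$ modulo $m$ yields $k'b_1\equiv(-k')b_2\pmod{m}$. Since $q\ge 2$, we have $1\le k'\le k/2<k$, so both $k'$ and $-k'$ lie in $[-(k-1),k]^*$; being distinct integers, they violate the splitter property, again a contradiction. This completes the first half. The nonexistence of a quasi-perfect $B[-(k-1),k](m)$ set then follows immediately: such a set would have $\lfloor(m-1)/(2k-1)\rfloor$ elements while being, simultaneously, a $B[-k,k](m)$ set, whose cardinality is bounded by $\lfloor(m-1)/(2k)\rfloor<\lfloor(m-1)/(2k-1)\rfloor$.

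The main conceptual point is recognizing that only the single case $\lambda=k$ resists the direct splitter argument, and that the prime divisor $q$ coprime to $m$ is precisely what is needed to rescale this case into the safe range $[-(k-1),k]^*$ where the hypothesis can be reapplied; no additional structural analysis is required.
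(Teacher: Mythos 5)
Your proof is correct and follows essentially the same route as the paper: reduce the enlargement of the multiplier set to the single problematic relation $kb_1\equiv -kb_2\pmod{m}$, cancel the prime $q$ (coprime to $m$) to obtain a relation with multipliers $\pm k/q\in[-(k-1),k]^*$ contradicting the splitter property, and then conclude nonexistence from the cardinality bound $\lfloor (m-1)/(2k)\rfloor<\lfloor (m-1)/(2k-1)\rfloor$. Your case enumeration is in fact slightly more explicit than the paper's, but the key idea is identical.
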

\begin{proof}
	For $t\in \mathbb{Z}_m\setminus\{0\}$, let $U(t)=t[-(k-1),k]^*$ and $
		U'(t)=t[-k,k]^*$. We only need to prove that if $U(t)\cap U(s)=\varnothing$,
	then $U'(t)\cap U'(s)=\varnothing$. It suffices to prove that $kt\not\equiv -ks\pmod{m}$. Suppose that $kt\equiv -ks\pmod{m}$. Since $q$ is coprime to $m$,
	$(k/q)t\equiv -(k/q)s\pmod{m}$, which contradicts that $U(t)\cap U(s)=\varnothing$.

	Suppose there exists a $B[-(k-1),k](m)$ set $B$, then it is also a $B[-k,k](m)$ set, so $|B|\le \frac{m-1}{2k}$. If $\lfloor (m-1)/(2k-1)\rfloor > \lfloor (m-1)/2k\rfloor$, then $B$ can never be quasi-perfect, hence such quasi-perfect splitter set does not exist.
\end{proof}
\begin{remark}
	The necessary and sufficient condition that $\lfloor (m-1)/(2k-1)\rfloor > \lfloor (m-1)/2k\rfloor$ holds is $m\ge 4k^2-2k+1$, or $m=2k(t+1)-s, t=0,1,\ldots,2k-1, s=0,1,\ldots,t$.
\end{remark}
This theorem tells us when constructing quasi-perfect $B[-(k-1),k](m)$ sets, like the one given in \cite[Theorem IV.4]{ye2020some},
$m$ cannot be too big, unless every prime divisor of $k$ also divides $m$.
\section{Conclusion}
In this paper, we have investigated the existence and nonexistence of perfect and quasi-perfect splitter sets in finite cyclic groups, which have their application in coding theory for flash memory storage. By employing cyclotomic polynomials, we derived a general condition for the existence of group splittings, which extends and unifies preceding  results.

Furthermore, we established a connection between
$B[-k,k](q)$ and
$B[-(k-1),k+1](q)$ splitters, demonstrating how the structure of the stable subgroup
$\pi(B)$ plays a crucial role in determining the existence of such splittings. Our approach also provides a way to the determination of the existence conditions for perfect $B[-1,5](q)$ splitter sets.

Additionally, we presented two nonexistence results for quasi-perfect splitter sets, contributing to a deeper understanding of their structural limitations and providing insights into the constraints on their construction.

\section*{Acknowledgements}
The first author thanks Professor Pingzhi Yuan for helpful discussions and access to their unpublished work.
\bibliographystyle{IEEEtrans}
\bibliography{splitter.bib}
\end{document}